\let\csname equation*\endcsname\relax
\let\csname endequation*\endcsname\relax
\newtheorem{theorem}{Theorem}[section]
\newtheorem{corollary}{Corollary}[theorem]
\newtheorem{lemma}[theorem]{Lemma}
\newcommand{\vct}{\textbf}
\newcommand{\spc}{\qquad}
\newcommand{\npg}{\vspace{\baselineskip}}
\newcommand{\eqn}{\npg\indent}
\newcommand{\npgni}{\npg\noindent}
\newcommand{\tph}{^\frac{1}{2}}
\begin{document}
\title{Newtonian Quantum Gravity, WIMPs and Celestial Mechanics-Summary of Results}

\author{Richard Durran, Andrew Neate, Aubrey Truman}
\address{Department of Mathematics, Computational Foundry, Swansea University Bay Campus, Fabian Way, Swansea, SA1 8EN, UK}
\ead{a.truman@swansea.ac.uk} 

\begin{abstract}

We discuss the leading term in the semi-classical asymptotics of Newtonian quantum gravity for the Kepler problem. For dark matter, ice or dust particles in the gravitational field of a star or massive planet this explains how rapidly planets or ring systems can be formed by considering semi-classical equations for the asymptotics of the relevant Schrödinger wave function, especially the atomic elliptic state of Lena, Delande and Gay. Our results in this connection are summarised in Theorem 3.2. We extend this treatment to isotropic harmonic oscillator potentials in two and three dimensions finding explicit solutions, with important applications to the Trojan asteroids. More detailed results on the restricted 3-body problem are in a forthcoming paper. Further in two dimensions, we find the explicit solutions for Newton's corresponding revolving orbits, explaining planetary perihelion advance in these terms. A general implicit approach to solving our equations is given using Cauchy characteristic curves giving necessary and sufficient conditions for existence and uniqueness of our solutions. Using this method we solve the two-dimensional Power Law Problem in our setting. Our methods give an insight to the behavior of semi-classical orbits in the neighborhood of classical orbits showing how complex constants of the motion emerge from the SO(4) symmetry group and what the particle densities have to be for circular spirals for different central potentials. Such constants have an important role in revolving orbits. Moreover, they and the aforementioned densities could give rise to observable effects in the early history of planetary ring systems. We believe the quantum spirals here associated with classical Keplerian elliptical orbits explain the evolution of galaxies, with a neutron star or black hole at their centre, from spiral to elliptical. In this connection we include a brief account of some recent results on quantum curvature, torsion as well as anti-gravity and spiral splitting, acid tests of our ideas.

\end{abstract}

\section{Introduction}

In our previous papers, we have seen how in Schrödinger quantum mechanics, if the stationary state wave function $\psi\sim\textrm{exp}\left(\frac{R+iS}{\epsilon^2}\right)$ as $\epsilon^2=\hbar\sim0$, for potential $V$, it is necessary that for real valued $R$ and $S$,\vspace{-5mm}

$$\boldsymbol\nabla R.\boldsymbol\nabla S=0,\spc2^{-1}(|\boldsymbol\nabla S|^2-|\boldsymbol\nabla R|^2)+V=E\;\;\textrm{and}\;\;S\rceil_{C_0}=S_0,\spc(\ast)$$\vspace{-10mm}

\npgni where $C_0$ is the corresponding classical orbit, $\boldsymbol\nabla R\rceil_{C_0}=\vct{0}$ and $S_0$ is the corresponding Hamilton-Jacobi function, $E$ being the energy.

\npgni These semi-classical equations and how to solve them will be the main concerns in the present paper. To the extent that for the Kepler-Coulomb problem they give the leading term in the semi-classical asymptotics of Newtonian quantum gravity, they are relevant in modelling the behaviour of ice, dust or dark matter particles in the early history of solar systems. In particular they show how quickly planets and ring systems can be formed. (Refs. [1],[5],[6],[7],[8],[15],[16],[17]). Further we argue that galactic evolution is in part governed by the first quantisation of the Kepler problem for elliptical orbits and our equations. We begin by outlining some of the main ideas and results in sections (2) and (3).

\npgni In spite of the non-linearity of these equations they admit a linear superposition of solutions $R$ and $S$. In our treatment the log density, $\left(\frac{2R}{\epsilon^2}\right)$, is the driver and superposing two states leads to the splitting of spirals in the corresponding WIMP nebula (see pp. 9, 48).

\npgni In this setting of a constrained Hamiltonian system, the orbit is given by $t\to \vct{X}^0_t, t\in\mathbb{R}_{+}$, being the time, $\vct{X}_t^0\in\mathbb{R}^d$, d dimensions of space, where

$$\frac{d}{dt}\vct{X}_t^0=(\boldsymbol\nabla S+\boldsymbol\nabla R)(\vct{X}_t^0).$$\vspace{-10mm}

\npgni Since\vspace{-5mm}

$$\frac{d}{dt}R(\vct{X}_t^0)=(\boldsymbol\nabla S+\boldsymbol\nabla R)(\vct{X}_t^0).\boldsymbol\nabla R(\vct{X}_t^0)=|\boldsymbol\nabla R|^2(\vct{X}_t^0)\ge 0,$$

\npgni $R$ is monotonic increasing in time as the orbit is traversed, so $R(\vct{X}_t^0)\nearrow R_{\textrm{max}}$, the maximum value of $R$. If, as you expect, this maximum value of $R$ is attained on $C_0$, the classical orbit, then in the infinite
time limit the orbit will converge to classical motion on $C_0$, the convergence coming from the Bohm potential $-|\boldsymbol\nabla R|^2$, since

$$2^{-1}(|\boldsymbol\nabla S|^2+|\boldsymbol\nabla R|^2)+V-|\boldsymbol\nabla R|^2=E.$$

\npgni A supreme example is provided by the atomic elliptic state, $\psi$, of Lena, Delande and Gay. $\psi$ is localised on a classical Keplerian ellipse (with force centre at one focus) and minimises the associated SO(4) angular momentum uncertainty relations.(Refs. [5],[9],[12],[15],[17]). In this case, the classical Keplerian ellipse is $C_0$ and in the infinite time limit the corresponding orbit is described according to Kepler’s laws for potential $V=-\dfrac{\mu}{r}$, $\mu$ gravitational mass at focus $O$, dimension $d$ being 3. So in this semi-classical quantum mechanics of Newtonian gravity, if we take as our Schrödinger ensemble a cloud of identically prepared ice, dust particles or dark matter WIMPs, we have a simple spiral model for the formation of planets or ring systems. Moreover, we can see how long it takes to form a ring by estimating $t$, the infimum of $s$, such that

$$R_{\textrm{max}}-R_0=\int^s_0|\boldsymbol\nabla R|^2(\vct{X}_u^0)du.$$\vspace{-8mm}

\npgni In such a model it is plausible e.g. that gas giants such as Jupiter could be formed in a few million years as required in current planet building scenarios. A quick application is provided by the Levi-Civita transformation whereby $(x+iy)\to(x+iy)^2$, accompanied by the time-change from physical time $t$ to a fictitious time $s$, give the elliptical orbit corresponding to motion on an ellipse (with force centre at the centre of the ellipse, not the focus) for potential $V=\omega^2r^2/2$. Such a potential would be relevant if the ensemble of matter was inside a stable, homogeneous cloud of gravitating particles. The details of this are given in the next section (all part of the miracle of the inverse square law of force). (Ref. [12]). In this context we treat the problem of the formation of the Trojan asteroids revealing a link with the isotropic oscillator and more generally two new constants of the motion for the linearised restricted 3-body problem.\vspace{-3mm}

\npgni Armed with these examples, the rest of the paper is to try to extend the class of potentials for which our equations are soluble. Here we focus on the 2-dimensional situation, for $V_0=-\dfrac{\mu}{r}$, and for $V=V_0+\dfrac{C}{r^2},\:r^2=x^2+y^2$. In the latter case we rediscover Newton’s revolving orbits by considering certain complex constants of the motion. As an application we discuss the advance of the perihelion of Mercury in a general relativistic potential giving the agreement with what is observed. This is the content of section (4).\vspace{-3mm}

\npgni In section (5) we tackle equations directly using Cauchy’s method of characteristics. As an illustrative example we solve the equations for the power law in 2-dimensions, $-\kappa r^p$, $p$ being the power and a generalised LCKS transformation in two-dimensions. (Ref. [14]). More importantly, we give a detailed set of results for the Keplerian motion in a neighbourhood of the classical orbit on $C_0$. Once again complex constants emerge quite naturally resonating with the methods for Newton’s revolving orbits. Necessary and sufficient conditions are given for uniqueness and existence of solutions by converting our partial differential equations to ordinary differential equations.\vspace{-3mm}

\npgni For the sake of completeness and to help those readers not so familiar with celestial mechanics we include here a simple diagram (Figure 1) explaining why for the classical trajectory, $t\to \vct{X}^0_t=\vct{r}(t)$, on the classical Keplerian ellipse with corresponding eccentric anomaly $v$,\vspace{-5mm}

$$\left(\frac{d^{3}}{dv^{3}}+\frac{d}{dv}\right)\vct{r}=0,$$\vspace{-5mm}

\npgni where $v$ satisfies the Kepler equation for physical time $t$,

$$v-e\sin{v}=\sqrt{\frac{\mu}{a^{3}}}t,$$\vspace{-7mm}

\npgni $e$ being the orbital eccentricity, $a$ the semi-major axis of the orbit and $\mu$ the gravitational mass at the force centre. Surprisingly this result requires no change in dependent coordinates as the Levi-Civita transformation demands and amounts to a mere time-change. Several authors on the internet refer to this and related results as deep mysteries of the Kepler problem. Our diagram and simple trigonometric identities dispel this mystery. Here $t$ is the physical time measured from the pericentre, the point of nearest approach of the orbit to the force centre, $\vct{i}=\hat{\vct{r}}(0)$, $\vct{j}=\hat{\dot{\vct{r}}}(0)$ and if $\vct{r}=\overrightarrow{\textrm{SP}}$,

$$\vct{r}=(a\cos{v}-ae)\vct{i}+\sqrt{1-e^2}a\sin{v}\vct{j}.$$\vspace{-7mm}

\npgni Setting $\vct{R}=\vct{r}+ae\vct{i}$ and $(')=\dfrac{d}{dv}$, we see that

$$\vct{R}''=-\vct{R}\;\;\; \textrm{and so}\;\;\; \vct{r}'''+\vct{r}'=0.$$\vspace{-10mm}

\npgni Since $\vct{R}(v)=\vct{R}(0)\cos{v}+\vct{R}'(0)\sin{v}$, we obtain

$$\vct{r}(v)=(\vct{r}(0)+ae\vct{i})\cos{v}+\vct{r}'(0)\sin{v}\vct{j}-ae\vct{i},$$\vspace{-10mm}

\npgni for $v\ge0$. These results are important in understanding our Keplerian coordinates and important in their own right.

\begin{center}
\includegraphics[scale=0.75]{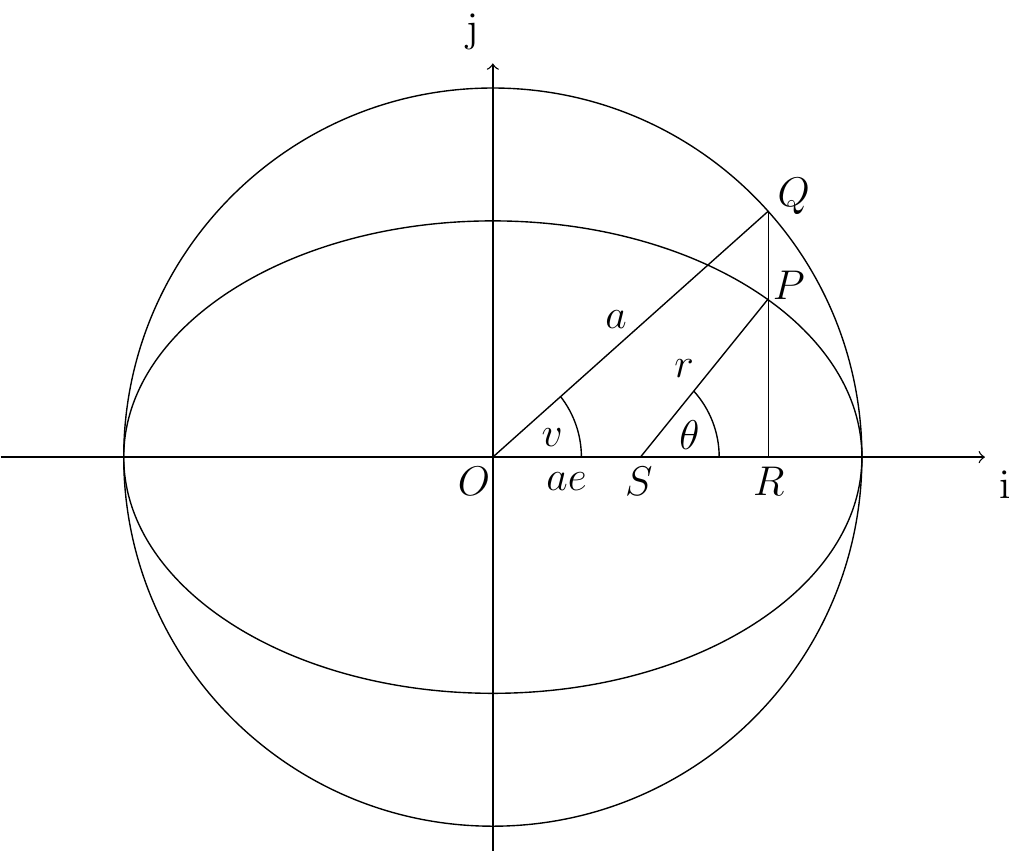}\\
Figure 1.
\end{center}\vspace{-3mm}

$$r=a(1-e\cos v),\spc\tan\dfrac{v}{2}=\sqrt{\dfrac{1-e}{1+e}}\tan\dfrac{\theta}{2},\spc PR=\sqrt{1-e^{2}}QR.$$

\section{Semi-classical Asymptotics of Newtonian Gravity and Celestial Mechanics}

\subsection{From Atomic to Astronomical Elliptic States}

In this section we use some standard tools from celestial mechanics to help to reveal some of the secrets of impenetrable algebraic expressions coming from the SO(4) symmetry group of the Kepler problem. Our basic ingredient is the beautiful stationary state of Lena, Delande and Gay, $\psi_{n,e}(\vct{r})$, the atomic elliptic state,

$$\psi_{n,e}(\vct{r})=C\textrm{exp}\left(-\frac{n\mu r}{\lambda^2}\right)L_{n-1}(n \nu(\vct{r})),$$\vspace{-5mm}

\npgni $L_n,n=0,1,2\dots$, being a Laguerre polynomial, $\vct{r}=(x,y,z)$, $r=|\vct{r}|=\sqrt{x^2+y^2+z^2}$,

$$\nu(\vct{r})=\frac{\mu}{\lambda^2}\left(r-\frac{x}{e}-\frac{iy\sqrt{1-e^2}}{e}\right).$$\vspace{-5mm}

\npgni Here our ellipse $\mathcal{E}_K$ has eccentricity $e$, $0<e<1$, with semi-major axis, $a=\dfrac{\lambda^2}{\mu}$, semi-latus rectum, $\ell=a(1-e^2)$, in the plane $z=0$, with semi-major axis parallel to the x-axis, with polar equation:

$$\frac{\ell}{r}=1+e\cos\theta,\spc\ell=\frac{h^2}{\mu},\spc e=\sqrt{1+\frac{2h^2E}{\mu^2}},\spc\textrm{the constants}$$

\npgni $E<0$ being the energy, $h$ the angular momentum about the z-axis and $\mu$ the gravitational mass at the focus of our ellipse $\mathcal{E}_K$. Most importantly, if $H$ denotes the quantum mechanical Hamiltonian, $H=2^{-1}|\vct{P}|^2-\dfrac{\mu}{|\vct{Q}|}$, then

$$H\psi_{n,e}=E_n\psi_{n,e},\spc E_n=-\frac{\mu^2}{2\hbar^2n^2},\spc\lambda=n\hbar,\spc \hbar=\epsilon^2.$$\vspace{-5mm}

\npgni We define $\vct{Z}_{n,e}(\vct{r})=\dfrac{\hbar}{i}\boldsymbol\nabla\ln\psi_{n,e}\to(\boldsymbol\nabla S-i\boldsymbol\nabla R)(\vct{r})\stackrel{\textrm{def}}{=}\vct{Z}(\vct{r})$. We take the Bohr correspondence limit $n\nearrow\infty$, $\epsilon\searrow0$, for fixed $\lambda$ i.e. fixed energy $E<0$. (Refs. [2],[4],[5],[17],[23]). In a previous paper we proved, in Cartesians, that\vspace{5mm}

$$\vct{Z}(\vct{r})=\frac{i\mu}{2\lambda}(1+\gamma)\frac{r}{|\vct{r}|}+\frac{\mu}{2\lambda e}(1-\gamma)\left(i,\;-\sqrt{1-e^2},0\right),$$\vspace{-10mm}

\npgni where\vspace{-5mm}

$$\gamma(\vct{r})=1-\lim_{n\nearrow\infty,\;\epsilon\searrow0}\frac{L'_{n-1}(n\nu(\vct{r}))}{L_n(n\nu(\vct{r}))}=\sqrt{1-\frac{4}{\nu(\vct{r})}}.$$\vspace{-5mm}

\npgni This gives the semi-classical state $\psi_{\textrm{sc}}$,

$$\psi_{\textrm{sc}}(\vct{r})=\nu(\vct{r})^\frac{\lambda}{\hbar}(1+\gamma(\vct r))^\frac{2\lambda}{\hbar}\textrm{exp}\left(-\frac{\mu}{2\lambda\hbar}|\vct{r}|+\frac{\lambda}{2\hbar}(1-\gamma(\vct{r}))\nu(\vct{r})\right),$$\vspace{-5mm}

\npgni so to within an additive constant, determined by Laplace's principle and the usual normalisation,\vspace{-5mm}

$$(R+iS)(\vct{r})=-\frac{\mu}{\lambda}r+\frac{\lambda \nu(\vct r)}{2}\left(1-\sqrt{1-\frac{4}{\nu(\vct r)}}\right)-\lambda\ln \nu(\vct r)-2\lambda\ln\left(1-\sqrt{1-\frac{4}{\nu(\vct r)}}\right),$$\vspace{-10mm}

\npgni $\nu$\ above.

\npgni (Refs. [5],[6],[7],[8],[15],[16],[17]).

\npgni It is difficult to see any connection with the last equation and the Kepler ellipse detailed above. The key is the eccentric anomaly $v$ for a point $P$ on an ellipse.

\npgni\textbf{Definition}\vspace{-3mm}

\npgni For an ellipse $\mathcal{E}$, with major axis parallel to the x-axis, the eccentric anomaly, $v$, of a point $P$ on $\mathcal{E}$ is the polar angle measured at the centre of $\mathcal{E}$ of the point $Q$ on the circumcircle of $\mathcal{E}$, $Q$ being the image under the parallel projection to the y-axis of $P$ on $\mathcal{E}$ i.e. for cartesian coordinates $(x,y)$ with origin at the right hand focus,\vspace{-2mm}

$$x=a(\cos v-e),\spc y=a\sqrt{1-e^2}\sin v,\spc r=\sqrt{x^2+y^2}=a(1-e\cos v),$$

$$\tan\frac{v}{2}=\sqrt{\frac{1-e}{1+e}}\tan\frac{\theta}{2},$$

\npgni $(r,\theta)$ polar coordinates of $P$ on $\mathcal{E}$ measured from the focus.

\subsection{Remarkable Formulae on}\vspace{-8.5mm}

\hspace{54mm} $\mathcal{E}_K$

$$\nu=-\frac{(1-e\textrm{e}^{iv})^2}{e\textrm{e}^{iv}},\spc\gamma=\sqrt{1-\frac{4}{\nu}}=\frac{(1+e\textrm{e}^{iv})}{(1-e\textrm{e}^{iv})}=\alpha+i\beta;\:\alpha,\beta\in\mathbb{R},$$

\npgni where $1-\gamma=-\dfrac{2e\textrm{e}^{iv}}{(1-e\textrm{e}^{iv})}$ and $\dfrac{\nu}{2}(1-\gamma)=(1-e\textrm{e}^{iv}).$

\npgni This gives on our ellipse $\mathcal{E}_K$:

$$R+iS=-\frac{\mu}{\lambda}r+\lambda(1-e\textrm{e}^{iv})-2\lambda\ln 2-\lambda\ln(-e\textrm{e}^{iv}).$$

\npgni So on $\mathcal{E}_K$, since $r=a(1-e\cos v)$ here,

$$R=-\frac{\mu}{\lambda}r+\lambda(1-e\cos v)-2\lambda\ln2-\lambda\ln e,$$

\npgni $e$ being the eccentricity of $\mathcal{E}_K$. So on $\mathcal{E}_K$ to within additive constants,

$$R=-2\lambda\ln2-\lambda\ln e,\spc S=-\lambda(v+e\sin v),$$\vspace{-8mm}

\npgni $v$ being the eccentric anomaly. This establishes the constancy of the density function for the semi-classical state on the corresponding classical orbit.

\npgni We leave as an exercise in vector algebra the proof of equations $(\ast)$.

\npgni \textbf{Remark}

\npgni In fact, if $R$ achieves its global maximum on $\mathcal{E}_K$ and initial value of $R$ is $R(\vct x_0)$, where $\vct x_0=\mathbf{X}_{t=0}^0$, for all $\vct x_0\in\mathbb{R}^3$, the passage time from $\vct x_0$ to $\mathcal{E}_K$, $\tau(\vct x_0,\mathcal{E}_K)$, is the minimum value of the time $s$ such that

$$R_\textrm{max}-R(\vct x_0)=\int^s_0|\boldsymbol\nabla R|^2(\vct X_u^0)du.$$

\npgni This result is spoiled by the singularity $\Sigma$ of $(R+iS)$, $\Sigma=\{\nu\in\mathbb{C}:\mathcal{I}\nu=0,0<\mathcal{R}\nu<4\}$.

\begin{theorem}
\npgni $(R+iS)(\vct r)=-\dfrac{\mu}{\lambda}r+\lambda f(\nu)$, where

$$f(\nu)=\frac{\nu}{2}\left(1-\sqrt{1-\frac{4}{\nu}}\right)-\ln \nu-2\ln\left(1-\sqrt{1-\frac{4}{\nu}}\right).$$

\npgni So in the complex plane f has a cut $\{\nu\in\mathbb{C}: \mathcal{I}\nu=0,0<\mathcal{R}\nu<4\}=\Sigma$, across which $\sqrt{\zeta}$, $\zeta=1-\dfrac{4}{\nu}$ flips sign. In fact

$$f(\nu)=\frac{2}{(1+\sqrt\zeta)}-\ln\left(\frac{1-\sqrt{\zeta}}{1+\sqrt{\zeta}}\right)-\ln 4.$$

\npgni Setting $M(\Sigma)=\max_{\mathbf{r}\in\Sigma}R(\vct r)$ the above remark is valid for $\vct x_0$, if $R(\vct x_0)>M(\Sigma)$.
\end{theorem}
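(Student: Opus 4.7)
The first equality, $(R+iS)(\vct r)=-\frac{\mu}{\lambda}r+\lambda f(\nu)$ with the stated $f$, is already in hand: it is exactly the formula for $(R+iS)(\vct r)$ derived in the preceding pages from the Lena--Delande--Gay state via the Bohr limit, so there is nothing new to check here except to collect the $\nu$-dependent terms under the single symbol $f(\nu)$.

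The plan for the alternative representation of $f$ is to perform a direct change of variable from $\nu$ to $\zeta=1-4/\nu$, so that $\nu=4/(1-\zeta)=4/[(1-\sqrt\zeta)(1+\sqrt\zeta)]$. The first term rewrites as
$$\tfrac{\nu}{2}\left(1-\sqrt\zeta\right)=\frac{2(1-\sqrt\zeta)}{(1-\sqrt\zeta)(1+\sqrt\zeta)}=\frac{2}{1+\sqrt\zeta},$$
and the two logarithmic terms combine as
$$-\ln\nu-2\ln(1-\sqrt\zeta)=-\ln 4+\ln(1-\sqrt\zeta)+\ln(1+\sqrt\zeta)-2\ln(1-\sqrt\zeta)=-\ln 4-\ln\!\left(\frac{1-\sqrt\zeta}{1+\sqrt\zeta}\right).$$
Adding these gives the second displayed form of $f$.

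For the branch-cut statement I would use the fact that, for the principal branch of the square root, $\sqrt\zeta$ is analytic off $(-\infty,0]$ and flips sign across it. The map $\nu\mapsto\zeta=1-4/\nu$ is a Möbius transformation sending $\zeta=0$ to $\nu=4$, $\zeta=1$ to $\nu=\infty$, and the ray $\zeta\in(-\infty,0)$ to the open real segment $\nu\in(0,4)$: indeed $\mathcal{I}\zeta=4\mathcal{I}\nu/|\nu|^2$ vanishes iff $\mathcal{I}\nu=0$, and then $\zeta<0\iff 0<\nu<4$. Hence $\sqrt\zeta$ inherits exactly the cut $\Sigma$ in the $\nu$-plane, and both representations of $f$ (and thus $R$ and $S$) are single-valued and analytic on $\mathbb{C}\setminus\Sigma$.

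For the last assertion, the earlier derivation showed that $t\mapsto R(\vct X_t^0)$ is monotone non-decreasing, because $\frac{d}{dt}R(\vct X_t^0)=|\boldsymbol\nabla R|^2(\vct X_t^0)\ge 0$. Hence, if $R(\vct x_0)>M(\Sigma):=\max_{\vct r\in\Sigma}R(\vct r)$, then for every $u\ge 0$ we have $R(\vct X_u^0)\ge R(\vct x_0)>M(\Sigma)$, so $\vct X_u^0\notin\Sigma$ for all $u$. The semi-classical flow therefore never meets the cut on which $(R+iS)$ is singular, and the passage-time identity
$$R_{\max}-R(\vct x_0)=\int_0^{\tau(\vct x_0,\mathcal{E}_K)}|\boldsymbol\nabla R|^2(\vct X_u^0)\,du$$
holds exactly as in the remark, via the fundamental theorem of calculus applied to the (now everywhere smooth along the trajectory) monotone function $u\mapsto R(\vct X_u^0)$ on $[0,\tau]$, together with the assumption $R_{\max}=\max R$ is attained on $\mathcal E_K$.

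The main obstacle is the branch choice: one must pin down a single branch of $\sqrt\zeta$ consistent with the one fixed when $(R+iS)$ was computed from $\ln\psi_{\mathrm{sc}}$ (so that $\gamma=(1+e\mathrm{e}^{iv})/(1-e\mathrm{e}^{iv})$ on $\mathcal E_K$), and then verify that on the exterior component $\mathbb C\setminus\Sigma$ this branch agrees with the formula above. Once the cut $\Sigma$ is correctly identified, the dynamical part reduces to the monotonicity of $R$ already established.
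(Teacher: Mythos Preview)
Your proposal is correct and matches the paper's approach: the paper gives no separate proof for this theorem, treating the first formula as a restatement of the $(R+iS)$ expression already derived, the $\zeta$-form as an elementary rewriting, and the passage-time claim as a consequence of the monotonicity $\frac{d}{dt}R(\vct X_t^0)=|\boldsymbol\nabla R|^2\ge0$ established earlier. Your explicit verification of the $\zeta$-identity and the M\"obius map sending $\nu\in(0,4)$ to $\zeta<0$ simply fills in the details the paper leaves implicit.
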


\begin{lemma}
$$M(\Sigma)=\lambda\left(2\left(\dfrac{1-e}{1+e}\right)-\ln4\right).$$
\end{lemma}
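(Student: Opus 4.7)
The plan is to reduce the maximization of $R$ over the preimage $\Sigma \subset \mathbb{R}^3$ of the cut $(0,4) \subset \mathbb{R}$ to a one-variable optimization in $\nu$, after first minimising $r$ along each fibre of the map $\vct{r} \mapsto \nu(\vct{r})$. Since $\nu(\vct{r}) = \frac{\mu}{\lambda^2}(r - x/e - iy\sqrt{1-e^2}/e)$, the condition $\mathcal{I}\nu = 0$ forces $y = 0$, so the preimage of $\Sigma$ sits in the $xz$-plane and there $\nu = \frac{\mu}{\lambda^2}(r - x/e)$ is real. On this locus Theorem 3.2 gives $R(\vct{r}) = -\frac{\mu}{\lambda} r + \lambda\,\mathcal{R}(f(\nu))$, in which only the first summand depends on position along the fibre once $\nu$ is fixed.

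First I would evaluate $\mathcal{R}(f(\nu))$ on $\Sigma$ using the form
$$f(\nu) = \frac{2}{1+\sqrt{\zeta}} - \ln\left(\frac{1-\sqrt{\zeta}}{1+\sqrt{\zeta}}\right) - \ln 4,\spc \zeta = 1 - \frac{4}{\nu},$$
supplied in Theorem 3.2. On $\Sigma$ the number $\zeta$ is real and negative, so $\sqrt{\zeta} = i\tau$ with $\tau = \sqrt{(4-\nu)/\nu}$. Because $\left|\frac{1-i\tau}{1+i\tau}\right| = 1$, the logarithm is purely imaginary, while $\mathcal{R}\,\frac{2}{1+i\tau} = \frac{2}{1+\tau^2} = \frac{\nu}{2}$. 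Hence $\mathcal{R}(f(\nu)) = \nu/2 - \ln 4$ on $\Sigma$, and in particular this real part does not depend on which side of the cut one approaches.

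Second, for each fixed $\nu \in (0,4)$ I would minimise $r$ over the fibre. The constraint $x = e(r - \lambda^2 \nu/\mu)$, combined with $z^2 = r^2 - x^2 \ge 0$, is a quadratic inequality in $r$ whose positive root is $r_{\min}(\nu) = \frac{e\lambda^2\nu}{\mu(1+e)}$, attained at $z = 0$ and $x = -r_{\min}$, i.e., on the negative $x$-axis. Substituting, the supremum of $R$ along the fibre is
$$R_{\max}(\nu) = -\frac{e\lambda\nu}{1+e} + \frac{\lambda\nu}{2} - \lambda\ln 4 = \frac{\lambda(1-e)}{2(1+e)}\,\nu - \lambda\ln 4,$$
which is strictly increasing in $\nu$ since $e \in (0,1)$. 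Letting $\nu \to 4^-$ yields
$$M(\Sigma) = \lambda\left(\frac{2(1-e)}{1+e} - \ln 4\right),$$
as claimed.

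The main obstacle is the fibre-minimisation step: $\nu$ pins down only one linear combination of the three spatial coordinates, so one must verify that the infimum of $r$ along each fibre is indeed attained, on the negative $x$-axis, rather than merely approached. Once this is checked, together with the clean simplification of $\mathcal{R}(f(\nu))$ that makes the logarithm drop out, monotonicity in $\nu$ delivers the stated supremum without further work.
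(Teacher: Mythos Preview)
Your argument is correct and essentially the same as the paper's: both reduce $R$ on the preimage of $\Sigma$ to $-\frac{\mu}{\lambda}r+\frac{\lambda\nu}{2}-\lambda\ln 4$ (you via the $\zeta$-form of $f$, the paper by writing $\frac{\lambda\nu}{2}=\frac{\mu}{2\lambda}(r-x/e)$ directly), and both locate the supremum on the negative $x$-axis at the boundary point $\nu=4$, i.e.\ $x=-\frac{4\lambda^{2}e}{\mu(1+e)}$. Your fibre-by-$\nu$ organisation is just a repackaging of the paper's ``first send $z\to 0$, then optimise in $x$'' step; note only that the relevant formula for $(R+iS)$ is stated in Theorem~2.1, not Theorem~3.2.
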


\begin{proof}

$$\Sigma=\{\nu=\nu_r+i\nu_i:\nu_i=0,0<\nu_r<4\}.$$

\npgni On $0<\nu_r<4$, $y=0$ and $r=\sqrt{x^2+z^2}$ giving

$$R=-\dfrac{\mu}{\lambda}r+\dfrac{\mu}{2\lambda}\left(r-\dfrac{x}{e}\right)-\lambda\ln{4}=R_{0}-\lambda\ln{4},$$

\npgni where $R_{0}=-\dfrac{\mu}{2\lambda}\left(r-\dfrac{x}{e}\right)$. $R_{0}$ is symmetric in $z$ so without loss of generality we can assume that $z>0$. Moreover $\dfrac{\partial R_{0}}{\partial z}<0$ for $z>0$ so the maximum of $R_{0}$ occurs when $z=0$ i.e. when $r=|x|$, $x<0$ and $0<\nu_r<4$ i.e at $x=-\dfrac{4\lambda^2e}{\mu(1+e)}$. Evaluating $R$ at this point gives the desired result.

\end{proof}\vspace{-8mm}

\begin{theorem}
\npgni The semi-classical mechanics of the astronomical elliptic state,

\npgni $\psi\sim\textrm{exp}\left(\frac{R+iS}{\hbar}\right)$ as $\hbar\sim0$, namely

$$\frac{d}{dt}\vct{X}_t^0=(\boldsymbol\nabla R+\boldsymbol\nabla S)(\vct{X}_t^0),\;\;t\ge0,$$\vspace{-5mm}

\npgni is well defined, if $\vct{x}_0=\vct{X}_{t=0}^0$ is such that

$$2\lambda\left(\dfrac{1-e}{1+e}\right)\le R(\vct{x}_0)+\lambda\ln4\le -\lambda\ln{e},$$\vspace{-5mm}

\npgni where, $\psi$ corresponds to eccentricity $e$, semi-major axis $a$ and $\lambda=\sqrt{\mu a}$, $\mu$ being the gravitational mass at the force centre O. Further

$$\ddot{\vct{X}_t^0}=-\boldsymbol\nabla V_{\textrm{eff}}(\vct{X}_t^0),$$\vspace{-12mm}

\npgni with\vspace{-5mm}

$$V_{\textrm{eff}}=V-|\boldsymbol\nabla R|^2,\;\;\;V=-\dfrac{\mu}{r}.$$

\npgni $|\boldsymbol\nabla R|^2$ is the first order quantum correction coming from the state $\psi$, so that the quantum curvature of the orbit reads

$$\kappa_{q}=\pm\frac{\vct{n}.\boldsymbol\nabla(V-|\boldsymbol\nabla R|^2)}{(|\boldsymbol\nabla R|^2+|\boldsymbol\nabla S|^2)},$$

\npgni $|\boldsymbol\nabla R|^2$ being zero on the classical orbit, $\vct{n}$ is the unit normal to the orbit at $\vct{X}_t^0$:

$$\vct{n}=\frac{{\vct{X}_t^0}^{''}}{|{\vct{X}_t^0}^{''}|}$$\vspace{-5mm}

\npgni $(')=\dfrac{d}{ds}=\dfrac{1}{|\vct{v}|}\dfrac{d}{dt}$, $\vct{v}=\boldsymbol\nabla R+\boldsymbol\nabla S$.

\end{theorem}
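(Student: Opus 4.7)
The plan is to prove the three assertions in sequence: the range condition on $R(\vct{x}_0)$, the second-order equation $\ddot{\vct{X}}_t^0 = -\boldsymbol\nabla V_{\textrm{eff}}$, and the curvature formula. For the range condition, $\dfrac{d}{dt}R(\vct{X}_t^0) = |\boldsymbol\nabla R|^2(\vct{X}_t^0) \ge 0$ (shown in the Introduction), so $R$ is non-decreasing along any trajectory and hence $R(\vct{x}_0) \le R_{\textrm{max}}$. The explicit computation in Section 2.2 gives $R = -2\lambda\ln 2 - \lambda\ln e = -\lambda\ln 4 - \lambda\ln e$ on $\mathcal{E}_K$ (up to additive constants); under the hypothesis in the remark before Theorem 3.1 that this global maximum is actually attained on $\mathcal{E}_K$, rearranging yields the upper inequality $R(\vct{x}_0) + \lambda\ln 4 \le -\lambda\ln e$. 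For the lower inequality, the same remark together with Theorem 3.1 requires the trajectory to stay away from the cut $\Sigma$ of $(R+iS)$; combined with monotonicity this forces $R(\vct{x}_0) \ge M(\Sigma)$, and Lemma 3.1 evaluates $M(\Sigma) = \lambda\big(2(1-e)/(1+e) - \ln 4\big)$, giving $R(\vct{x}_0) + \lambda\ln 4 \ge 2\lambda(1-e)/(1+e)$.

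To derive the Newton-like equation, I would differentiate the flow equation along the trajectory,
$$\ddot{\vct{X}}_t^0 = \big((\boldsymbol\nabla R + \boldsymbol\nabla S)\cdot\boldsymbol\nabla\big)(\boldsymbol\nabla R + \boldsymbol\nabla S)(\vct{X}_t^0),$$
and observe that $\boldsymbol\nabla R + \boldsymbol\nabla S$ is a gradient field, hence curl-free. The identity $(\vct{A}\cdot\boldsymbol\nabla)\vct{A} = \boldsymbol\nabla(|\vct{A}|^2/2) - \vct{A}\times(\boldsymbol\nabla\times\vct{A})$ then collapses to $\boldsymbol\nabla(|\vct{A}|^2/2)$. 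The orthogonality $\boldsymbol\nabla R\cdot\boldsymbol\nabla S = 0$ from $(\ast)$ gives $|\boldsymbol\nabla R + \boldsymbol\nabla S|^2 = |\boldsymbol\nabla R|^2 + |\boldsymbol\nabla S|^2$, and the eikonal equation $|\boldsymbol\nabla S|^2 - |\boldsymbol\nabla R|^2 = 2(E-V)$ yields $|\boldsymbol\nabla R|^2 + |\boldsymbol\nabla S|^2 = 2|\boldsymbol\nabla R|^2 + 2(E-V)$. Taking $\boldsymbol\nabla$ and dropping the constant $E$ produces $\ddot{\vct{X}}_t^0 = \boldsymbol\nabla(|\boldsymbol\nabla R|^2 - V) = -\boldsymbol\nabla V_{\textrm{eff}}$.

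For the quantum curvature, I would parameterise by arc length $s$ with $ds/dt = |\vct{v}|$, $\vct{v} = \boldsymbol\nabla R + \boldsymbol\nabla S$, so that $\vct{v}/|\vct{v}|$ is the unit tangent and, by the Frenet relations, $\vct{X}'' = \kappa\vct{n}$ with $\vct{n}\perp\vct{v}$. Decomposing $\ddot{\vct{X}} = |\vct{v}|^2\vct{X}'' + (d|\vct{v}|/dt)(\vct{v}/|\vct{v}|)$ and dotting with $\vct{n}$ kills the tangential part, leaving $\vct{n}\cdot\ddot{\vct{X}} = |\vct{v}|^2\kappa$; combining this with $\ddot{\vct{X}} = -\boldsymbol\nabla(V - |\boldsymbol\nabla R|^2)$ from the previous step and $|\vct{v}|^2 = |\boldsymbol\nabla R|^2 + |\boldsymbol\nabla S|^2$ gives the stated formula for $\kappa_q$, with $\pm$ absorbing the sign ambiguity in the orientation of $\vct{n}$. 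The main obstacle I anticipate is not any of the manipulations above, which follow directly from $(\ast)$ and the Frenet apparatus, but rather the hypothesis used in the upper bound: that $R$ attains its global maximum precisely on $\mathcal{E}_K$. Verifying this rigorously requires analysing the multi-valued function $f(\nu)$ after the cut $\Sigma$ is resolved, and the map $\vct{r}\mapsto\nu(\vct{r})$ globally, to rule out spurious maximisers off $\mathcal{E}_K$ beyond those already controlled by Lemma 3.1.
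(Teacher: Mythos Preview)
Your proposal is correct and follows essentially the same route as the paper: the paper's own proof is a one-line hint, ``Take the gradient of our energy equation, using $\boldsymbol\nabla(2^{-1}\vct{a}^2)=(\vct{a}.\boldsymbol\nabla)\vct{a}+\vct{a}\times\textrm{curl}\,\vct{a}$,'' which is precisely the identity you invoke for the second-order equation, while the range condition and curvature formula are left to the surrounding material (the value of $R$ on $\mathcal{E}_K$, the computation of $M(\Sigma)$, and standard Frenet geometry) exactly as you reconstruct them. Your closing caveat about the global-maximum hypothesis is also accurate: the paper treats this as an assumption (cf.\ the remark preceding the theorem and Lemma~2.4) rather than something established within the proof itself.
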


\npgni (See Ref. [28] for a generalisation).

\begin{proof}

Take the gradient of our energy equation, using  $\boldsymbol\nabla(2^{-1}\vct{a}^2)=(\vct{a}.\boldsymbol\nabla)\vct{a}+\vct{a}\times\textrm{curl}\vct{a}$, for any vector field $\vct{a}$.

\end{proof}\vspace{-8mm}

\npgni Inevitably our theory inherits some of the structure of Schrödinger quantum mechanics, especially as formulated by Ed Nelson in stochastic mechanics (see Ref. [18]). One of the most striking of these is that although our equations for $R$ and $S$ are non-linear nevertheless there is an underlying linear superposition principle for our solutions arising from the asymptotics. This is best illustrated by considering the superposition of two wave functions $\phi_{1}$ and $\phi_{2}$, for possibly different energies, satisfying our equations i.e. considering $\phi=\phi_{1}+\phi_{2}$, where

$$\phi_{\textrm{j}}\sim\textrm{exp}\left(\frac{R_{\textrm{j}}+iS_{\textrm{j}}}{\epsilon^2}\right),\;\;\textrm{j}=1,2\;\;\textrm{and}\;\; \phi\sim\textrm{exp}\left(\frac{R+iS}{\epsilon^2}\right)$$

\npgni as $\epsilon^2=\dfrac{\hbar}{m}\sim0$, $m$ being the WIMP mass.

\npgni Following Nelson it is easy to prove the exact formula,

$$2R=\epsilon^2\ln2+(R_{1}+R_{2})\ln\Bigg[\cosh\left(\dfrac{R_{1}-R_{2}}{\epsilon^2}\right)+\cos\left(\dfrac{S_{1}-S_{2}}{\epsilon^2}\right)\Bigg],$$

\npgni together with a similar formula for $S$. Letting $\epsilon\sim0$, we obtain

$$\phi\sim\dfrac{\rho_{1}\phi_{1}+\rho_{2}\phi_{2}}{\rho_{1}+\rho_{2}},\;\;\rho_{\textrm{j}}=\textrm{exp}\left(\dfrac{2R_{\textrm{j}}}{\epsilon^2}\right),\;\;\textrm{j}=1,2.$$

\npgni So, if $R_{\textrm{M}}=\textrm{max}(R_{1},R_{2})$, correct to first order in $\epsilon$,

$$R\sim R_{\textrm{M}}$$\vspace{-8mm}

\npgni and so $\phi=\left(c_{1}\chi_{_{R_{1}>R_{2}}}\phi_{1}+c_{2}\chi_{_{R_{2}>R_{1}}}\phi_{2}\right)$, with $c_{\textrm{j}}\ge0$, $c_{1}^2+c_{2}^2=1$, after normalisation. When $\phi_{\textrm{j}}$ corresponds to an astronomical elliptic state, $\textrm{j}=1,2$, by conservation of mass the $c_{\textrm{j}}^2$ are just the probabilities of WIMPs ending in the $\textrm{j}^{\textrm{th}}$ ring. Needless to say, leaning on Nelson (Ref. [19] - Quantum Fluctuations p. 90), it follows that when $R_{\textrm{M}}=R_{\textrm{j}}$, some j, for the above M,

$$\boldsymbol{\nabla}R=\boldsymbol{\nabla}R_{\textrm{M}},\;\;\;\boldsymbol{\nabla}S=\boldsymbol{\nabla}S_{\textrm{M}}\;\;\textrm{and}\;\;S=S_{\textrm{M}},\;\;\textrm{yielding our new solution},\;\;(R_{\textrm{M}}+iS_{\textrm{M}}).$$

\npgni So our superposition principle amounts to simply taking the maximum of the $R_{\textrm{j}}$'s and the corresponding $S_{\textrm{j}}$, as you might expect in the correspondence limit.

\subsection{Keplerian Elliptic Coordinates in 2-Dimensions}\vspace{-5mm}

\npgni The polar equation of an ellipse in the plane $z=0$, with eccentricity $\tilde e$ and semilatus rectum $\tilde\ell$, with focus at the origin $O$ and major axis parallel to the x-axis reads 

$$\frac{\tilde\ell}{r}=1+\tilde e\cos\theta,$$\vspace{-10mm}

\npgni which can be written as

$$\frac{\tilde\ell^2}{(1-\tilde e^2)}=x^2+\frac{2\tilde\ell\tilde e x}{(1-\tilde e^2)}+\frac{y^2}{(1-\tilde e^2)}$$\vspace{-10mm}

\npgni i.e.\vspace{-5mm}

$$(x+\tilde a\tilde e)^2+\frac{y^2}{(1-\tilde e^2)}=\tilde a^2=\frac{\tilde\ell^2}{(1-\tilde e^2)^2}.$$\vspace{-8mm}

\npgni So defining the coordinate $v$ by

$$x+\tilde a\tilde e=\tilde a\cos v,\spc y=\tilde a\sqrt{1-\tilde e^2}\sin v$$

\npgni and the coordinate $u$ by $u=\tilde e$ gives if we set $\tilde a=\dfrac{2ae}{(e+u)}$,

\eqn$$x=\frac{2ae}{(e+u)}(\cos v-u),\spc y=\frac{2ae}{(e+u)}\sqrt{1-u^2}\sin v.$$

\npgni A simple computation gives

$$r=\sqrt{x^2+y^2}=\frac{2ae}{(e+u)}(1-u\cos v).$$\vspace{-8mm}

\npgni So $v$ is the eccentric angle of a point $P$ on the ellipse with equation above, with $\tilde\ell=\dfrac{2ae}{(e+u)}(1-u^2),\;\tilde e=u.$

\npgni The following lemma in conjunction with Theorem 3.2  ensures the desired convergence to classical motion on the Keplerian ellipse.

\begin{lemma}

$$\dfrac{\partial R}{\partial x}=\dfrac{\partial R}{\partial z}=0\;\;\Rightarrow z=0.$$

\end {lemma}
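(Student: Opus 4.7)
The plan is to compute the two partial derivatives of $R$ directly from the closed form $(R+iS)(\vct r)=-\dfrac{\mu}{\lambda}r+\lambda f(\nu)$ of Theorem~3.1, and to exploit the fact that $\partial_x\nu$ and $\partial_z\nu$ are real-valued even though $\nu$ itself is complex. Since $\nu=\dfrac{\mu}{\lambda^2}\bigl(r-\tfrac{x}{e}-\tfrac{iy\sqrt{1-e^2}}{e}\bigr)$ with $r=\sqrt{x^2+y^2+z^2}$, one has
$$\frac{\partial\nu}{\partial x}=\frac{\mu}{\lambda^{2}}\Bigl(\frac{x}{r}-\frac{1}{e}\Bigr),\qquad\frac{\partial\nu}{\partial z}=\frac{\mu}{\lambda^{2}}\frac{z}{r},$$
both purely real. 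Taking real parts of $\partial_{x}(R+iS)$ and $\partial_{z}(R+iS)$ therefore yields
$$\frac{\partial R}{\partial x}=-\frac{\mu x}{\lambda r}+\frac{\mu}{\lambda}\,\mathcal{R}f'(\nu)\Bigl(\frac{x}{r}-\frac{1}{e}\Bigr),\qquad\frac{\partial R}{\partial z}=\frac{\mu z}{\lambda r}\bigl(\mathcal{R}f'(\nu)-1\bigr).$$

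Next I would analyse these two equations in order. From $\partial R/\partial z=0$ we get the dichotomy $z=0$ or $\mathcal{R}f'(\nu)=1$. The proof then reduces to ruling out the second alternative. Substituting $\mathcal{R}f'(\nu)=1$ into $\partial R/\partial x=0$, the two terms proportional to $x/r$ cancel exactly, leaving the residue
$$0=\frac{\partial R}{\partial x}=-\frac{\mu}{\lambda e},$$
which is impossible since $\mu,\lambda>0$ and $0<e<1$. Hence the second alternative is vacuous and $z=0$ as claimed.

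The only step that requires genuine care is the real/imaginary decomposition: I want to confirm that the dependence of $R$ on $x$ and $z$ comes solely from $r$ and the real part of $\nu$, so that $\mathcal{R}f'(\nu)$ alone mediates the complex structure. Everything else is elementary cancellation and the observation that the leftover constant $-\mu/(\lambda e)$ cannot vanish. I do not expect any subtle obstacle — the proof is really just an exercise in spotting the fortuitous cancellation of the $x/r$ terms when $\mathcal{R}f'(\nu)=1$, which is the whole point of the lemma.
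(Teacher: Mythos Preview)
Your proposal is correct and is essentially the paper's own argument in different notation: since $f'(\nu)=\tfrac{1}{2}(1-\gamma)$ with $\gamma=\sqrt{1-4/\nu}=\alpha+i\beta$, your condition $\mathcal{R}f'(\nu)=1$ is exactly the paper's $\alpha=-1$, and the ensuing contradiction $\partial R/\partial x=-\mu/(\lambda e)\neq0$ is precisely the paper's ``$1/e=0$''. The only cosmetic difference is that you work with $\mathcal{R}f'(\nu)$ abstractly rather than first evaluating $f'$ to expose $\alpha$ explicitly.
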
\vspace{3mm}

\begin{proof}
Using theorem 2.1 it is easy to establish that $\dfrac{\partial R}{\partial z}=0\Rightarrow z=0$ or\vspace{-5mm}

\npgni $\alpha=\mathcal{R}\sqrt{1-\dfrac{4}{\nu}}=-1$. If $\alpha=-1$, $\dfrac{\partial R}{\partial x}=0\Rightarrow\dfrac{1}{e}=0$, a contradiction, from which the\vspace{-3mm}

\npgni result follows.

\end{proof}\vspace {-5mm}

\npgni \textbf{N.B.} On Keplerian coordinates $(u,v)$, restricting $-e<u<1$ we have assumed $r<2a$, so the $u=\textrm{constant}$ contours are ellipses with eccentricity $u$, $0<u<1$, $v=\textrm{eccentric anomaly}$ on this ellipse. When $r>2a$ this changes because here $u<0$ and so the eccentricity of the ellipse in this region is $|u|$ and $v=\pi-\textrm{eccentric anomaly}$.

\npgni We can now recast our equations of motion, $\dot{\vct X}_t^0=(\boldsymbol\nabla R+\boldsymbol\nabla S)(\vct X_t^0),\vct X_t^0=(x,y)$, in terms of $(u,v)$:\vspace {-5mm}

$$\dot u=b_u(u,v),\spc\dot v=b_v(u,v).$$\vspace{-10mm}

\npgni We obtain for $c=\cos v$ and $s=\sin v$

$$b_u=\sqrt{\frac{\mu}{a^3}}\left(\frac{e+u}{2e}\right)^2\frac{\sqrt{1-u^2}\left(\sqrt{1-u^2}(e+c-s)-\sqrt{1-e^2}(u+c-s)\right)}{(1-uc)((e+u)c+eu+1)},$$

$$b_v=\sqrt{\frac{\mu}{a^3}}\frac{(e+u)}{4e^2}\frac{\left(e^2(u+c-s)+2e(1+uc)+u+c+s-\sqrt{1-u^2}\sqrt{1-e^2}(e+c+s)\right)}{(1-uc)((e+u)c+eu+1)}.$$

\npgni Observe that if we set $u=e$ in the equations for Keplerian coordinates we get the original Kepler ellipse for this problem. Observe that setting $u=e$ gives $b_u=0$ and

$$\dot v=\sqrt{\frac{\mu}{a^3}}\frac{1}{1-e\cos v}$$

\npgni i.e. Kepler’s equation for the eccentric anomaly,

$$v-e\sin v=\sqrt{\frac{\mu}{a^3}}t.$$

\npgni When $z=0$, the coordinates $u$ and $v$ enable us to simplify the expressions in $(x,y)$ for variables $\alpha$ and $\beta$, in 2-dimensions,

$$\alpha+i\beta=\sqrt{1-\frac{4}{\nu}},\spc \nu=\frac{\mu}{\lambda^2}\left(r-\frac{x}{e}-\frac{iy\sqrt{1-e^2}}{e}\right),\spc r=\sqrt{x^2+y^2},$$

$$\alpha=\left(\frac{1}{2}\sqrt{\frac{\left(er-x-\frac{4\lambda^2e}{\mu}\right)^2+(1-e^2)y^2}{(er-x)^2+(1-e^2)y^2}}+\frac{1}{2}\frac{\left(er-x-\frac{2\lambda^2e}{\mu}\right)^2+(1-e^2)y^2-\frac{4\lambda^4e^2}{\mu^2}}{(er-x)^2+(1-e^2)y^2}\right)\tph,$$

$$\beta=-\frac{2\lambda^2e\sqrt{1-e^2}y}{\alpha\mu((er-x)^2+(1-e^2)y^2)}:$$

$$\alpha=\frac{\sqrt{(1-u^2)(1-e^2)}}{(1+eu-(e+u)\cos v)},\;\;\;\;\;\;\;\beta=-\frac{(e+u)\sin v}{(1+eu-(e+u)\cos v)}.$$

\npgni Writing $\alpha=\sqrt{\frac{1}{2}(\sqrt{\alpha_1}+\alpha_2)}$, if we ask when is $\alpha_1$ is a perfect square, we discover $\alpha_1$ is a perfect square on the ellipses $\mathcal{E}_u$ with polar equation

$$r=\frac{2ae}{(e+u)}(1-u\cos v),\;\;\;-e<u<1,$$

\npgni with focus at the origin $O$, for fixed $u$, the eccentricity of $\mathcal{E}_u$, $0\le v\le2\pi$, the eccentric anomaly, given the above proviso when $u<0$. Hence our choice of $\tilde a=\dfrac{2ae}{(e+u)}$. 

\section{Levi-Civita, Kustaanheimo-Stieffel (LCKS) Transformations}
\subsection{LCKS Transformations in Classical Mechanics}

\npgni Consider the Kepler problem for $\zeta=\zeta(t)\in\mathbb{C}$, $t$ being the physical time,

$$\frac{d^2\zeta}{dt^2}=-\mu|\zeta|^{-3}\zeta,\spc t\ge0.$$

\npgni Define $w\in\mathbb{C}$ by $\zeta=w^2$ and a new time variable $s$ by $\dfrac{ds}{dt}=|\zeta(t)|^{-1}$. Then a simple calculation yields that the equation for energy conservation

$$2^{-1}\left|\frac{d\zeta}{dt}\right|^2-\frac{\mu}{|\zeta|}=E,\spc E<0$$

\npgni a constant here, reduces to\vspace{-3mm}

$$2^{-1}\left|\frac{dw}{ds}\right|^2-\frac{E}{4}|w|^2=\frac{\mu}{4}$$\vspace{-5mm}

\npgni and working slightly harder for a constant $\omega$

$$\frac{d^2w(s)}{ds^2}+\omega^2w(s)=0.$$\vspace{-5mm}

\npgni This is the original result due to Levi-Civita; energy conservation and equation of motion for an isotropic oscillator in 2-dimensions with energy $E'=\dfrac{\mu}{4}$ and circular frequency $\omega=\sqrt{-\dfrac{E}{4}}$.
Here we investigate the transformation $(x+iy)\to(x+iy)^2$ or for $(x+iy)=re^{i\theta}$, $r\to r^2$, $\theta\to2\theta$ and the time-change, $\dfrac{ds}{dt}=\dfrac{1}{r}=\sqrt{\dfrac{a}{\mu}}\dfrac{dv}{dt}$, $v$ being the eccentric anomaly on $\mathcal{E}_K$, the ficticous time $s$, and physical time $t$ (Ref. [25]). 

\begin{lemma}
\npgni Under the LCKS Transformation the Kepler ellipse with focus at the 

\noindent origin $O$\vspace{-3mm}

$$\frac{\ell}{r}=1+e\cos\theta,\spc0<e<1,\spc\ell>0,$$\vspace{-8mm}

\npgni i.e.\;$r=a(1-e\cos v),$ is transformed to the ellipse with origin at the centre of the ellipse

$$\frac{x^2}{p^2}+\frac{y^2}{q^2}=1,\spc p^2=\frac{\ell}{(1+e)},\spc q^2=\frac{\ell}{(1-e)}.$$

$$(x,y)=a(\cos v -e, \sqrt{1-e^2}\sin v)\to\left(\sqrt{a(1-e)}\cos\left(\frac{v}{2}\right),\sqrt{a(1+e)}\sin\left(\frac{v}{2}\right)\right).$$
\end{lemma}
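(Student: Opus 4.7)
The plan is to verify the claim at the parametric level by writing $\zeta = x + iy$ for a point on the Kepler ellipse parametrised by the eccentric anomaly $v$, and exhibiting an explicit complex square root $w$ of $\zeta$ which agrees with the stated image.

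First I would expand the Kepler parametrisation
$$\zeta = a(\cos v - e) + ia\sqrt{1-e^{2}}\sin v$$
using the half-angle identities $\cos v = \cos^{2}(v/2) - \sin^{2}(v/2)$, $\sin v = 2\sin(v/2)\cos(v/2)$, and $1 = \cos^{2}(v/2) + \sin^{2}(v/2)$. Regrouping yields
$$\zeta = a(1-e)\cos^{2}(v/2) - a(1+e)\sin^{2}(v/2) + 2ia\sqrt{1-e^{2}}\,\sin(v/2)\cos(v/2).$$
The crucial (and in fact only) observation is that this is a perfect complex square, $\zeta = (A+iB)^{2}$, with
$$A = \sqrt{a(1-e)}\,\cos(v/2), \qquad B = \sqrt{a(1+e)}\,\sin(v/2),$$
matching exactly the right-hand side of the claimed map; indeed $A^{2}-B^{2}$ reproduces the real part above, while $2AB = 2\sqrt{a^{2}(1-e^{2})}\sin(v/2)\cos(v/2) = a\sqrt{1-e^{2}}\sin v$ reproduces the imaginary part.

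Second, I would check that the image point $(A, B)$ lies on the claimed centred ellipse. Since $\ell = a(1-e^{2})$, the semi-axes satisfy
$$p^{2} = \frac{\ell}{1+e} = a(1-e), \qquad q^{2} = \frac{\ell}{1-e} = a(1+e),$$
so that
$$\frac{A^{2}}{p^{2}} + \frac{B^{2}}{q^{2}} = \cos^{2}(v/2) + \sin^{2}(v/2) = 1,$$
as required.

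There is no serious obstacle here; the content is essentially a one-line algebraic identity once the half-angle regrouping has made the perfect-square structure visible. The only point worth flagging is the branch choice for $\sqrt{\zeta}$, which is fixed by continuity from $v=0$, where $A = \sqrt{a(1-e)} > 0$ is indeed the positive real root of the pericentre distance $x(0) = a(1-e)$; this also makes transparent the fact that $w$ covers the target centred ellipse once as $v$ runs through $[0, 4\pi]$, reflecting the double cover intrinsic to $w \mapsto w^{2}$ that underlies the LCKS correspondence.
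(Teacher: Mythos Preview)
The paper states this lemma without proof, so there is no argument of the paper's to compare against. Your verification is correct and is the natural one: the half-angle regrouping of $\zeta = a(\cos v - e) + ia\sqrt{1-e^{2}}\sin v$ exhibits it as the perfect square $(A+iB)^{2}$ with $A=\sqrt{a(1-e)}\cos(v/2)$, $B=\sqrt{a(1+e)}\sin(v/2)$, and the identification $p^{2}=a(1-e)$, $q^{2}=a(1+e)$ from $\ell=a(1-e^{2})$ then places $(A,B)$ on the stated centred ellipse. Your closing remark on the branch choice and the $2:1$ covering is apt and matches the paper's N.B.\ that the image ellipse has its major axis along the $y$-axis (since $q>p$).
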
\vspace{-5mm}

\npgni N.B. The major axis for the transformed ellipse is parallel to the y-axis.

\npgni There is no LCKS transformation in 3-dimensions; there is one in 4-dimensions using quaternions, but here we work in 2-dimensions. Here $r=\sqrt{x^2+y^2}$ and

\noindent $\nu=\dfrac{\mu}{\lambda^2}\left(r-\dfrac{x}{e}-\dfrac{iy\sqrt{1-e^2}}{e}\right)$, for our 2-dimensional Kepler-Coulomb problem. We presuppose that conservation of angular momentum fixes the plane of motion to be, $z=0$, with cartesian coordinates $(x,y)$. Our main result for our semi-classical analysis for the astronomical elliptic state, $\psi_{\textrm{sc}}$, for the Kepler ellipse, $\mathcal{E}_K$, with polar equation\vspace{-3mm}

\npgni has eccentricity $e$, $0<e<1$, with semi-major axis, $a=\dfrac{\lambda^2}{\mu}$, semi-latus rectum, $\ell=a(1-e^2)$, in the plane $z=0$, with semi-major axis parallel to the x-axis, with polar equation:

$$\frac{\ell}{r}=1+e\cos\theta,$$\vspace{-5mm}

\npgni $\ell=a(1-e^2)$, $e=\sqrt{1+\dfrac{2h^2E}{\mu^2}}$, $0<e<1$, $a$ being the semi-major axis parallel to the $x-\textrm{axis}$, $e$ the orbital eccentricity, $h$, $E$ and $\mu$ as usual being angular momentum, energy of WIMPish particles per unit mass and $\mu$ the gravitational mass at the force centre of $\mathcal{E}_K$, $(ae,0)$, relative to the centre of the elliptical origin O.\vspace{3mm}

\begin{theorem}
For our Keplerian coordinates $(u,v)$ defined above, converting to polar coordinates $(r,\theta)$: $x=r\cos{\theta}$, $y=r\sin{\theta}$, reveals,

$$u=u(r,\theta),\;\;\;\;\;\;\;\tan{\dfrac{v}{2}}=\sqrt{\dfrac{1-u}{1+u}}\tan{\dfrac{\theta}{2}}.$$\vspace{-5mm}

\npgni $u(r,\theta)$ being the solution to a simple quadratic equation.\vspace{-3mm}

\npgni Moreover, as defined above, for\vspace{-3mm}

$$\nu=\frac{\mu}{\lambda^2}\left(r-\frac{x}{e}-\frac{iy\sqrt{1-e^2}}{e}\right),\spc r=\sqrt{x^2+y^2},\spc \alpha+i\beta=\sqrt{1-\frac{4}{\nu}},$$

$$|\boldsymbol{\nabla}R|^2=\dfrac{\mu^2(1+e\cos{v})(1-eu-\sqrt{1-u^2}\sqrt{1-e^2})}{2\lambda^2e^2(1-u\cos{v})}.$$

\npgni So $|\boldsymbol{\nabla}R(x,y)|=0$ iff
$(x,y)\in\mathcal{E}_K$ and as expected $R(\vct{X}_t^0)\nearrow R_{\textrm{max}}$ as $t\nearrow\infty$ guarantees convergence to the classical Keplerian ellipse as long as $R(\vct x_0)>M(\Sigma)$. The formulae for $b_{u}$ and $b_{v}$ being as given above.

\end{theorem}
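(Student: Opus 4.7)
The proof breaks into three parts matching the three assertions of the theorem. To obtain the quadratic $u=u(r,\theta)$ and the associated true/eccentric-anomaly relation, I would use the observation from Section 2.3 that the level curves $u=\textrm{const}$ are precisely the confocal ellipses $\mathcal{E}_u$ with polar equation $r(e+u)(1+u\cos\theta)=2ae(1-u^2)$. Clearing the denominator and writing $x=r\cos\theta$ produces a quadratic $\phi(u;r,\theta)=0$ of the shape $(x+2ae)u^2+(r+ex)u+e(r-2a)=0$, from which $u=u(r,\theta)$ via the quadratic formula, taking the root agreeing with $u=e$ on $\mathcal{E}_K$. The identity $\tan(v/2)=\sqrt{(1-u)/(1+u)}\tan(\theta/2)$ then applies the classical true/eccentric anomaly relation recalled in the Introduction, verbatim for the ellipse $\mathcal{E}_u$ of eccentricity $u$; substituting the parametric form $(x,y)$ of Section 2.3 into the polar equation is the one-line verification.

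For the formula for $|\boldsymbol\nabla R|^2$, the key observation is $\boldsymbol\nabla R=\textrm{Re}\,\boldsymbol\nabla(R+iS)$. Starting from Theorem 2.1, a short calculation (cleanest in the variable $\xi=\sqrt\zeta=\alpha+i\beta$) gives the neat derivative $f'(\nu)=\tfrac12(1-\gamma)$. Combined with the elementary $\boldsymbol\nabla\nu=(\mu/\lambda^2)(\vct r/r-\vct i/e-i\sqrt{1-e^2}\vct j/e)$, taking real parts yields
$$\boldsymbol\nabla R=-\frac{\mu(1+\alpha)}{2\lambda}\frac{\vct r}{r}-\frac{\mu(1-\alpha)}{2\lambda e}\vct i-\frac{\mu\beta\sqrt{1-e^2}}{2\lambda e}\vct j.$$
Squaring and expanding gives $|\boldsymbol\nabla R|^2$ as a rational expression in $(\alpha,\beta,e,x,y,r)$. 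Substituting the closed $(u,v)$-forms for $\alpha,\beta,r$ tabulated in Section 2.3 and simplifying, the algebraic collapse to the claimed compact form is driven by the identity $(1-eu)^2-(1-u^2)(1-e^2)=(u-e)^2$, which is the perfect-square structure of $\alpha_1$ noted at the end of Section 2.3 and which is precisely what manufactures the factor $1-eu-\sqrt{(1-u^2)(1-e^2)}$ vanishing on $\mathcal{E}_K$.

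The remaining assertions follow quickly. Since $(1+e\cos v)>0$ and $(1-u\cos v)>0$ in the admissible region $-e<u<1$, $|\boldsymbol\nabla R|=0$ reduces to $1-eu=\sqrt{(1-u^2)(1-e^2)}$, and the identity above yields $(u-e)^2=0$, i.e. $u=e$, i.e. $(x,y)\in\mathcal{E}_K$. Monotonicity $R(\vct X_t^0)\nearrow R_{\textrm{max}}$ was already derived in the Introduction from $(\ast)$; combined with Lemma 2.1 and Theorem 2.1, the hypothesis $R(\vct x_0)>M(\Sigma)$ forces the orbit to stay uniformly bounded away from the branch cut $\Sigma$, so the only limit consistent with $R\nearrow R_{\textrm{max}}$ and the characterisation $\{|\boldsymbol\nabla R|=0\}=\mathcal{E}_K$ is $\vct X_t^0\to\mathcal{E}_K$. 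The formulae for $b_u,b_v$ follow by a pure chain-rule calculation: invert the Jacobian of $(x,y)\mapsto(u,v)$ read off from Section 2.3 and apply it to $\boldsymbol\nabla R+\boldsymbol\nabla S$, reproducing the two displays preceding the theorem.

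The main obstacle is the bookkeeping in the $|\boldsymbol\nabla R|^2$ simplification: substituting the $(u,v)$-expressions prematurely causes the denominator $1+eu-(e+u)\cos v$ to proliferate, and the symbolic rational function in $(\alpha,\beta)$ is unwieldy. The cleanest route is to first pull together $|\boldsymbol\nabla R|^2$ in $(\alpha,\beta)$ alone, exploit $\gamma^2=1-4/\nu$ to eliminate $\alpha^2+\beta^2$ and $\alpha^2-\beta^2$ in favour of $\nu$, and only at the last step make the $(u,v)$-substitution and apply the perfect-square identity, letting the factor $(u-e)^2$ emerge before dividing out the conjugate radical.
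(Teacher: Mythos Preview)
Your proposal is correct and follows the same skeleton as the paper's own proof: derive the quadratic for $u$, read off the $v$--$\theta$ relation from the eccentric/true anomaly identity on $\mathcal{E}_u$, write down the explicit Cartesian components of $\boldsymbol\nabla R$ (your vector form coincides with the paper's componentwise expression), square and substitute the $(u,v)$ forms of $\alpha,\beta,r$, and conclude $|\boldsymbol\nabla R|=0\iff u=e$.

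The one genuine difference of emphasis is this: the paper flags that ``the simplifications in this proof are heavily dependent on Levi-Civita's transformations'' but does not spell out where, and simply asserts that the $|\boldsymbol\nabla R|^2$ formula emerges from ``straight-forward but lengthy algebra.'' You instead obtain the quadratic directly from the polar equation $r(e+u)(1+u\cos\theta)=2ae(1-u^2)$ of $\mathcal{E}_u$ (no Levi-Civita needed), and you isolate the single algebraic identity $(1-eu)^2-(1-u^2)(1-e^2)=(u-e)^2$ as the engine of the collapse to the stated closed form. This is more elementary and more transparent than the paper's sketch: it makes visible exactly why the factor $1-eu-\sqrt{(1-u^2)(1-e^2)}$ appears and why it vanishes precisely on $\mathcal{E}_K$, whereas the paper leaves both the role of Levi-Civita and the mechanism of the simplification implicit.
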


\begin{proof}
We warn the reader that the simplifications in this proof are heavily dependent on Levi-Civita's transformations. Using this transformation together with the definition of Keplerian coordinates $(u,v)$ we can deduce that $u$ satisfies the quadratic equation,

$$(r\cos{\theta}+2ae)u^2+(er\cos{\theta}+r)u+(r-2a)e=0,$$\vspace{-7mm}

\npgni where $a=\dfrac{\lambda^2}{\mu}$. Solving this equation gives $u(r,\theta)$ and the formula for $v$ follows from the definition of $(u,v)$. In 2-dimensions $\boldsymbol{\nabla}R$ is given by

$$\boldsymbol{\nabla}R=-\dfrac{\mu}{2\lambda}\left((1+\alpha)\dfrac{x}{r}+\dfrac{(1-\alpha)}{e},(1+\alpha)\dfrac{y}{r}+\dfrac{\beta}{e}\sqrt{1-e^2}\right).$$

\npgni The expression for $|\boldsymbol{\nabla}R|^2$ follows after straight-forward but lengthy algebra. Observing that $|\boldsymbol{\nabla}R(x,y)|=0$ iff $u=e$ completes the proof.

\end{proof}

\subsection{Quantum Harmonic Oscillators}\vspace{-5mm}

\npgni To continue we need to note that the elliptic harmonic oscillator state in 2-dimensions is

$$\psi_n^{\textrm{HO}}=\textrm{exp}\left(-\frac{n\omega r^2}{2\tilde\lambda}\right)H_n(\sqrt n\tilde u),\spc n=1,2\dots,$$

\npgni where $\tilde u=\sqrt{\dfrac{\omega}{\tilde\lambda}\left((1-\alpha)\dfrac{x^2}{2}+(1+\alpha)\dfrac{y^2}{2}-i\beta xy\right)}$, $\alpha^2-\beta^2=1$, for potential 

\npgni $V(r)=\dfrac{1}{2}\omega^2r^2$, $H_n$ a Hermite polynomial (Ref. [12]).

\npgni For the quantum Hamiltonian

$$H=2^{-1}\vct P^2+V(\vct Q),$$

$$H\psi_n^{\textrm{HO}}=\omega\hbar(n+1)\psi_n^{\textrm{HO}},\spc\tilde\lambda=n\hbar=n\epsilon^2,$$

\npgni $\tilde\lambda$ being fixed as we take the limit $\epsilon\searrow0$, $n\nearrow\infty$, so the limiting energy here is $\tilde\lambda\omega$. We need the result that for fixed $\tilde\lambda$,

$$\lim_{n\nearrow\infty}\frac{H_n'(\sqrt{n}u)}{\sqrt{n}H_n(\sqrt{n}u)}=u-\sqrt{u^2-2},$$\vspace{-5mm}

\npgni giving $\psi_n^{\textrm{HO}}\cong\textrm{exp}\left(\dfrac{\tilde R+i\tilde S}{\epsilon^2}\right)$ as $n\sim\infty$, where

$$(\tilde R+i\tilde S)(x,y)=-\frac{1}{2}\omega r^2+\frac{\tilde\lambda}{2}\tilde u^2\left(1-\sqrt{1-\frac{2}{\tilde u^2}}\right)+\tilde\lambda\ln\left(\tilde u+\tilde u\sqrt{1-\frac{2}{\tilde u^2}}\right).$$

\begin{theorem}
Under the LCKS transformation:

$$\nu(x,y)\to\dfrac{\mu}{\lambda^2}\left(x^2+y^2-\frac{1}{e}(x^2-y^2)-\dfrac{2ixy\sqrt{1-e^2}}{e}\right).$$\vspace{-5mm}

\npgni So defining $\omega$ by $\dfrac{\mu}{\lambda^2}=\dfrac{\omega}{\tilde\lambda}$, if $\dfrac{C\omega}{2}=\dfrac{\mu}{\lambda}$ and $C\tilde\lambda=2\lambda$, a simple calculation yields under LCKS\vspace{-5mm}

$$(R+iS)(x, y)\to C(\tilde R+i \tilde S)(x, y),$$\vspace{-8mm}

\npgni $C$ is a change of time scale here. Needless to say in 2-dimensions

$$\boldsymbol\nabla\tilde R.\boldsymbol\nabla\tilde S=0,\spc 2^{-1}\left(|\boldsymbol\nabla\tilde S|^2-|\boldsymbol\nabla\tilde R|^2\right)+V=\tilde\lambda\omega,$$\vspace{-5mm}

\npgni where $V=\dfrac{1}{2}\omega^2r^2$.
\end{theorem}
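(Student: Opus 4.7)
The plan is three stages: first, establish the pointwise identity $\nu\mapsto 2\tilde u^2$ under LCKS; second, use this together with the two scaling hypotheses to deduce the transformation of $(R+iS)$ up to an additive constant; third, read off the semi-classical equations from the general ansatz $(\ast)$ applied to $\psi_n^{\textrm{HO}}$ in the Bohr limit.

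For stage one, I would substitute $\zeta=w^2$ with $w=x+iy$, so the Kepler Cartesian coordinates become $(x^2-y^2,\,2xy)$ and the Kepler radius becomes $|\zeta|=|w|^2=x^2+y^2$. Plugging directly into the formula for $\nu(\mathbf r)$ yields the displayed transformation. Using the hypothesis $\mu/\lambda^2=\omega/\tilde\lambda$ one then factors the result as
\begin{equation*}
\nu\;=\;\frac{2\omega}{\tilde\lambda}\Bigl[(1-\alpha)\tfrac{x^2}{2}+(1+\alpha)\tfrac{y^2}{2}-i\beta xy\Bigr]=2\tilde u^2,
\end{equation*}
provided one identifies $\alpha:=1/e$ and $\beta:=\sqrt{1-e^2}/e$. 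A direct check gives $\alpha^2-\beta^2=1/e^2-(1-e^2)/e^2=1$, the consistency condition already imposed on the harmonic oscillator state. In particular $\sqrt{1-4/\nu}=\sqrt{1-2/\tilde u^2}$.

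For stage two, substitute $\nu=2\tilde u^2$ into $(R+iS)=-\mu r/\lambda+\lambda f(\nu)$: the first term pulls back to $-(C\omega/2)(x^2+y^2)=C\cdot(-\tfrac12\omega r^2)$ by the hypothesis on $C$, giving the leading piece of $C(\tilde R+i\tilde S)$. For the $f$-term the key algebraic move is the conjugate identity $(1-\sqrt{1-2/\tilde u^2})(1+\sqrt{1-2/\tilde u^2})=2/\tilde u^2$, which converts $-2\ln(1-\sqrt{1-2/\tilde u^2})$ into $4\ln\tilde u+2\ln(1+\sqrt{1-2/\tilde u^2})-2\ln 2$; collecting terms and using $2\lambda=C\tilde\lambda$ then produces the stated formula up to an additive constant $-3\lambda\ln 2$, which is absorbed by the Laplace/normalisation convention invoked earlier in the paper.

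Stage three is essentially automatic. Since $H\psi^{\textrm{HO}}_n=\omega\hbar(n+1)\psi^{\textrm{HO}}_n$ with $n\hbar=\tilde\lambda$ held fixed as $n\nearrow\infty$, the limiting eigenvalue is $\tilde\lambda\omega$, so applying the general framework $(\ast)$ directly to $\psi^{\textrm{HO}}_n$ yields $\boldsymbol\nabla\tilde R\cdot\boldsymbol\nabla\tilde S=0$ and $\tfrac12\bigl(|\boldsymbol\nabla\tilde S|^2-|\boldsymbol\nabla\tilde R|^2\bigr)+\tfrac12\omega^2r^2=\tilde\lambda\omega$. The main obstacle is the bookkeeping in stage two: threading the many constants $\lambda,\tilde\lambda,\mu,\omega,C,e,\alpha,\beta$ through the substitutions and reconciling the two equivalent logarithmic forms via the conjugate identity. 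Once $\nu=2\tilde u^2$ has been secured, everything else is mechanical.
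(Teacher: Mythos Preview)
Your proposal is correct and is precisely the ``simple calculation'' the paper alludes to but does not write out: the paper gives no explicit proof of this theorem beyond the phrases ``a simple calculation yields'' and ``Needless to say,'' so your three-stage argument (the key substitution $\nu\mapsto 2\tilde u^2$ with $\alpha=1/e$, $\beta=\sqrt{1-e^2}/e$, the conjugate-identity rewrite of the logarithms giving the additive constant $-3\lambda\ln 2$, and the appeal to $(\ast)$ for the HO state) fills in exactly the intended details.
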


\npgni The Levi-Civita-Kustaanheimo-Stieffel transformation in 2-dimensions gives a hint of the following result in 3-dimensions.

\begin{theorem}
The isotropic harmonic oscillator elliptic state is, up to normalisation,

$$\psi_n^{\textnormal{HO}}=\textrm{exp}\left(-\frac{\omega r^2}{2\tilde\lambda}\right)H_n(\sqrt{n}\tilde u),$$\vspace{-3mm}

\npgni where $r^2=x^2+y^2+z^2$, $\tilde u=\sqrt{\dfrac{\omega}{\tilde\lambda}\left((1-\alpha)\dfrac{x^2}{2}+(1+\alpha)\dfrac{y^2}{2}-i\beta xy\right)}$, $\alpha=\dfrac{1}{e}$,\vspace{-5mm} 

\npgni $\beta=\dfrac{\sqrt{1-e^2}}{e}$, $\alpha^2-\beta^2=1$, $e$ being the eccentricity of the ellipse with equations

$$z=0,\spc\frac{x^2}{p^2}+\frac{y^2}{q^2}=1,$$\vspace{-8mm}

\npgni as above. $H_n$ being a Hermite polynomial. 

\npgni For quantum Hamiltonian $H=2^{-1}\vct P^2+2^{-1}\omega^2\vct Q^2$,

$$H\psi_n^{\textnormal{HO}}=\omega\left(\tilde\lambda+\frac{3}{2}\hbar\right)\psi_n^{\textnormal{HO}},\spc\tilde\lambda=n\hbar=n\epsilon^2.$$
\end{theorem}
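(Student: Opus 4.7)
The plan is to reduce this three-dimensional eigenvalue problem to the two-dimensional case of Theorem 3.3 tensored with a one-dimensional harmonic-oscillator ground state in $z$. The crucial structural observation is that $\tilde u$ depends only on $(x,y)$, so that the $z$-coordinate enters $\psi_n^{\textrm{HO}}$ only through the Gaussian prefactor. Since the isotropic oscillator Hamiltonian $H = -\tfrac{\hbar^2}{2}\nabla^2 + \tfrac{1}{2}\omega^2 r^2$ splits additively into $(x,y)$- and $z$-parts, the eigenvalue equation likewise decouples and both pieces can be handled independently.

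Concretely, I would factorise $\psi_n^{\textrm{HO}}(x,y,z) = \chi_n(x,y)\,\phi_0(z)$, where $\chi_n(x,y) = \exp(-\omega(x^2+y^2)/(2\hbar))\,H_n(\sqrt{n}\,\tilde u(x,y))$ is exactly the 2D elliptic state of Theorem 3.3 and $\phi_0(z) = \exp(-\omega z^2/(2\hbar))$ is the standard one-dimensional ground state. Writing $H = H_{xy} + H_z$ with $H_{xy} = -\tfrac{\hbar^2}{2}(\partial_x^2+\partial_y^2) + \tfrac{1}{2}\omega^2(x^2+y^2)$ and $H_z = -\tfrac{\hbar^2}{2}\partial_z^2 + \tfrac{1}{2}\omega^2 z^2$, Theorem 3.3 immediately gives $H_{xy}\chi_n = \omega\hbar(n+1)\chi_n = \omega(\tilde\lambda + \hbar)\chi_n$ after substituting $\tilde\lambda = n\hbar$, while the textbook 1D computation gives $H_z\phi_0 = \tfrac{1}{2}\hbar\omega\,\phi_0$. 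Summing the two eigenvalues yields $\omega(\tilde\lambda + \tfrac{3}{2}\hbar)$, matching the claim.

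There is no genuine obstacle beyond the 2D result already granted. The one point requiring care is a convention check on the Gaussian exponent: the stated form $\exp(-\omega r^2/(2\tilde\lambda))$ with $r^2 = x^2+y^2+z^2$ must be read in the same way as the 2D statement of Theorem 3.3 (which carries an explicit factor of $n$ inside the exponent), so that, using $\tilde\lambda = n\hbar$, the Gaussian literally factorises as $\exp(-\omega(x^2+y^2)/(2\hbar))\exp(-\omega z^2/(2\hbar))$. Once this convention is fixed, the theorem follows from nothing more than the $z$-independence of $\tilde u$, the separability of the isotropic potential, and the $n+1$ eigenvalue of the 2D elliptic oscillator state. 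In particular, no further use of the Levi--Civita--Kustaanheimo--Stieffel transformation is needed in three dimensions, consistent with the author's earlier remark that no such LCKS transformation exists in that case; the three-dimensional result here rides for free on the two-dimensional one.
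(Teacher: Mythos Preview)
Your approach matches the paper's: both factor $\psi_n^{\textrm{HO}}=\psi_2(x,y)\,\psi_1(z)$ with $\psi_1(z)=\exp\!\left(-\omega z^2/(2\tilde\lambda)\right)$ the one-dimensional ground state, observe that $\boldsymbol\nabla\psi_1\cdot\boldsymbol\nabla\psi_2=0$ (equivalently that $H=H_{xy}+H_z$ separates), and then add the two-dimensional eigenvalue $\omega\hbar(n+1)$ to the one-dimensional ground-state energy $\tfrac12\hbar\omega$. The only cosmetic difference is that the paper re-derives the 2D eigenvalue in situ, using the harmonicity $\Delta\tilde u=0$ together with the Hermite recurrence, rather than simply citing the 2D result stated earlier in \S3.2; and note that in the paper's numbering that 2D result is the unlabelled display before Theorem~3.3, not Theorem~3.3 itself.
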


\begin{proof}
First $\psi_1(z)=\textrm{exp}\left(-\dfrac{\omega z^2}{2\tilde\lambda}\right)$ is the ground state of the harmonic oscillator in 1-dimension and if

$$\psi_2(x,y)=\textrm{exp}\left(-\frac{\omega(x^2+y^2)}{2\tilde\lambda}\right)H_n(\sqrt{n}\tilde u),$$

$$\boldsymbol\nabla\psi_1.\boldsymbol\nabla\psi_2=0, \;\;\textrm{so}\;\; \Delta(\psi_1\psi_2)=\psi_1(\Delta\psi_2)+(\Delta\psi_1)\psi_2.$$

\npgni Further $\Delta\tilde u=0$, so $\Delta H_n(\sqrt{n}\tilde u)=nH_n''(\sqrt{n}\tilde u)|\boldsymbol\nabla\tilde u|^2$. Also,

\begin{flushleft}
\;\;$\Delta\psi_2 = \Delta\left(\textrm{exp}\left(-\dfrac{\omega(x^2+y^2)}{2\tilde\lambda}\right)H_n(\sqrt{n}\tilde u)\right)$
\end{flushleft}

\begin{flushleft}
$\spc\;\;=\textrm{exp}\left(-\frac{\omega(x^2+y^2)}{2\tilde\lambda}\right)\Delta H_n(\sqrt{n}\tilde u)+H_n(\sqrt{n}\tilde u)\Delta\textrm{exp}\left(-\frac{\omega(x^2+y^2)}{2\tilde\lambda}\right)$
\end{flushleft}

\begin{flushleft}
$\spc\spc\spc\spc\spc\spc\spc+2\boldsymbol\nabla\left(\textrm{exp}\left(-\frac{\omega(x^2+y^2)}{2\tilde\lambda}\right)\right).\boldsymbol\nabla H_n(\sqrt{n}\tilde u).$
\end{flushleft}

\npgni A computation using the properties of Hermite polynomials yields the result

$$-\frac{\hbar^2}{2}\Delta\psi_n^{\textrm{HO}}+\frac{\omega^2}{2}(x^2+y^2+z^2)\psi_n^{\textrm{HO}}=\omega\left(\tilde\lambda+\frac{3}{2}\hbar\right)\psi_n^{\textrm{HO}},\spc\tilde\lambda=n\hbar.$$
\end{proof}\vspace{-5mm}

\npgni As an application we spell out the connection with the restricted 3-body problem for the Trojan asteroids where we uncover two new constants of motion/integration for the usual linearised problem. It is implicit in the next section that the quantum Hamiltonian, $\hat{H}$, in the rotating frame for an asteroid in the Sun/Jupiter 2-body problem is the Jacobi integral after quantisation. This follows from the fact that, if $U$ denotes the unitary transformation to the rotating frame, then

$$U^{\dag}\left(i\hbar\frac{\partial}{\partial t}\right)U=i\hbar\frac{\partial}{\partial t}+\omega\hat{L_{3}},\;\;\;\;\;(\textrm{HJI}),$$

\npgni where $\omega$ is the angular velocity of the axes and  $\hat{L_{3}}$ is the angular momentum operator normal to the plane of orbits. The final term which appears in $\hat{H}$ couples the cartesian coordinates of the asteroid, as you will see. c.f. Whittaker (Ref. [29]).

\subsection{The Trojan Asteroids}\vspace{-5mm}

\npgni As is well known, for the Sun(S)/Jupiter(J) system, Lagrange's equlateral triangle solution of the 3-body problem yields two points of stable equilibrium, $\mathscr{L}_{4,5}$, for a Trojan asteroid (A), $60^{\circ}$ ahead or behind J and co-orbital with it. Here the Sun and Jupiter are rotating about their mass centre, $O$, with angular speed $\omega$, SAJ forming an equilateral triangle with side length $a_{0}$. The equations of motion can be deduced from the Hamiltonian in the rotating frame, usually referred to as $K$, the Jacobi integral, which for $\overset{\rightharpoonup}{OA}=(\tilde x,\tilde y)$ read:-

$$\ddot{\tilde x}-2\omega\dot{\tilde y}-\omega^{2}\tilde x=-\frac{\mu_{1}(\tilde x+r_{1})}{((\tilde x+r_{1})^{2}+{\tilde y}^{2})^{3/2}}-\frac{\mu_{2}(\tilde x-r_{2})}{((\tilde x-r_{2})^{2}+{\tilde y}^{2})^{3/2}},$$

$$\ddot{\tilde y}+2\omega\dot{\tilde x}-\omega^{2}\tilde y=-\frac{\mu_{1}\tilde y}{((\tilde x+r_{1})^{2}+{\tilde y}^{2})^{3/2}}-\frac{\mu_{2}\tilde y}{((\tilde x-r_{2})^{2}+{\tilde y}^{2})^{3/2}},$$

\npgni $\overset{\rightharpoonup}{O\tilde x} \parallel \overset{\rightharpoonup}{OJ}$, where $\mu_{1}$ is the gravitational mass of the Sun, $\mu_{2}$ the gravitational mass of Jupiter, $r_{1}$ the distance of the Sun to $O$ and $r_{2}$ the distance of Jupiter to $O$. $(\tilde x,\tilde y)$ is related to the fixed frame coordinates of $\overset{\rightharpoonup}{OA}$ by
 
$$x=\tilde x \cos(\omega t)-\tilde y \sin(\omega t)\spc,\spc y=\tilde x \sin(\omega t)+\tilde y \cos(\omega t)$$

\npgni and ${\omega}^{2}=\dfrac{(\mu_{1}+\mu_{2})}{(r_{1}+r_{2})^{3}}$,  $r_{1}+r_{2}=a_{0}$. These equations admit a fixed point solution,

$$(\tilde x,\tilde y)=(c,d)=\left(\frac{r_{2}-r_{1}}{2},\pm \frac{\sqrt{3}}{2}(r_{1}+r_{2})\right),$$

\npgni for the asteroid equilibrium points $\mathscr{L}_{4,5}$.

\npgni An investigation of the stability near these points for $\tilde x=c+\delta(t)$ and $\tilde y = d+\epsilon(t)$ leads to the equations:

$$\ddot{\delta}-2\omega \dot{\epsilon}-\frac{3}{4}{\omega}^2\delta-{\Omega}^2\epsilon=0,$$

$$\ddot{\epsilon}+2\omega \dot{\delta}-\frac{9}{4}{\omega}^2\epsilon-{\Omega}^2\delta=0,$$

\npgni where ${\Omega}^2=\dfrac{3\sqrt{3}(\mu_{1}-\mu_{2})}{4(r_{1}+r_{2})^3}$, given that $\mu_{1}\ge\mu_{2}$.

\npgni These equations admit a constant of the motion, related to the Jacobi constant,

$$\tilde C=\frac{1}{2}({\dot{\delta}^2}+{\dot{\epsilon}^2})-\frac{3}{8}{\omega}^2({\delta}^2+3{\epsilon}^2)-\Omega^2\epsilon\delta.$$

\npgni This suggests that we look at a rotation of coordinates: $(\delta,\epsilon)\rightarrow(X,Y)$ via

$$\delta=X\cos \gamma - Y\sin \gamma \spc;\spc \epsilon=X\sin \gamma + Y\cos \gamma.$$

\npgni Using the above constant and requiring the $XY$ term to vanish leads to the condition,

$$\tan 2\gamma=-\dfrac{\sqrt{3}(\mu_{1}-\mu_{2})}{\mu_{1}+\mu_{2}},$$

\npgni and a constant of the motion in the $(X,Y)$ coordinates,

$$C_{0}=\frac{1}{2}({\dot{X}^2}+{\dot{Y}^2})-\omega_{X}^2X^2-\omega_{Y}^2Y^2,$$

\npgni where

$$\omega_{X}^2=\left(\dfrac{3}{4}-\left(\dfrac{3}{8}+\dfrac{2\Omega^4}{3\omega^4}\right)\cos 2\gamma\right)\omega^2,$$

$$\omega_{Y}^2=\left(\dfrac{3}{4}+\left(\dfrac{3}{8}+\dfrac{2\Omega^4}{3\omega^4}\right)\cos 2\gamma\right)\omega^2.$$

\npgni We note here that $\omega_{X}^2+\omega_{Y}^2=\dfrac{3}{2}\omega^2$ and $\omega_{X}^2\omega_{Y}^2=\dfrac{27\mu_{1}\mu_{2}}{16(\mu_{1}+\mu_{2})^2}\omega^4$.

\npgni Moreover $X$ and $Y$ satisfy

$$\ddot{X}-2\omega\dot{Y}-2\omega_{X}^2X=0\spc;\spc\ddot{Y}+2\omega \dot{X}-2\omega_{Y}^2Y=0.$$

\npgni Assuming solutions of the form $X=A\textrm{e}^{\lambda t}$ and $Y=B\textrm{e}^{\lambda t}$ leads to

$$\lambda^4+\omega^2\lambda^2+4\omega_{X}^2\omega_{Y}^2=0,$$

\npgni with the solutions taking the form $X=2\omega\lambda C\textrm{e}^{\lambda t}$,  $Y=(\lambda^2-2\omega_{X}^2)\textrm{e}^{\lambda t}$, for a constant $C$.

\npgni Given that there are four distinct roots for $\lambda$ the general solution for $X$ and $Y$ is

$$X=2\omega {\sum_{\textrm{i}=1}^{4}}\lambda_\textrm{i}C_\textrm{i}\textrm{e}^{\lambda_\textrm{i}t}\spc;\spc Y={\sum_{\textrm{i}=1}^{4}}(\lambda_\textrm{i}^2-2\omega_{X}^2)C_\textrm{i}\textrm{e}^{\lambda_\textrm{i}t}$$

\npgni If we suppose further that all the roots are imaginary i.e. $\lambda_{1}=\alpha i$, $\lambda_{2}=-\alpha i$, $\lambda_{3}=\beta i$ and $\lambda_{4}=-\beta i$ with $\alpha\ne\beta$ and appropriate $C_{i}$ there exist elliptical orbit solutions

$$X=4\omega\alpha C\sin \alpha t\spc;\spc Y=2(\alpha^2+2\omega_{X}^2)C\cos \alpha t$$

\npgni and the corresponding solution with $\beta$. We note here that, $\alpha^4-\omega^2\alpha^2+4\omega_{X}^2\omega_{Y}^2=0$ and the same for $\beta$. It is also worth noting that

$$\alpha=\dfrac{1}{\sqrt{2}}\left(1+\sqrt{1-\dfrac{27\mu_{1}\mu_{2}}{(\mu_{1}+\mu_{2})^2}}\right)^{\frac{1}{2}}\omega\;\;;\;\;\beta=\dfrac{1}{\sqrt{2}}\left(1-\sqrt{1-\dfrac{27\mu_{1}\mu_{2}}{(\mu_{1}+\mu_{2})^2}}\right)^{\frac{1}{2}}\omega,$$

\npgni provided that $\dfrac{\mu_{1}\mu_{2}}{(\mu_{1}+\mu_{2})^2}<\dfrac{1}{27}$, i.e. all solutions are oscillatory. This condition is\vspace{-3mm}

\npgni satisfied for the Sun/Jupiter system.

\npgni For now we consider just this elliptical orbit which requires special initial conditions which we come to later. Evidently, by inspection,

$$\ddot{X}=-\alpha^2X\spc\textrm{and}\spc\ddot{Y}=-\alpha^2Y$$\vspace{-8mm}

\npgni for the above root $\alpha>\beta$. So we have a 2-dimensional isotropic oscillator in this case with constants of motion,

$$2^{-1}{\dot{X}}^2+2^{-1}\alpha^2X^2=E_{X}\spc,\spc2^{-1}{\dot{Y}}^2+2^{-1}\alpha^2Y^2=E_{Y}$$

\npgni and an elliptical orbit with equation $\dfrac{X^2}{A^2}+\dfrac{Y^2}{B^2}=1$, where $A^2=4\omega^2\alpha^2C^2$,\vspace{-3mm}

\npgni $B^2=(\alpha^2-2\omega_{X}^2)^2C^2$, $C$ having to be small e.g. $C/r_{2}=O(\mu_{2}/(\mu_{1}+\mu_{2}))$.

\npgni The eccentricity of this ellipse, $e$, is given by,

$$e=\sqrt{1-\left(\dfrac{\textrm{min}(|A|,|B|)}{\textrm{max}(|A|,|B|)}\right)^2}=\sqrt{1-\dfrac{A^2}{B^2}},$$

\npgni in our case. So knowing $C$ (the strength of perturbation), the above theorem gives $\mathbf{\nabla} R$ and $\mathbf{\nabla} S$ explicitly for this linearised problem guaranteeing that the semi-classical spiral orbit converges, in this case, to the above elliptical orbit as it should. Of course, we could choose $\beta$ instead of $\alpha$ but we prefer just now to concentrate on $\alpha$, the near resonance solution for the Sun/Jupiter system,

$$\alpha=0.996758\,\omega\spc,\spc\beta=0.080464\,\omega.$$\vspace{-8mm}

\npgni If we linearise the $Z$ problem, we also obtain for small $Z$,

$$\ddot{Z}=-\omega ^2Z$$\vspace{-8mm}

\npgni so for the above $C$ and choice of $\alpha$, correct to first order in $\mu_{2}/(\mu_{1}+\mu_{2})$, formally,

$$\ddot{X}\cong-\omega ^2X\spc,\spc\ddot{Y}\cong-\omega ^2Y\spc,\spc\ddot{Z}\cong-\omega ^2Z$$

\npgni i.e. a 3-dimensional isotropic oscillator for which again we know $R$ and $S$ explicitly. For the solution correct to first order, needless to say we have the full SU(3) symmetry with 8 constants of the motion corresponding to the SU(3) generators, 3 energies, 3 components of angular momentum and 2 others (Ref. [24]). This tempted us to look for further constants of the motion for the linearised Trojan asteroid problem, in general.

\begin{theorem}(Hidden Constants)

\npgni For the system

$$\ddot{X}-2\omega \dot{Y}-2\omega_{X}^2X=0\spc;\spc\ddot{Y}+2\omega \dot{X}-2\omega_{Y}^2Y=0,$$

\npgni where $\omega_{X}^2+\omega_{Y}^2=\dfrac{3}{2}\omega^2$ and $\omega_{X}^2\omega_{Y}^2=\dfrac{27\mu_{1}\mu_{2}}{16(\mu_{1}+\mu_{2})^2}\omega^4$, in addition to the Jacobi constant,\vspace{3mm} $\frac{1}{2}({\dot{X}^2}+{\dot{Y}^2})-\omega_{X}^2X^2-\omega_{Y}^2Y^2=C_{0}$, there are two further constants of the motion:-

$$\omega_{Y}^2\{(\alpha^2+2\omega_{X}^2)\dot{X}-2\omega\alpha^2Y\}^2+\alpha^2\omega_{X}^2\{2\omega\dot{Y}+(\alpha^2+2\omega_{X}^2)X\}^2=D_{0}$$

\npgni and similarly for $\beta$; $\alpha$, $\beta$ being the real roots of $\alpha^4-\omega^2\alpha^2+4\omega_{X}^2\omega_{Y}^2=0$, provided that\vspace{3mm} $\dfrac{\mu_{1}\mu_{2}}{(\mu_{1}+\mu_{2})^2}<\dfrac{1}{27}$.

\end{theorem}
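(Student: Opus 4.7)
The plan is a direct verification: introduce two well-chosen linear combinations of $X$, $\dot X$, $Y$, $\dot Y$ that turn the time evolution generated by the equations of motion into a simple harmonic-oscillator pair, and show that $D_0$ is the energy of that auxiliary pair. Concretely, I would set
$$P = (\alpha^2 + 2\omega_X^2)\dot X - 2\omega\alpha^2 Y, \qquad Q = 2\omega\dot Y + (\alpha^2 + 2\omega_X^2)X,$$
so that the proposed constant reads $D_0 = \omega_Y^2 P^2 + \alpha^2\omega_X^2 Q^2$. The strategy is then to show that on solutions $\dot P$ is a multiple of $Q$ and $\dot Q$ is a multiple of $P$, with coefficients conspiring so that $\dot D_0 = 0$.

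First I would differentiate $P$ and substitute $\ddot X = 2\omega\dot Y + 2\omega_X^2 X$ from the first equation of motion; the four resulting terms collapse immediately to give $\dot P = 2\omega_X^2 Q$, without any algebraic tricks. The substantive step is differentiating $Q$ and substituting $\ddot Y = -2\omega\dot X + 2\omega_Y^2 Y$ from the second equation, which produces the expression $\dot Q = \bigl[(\alpha^2 + 2\omega_X^2) - 4\omega^2\bigr]\dot X + 4\omega\omega_Y^2 Y$. I would then claim $\dot Q = -(2\omega_Y^2/\alpha^2)P$; matching the $\dot X$ and $Y$ coefficients gives two identities, each of which reduces (after using the trace relation $\omega_X^2 + \omega_Y^2 = \tfrac{3}{2}\omega^2$ to rewrite $\alpha^2 + 2\omega_X^2 - 4\omega^2$ as $\alpha^2 - \omega^2 - 2\omega_Y^2$) to exactly the characteristic equation $\alpha^4 - \omega^2\alpha^2 + 4\omega_X^2\omega_Y^2 = 0$. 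Both identities thus hold simultaneously, which is the content of the miracle.

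Once these two relations are in hand, conservation is automatic:
$$\dot D_0 = 2\omega_Y^2 P\dot P + 2\alpha^2\omega_X^2 Q\dot Q = 4\omega_X^2\omega_Y^2 PQ - 4\omega_X^2\omega_Y^2 PQ = 0.$$
The second hidden constant is obtained by repeating the argument verbatim with $\alpha$ replaced by $\beta$, since $\beta$ satisfies the same quartic; the hypothesis $\mu_1\mu_2/(\mu_1+\mu_2)^2 < 1/27$ is used only to ensure both roots are real so that $P$, $Q$ and $D_0$ are real quantities.

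The main obstacle is the coefficient-matching in step three, where the precise numerical factors in $P$ and $Q$ must be exactly those making the characteristic equation the sole algebraic input. Conceptually, $(P,Q)$ vanishes identically on pure $\alpha$-normal-mode solutions (a short check using the explicit mode shapes $X = 2\omega\lambda C e^{\lambda t}$, $Y = (\lambda^2 - 2\omega_X^2)Ce^{\lambda t}$ with $\lambda = i\alpha$), so $D_0$ in fact measures the energy stored in the complementary $\beta$-mode; the second constant symmetrically measures the $\alpha$-mode energy, which together with the Jacobi constant $C_0$ yields the normal-mode decomposition of the linearised flow.
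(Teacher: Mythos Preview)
Your proof is correct and complete; the computations you sketch all go through exactly as you describe (in particular, matching the $\dot X$ coefficients in $\dot Q=-(2\omega_Y^2/\alpha^2)P$ does reduce to the characteristic equation after multiplying through by $\alpha^2$ and cancelling $-2\omega_Y^2\alpha^2$ from both sides).

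However, your route is genuinely different from the paper's. The paper proceeds by writing the general solution as a $4\times4$ matrix times the vector $(C_1e^{\lambda_1 t},\dots,C_4e^{\lambda_4 t})^{\mathrm T}$, inverting that matrix to express each $C_ie^{\lambda_i t}$ as a linear form in $X,\dot X,Y,\dot Y$, and then observing that $e^{i\alpha t}\cdot e^{-i\alpha t}=1$ manufactures a conserved quantity. Your argument instead verifies conservation directly by exhibiting $(P,Q)$ as a harmonic pair with $\dot P=2\omega_X^2Q$, $\dot Q=-(2\omega_Y^2/\alpha^2)P$, so that $D_0$ is its energy. Your approach is more elementary and self-contained (no matrix inversion, no appeal to the explicit general solution), and your closing remark that $(P,Q)$ annihilates the pure $\alpha$-mode explains \emph{a posteriori} what the paper's construction delivers \emph{a priori}: $D_0$ is, up to normalisation, the squared amplitude in the $\beta$-mode. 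The paper's method has the advantage of showing where $P$ and $Q$ come from rather than pulling them out of the air, but once the combinations are known your verification is cleaner.
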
\vspace{5mm}

\begin{proof}
(Outline). For the system defined above the solutions for $X$ and $Y$ together with their time derivatives can be expressed in matrix form:

$$\begin{pmatrix}{X}\\{\dot X}\\{Y}\\{\dot Y}\end{pmatrix}=\begin{pmatrix}2\omega\lambda_{1} & 2\omega\lambda_{2} & 2\omega\lambda_{3} & 2\omega\lambda_{4}\\2\omega\lambda_{1}^2 & 2\omega\lambda_{2}^2 & 2\omega\lambda_{3}^2 & 2\omega\lambda_{4}^2\\\lambda_{1}^2-2\omega_{X}^2 & \lambda_{2}^2-2\omega_{X}^2 & \lambda_{3}^2-2\omega_{X}^2 & \lambda_{4}^2-2\omega_{X}^2\\ \lambda_{1}(\lambda_{1}^2-2\omega_{X}^2) & \lambda_{2}(\lambda_{2}^2-2\omega_{X}^2) & \lambda_{3}(\lambda_{3}^2-2\omega_{X}^2) & \lambda_{4}(\lambda_{4}^2-2\omega_{X}^2) \end{pmatrix}\begin{pmatrix}{C_{1}\textrm{e}^{\lambda_{1}t}}\\{C_{2}\textrm{e}^{\lambda_{2}t}}\\{C_{3}\textrm{e}^{\lambda_{3}t}}\\{C_{4}\textrm{e}^{\lambda_{4}t}}\end{pmatrix}.$$

\npgni Inverting this matrix equation gives expressions for $\textrm{e}^{\lambda_{\textrm{i}}t},\;\textrm{i}=1,2,3,4$ in terms of $X,\dot{X},Y$\vspace{3mm}
\npgni and $\dot{Y}$. When $\dfrac{\mu_{1}\mu_{2}}{(\mu_{1}+\mu_{2})^2}<\dfrac{1}{27}$, $\lambda_{1}=\alpha i$, $\lambda_{2}=-\alpha i$, $\lambda_{3}=\beta i$ and $\lambda_{4}=-\beta i$, where $\alpha$\vspace{-3mm}
\npgni and $\beta$ are the real roots of $\alpha^4-\omega^2\alpha^2+4\omega_{X}^2\omega_{Y}^2=0$. Assuming that $\omega_{X}(\alpha^2-\beta^2)\ne0$,\vspace {-3mm}
\npgni constants of the motion follow from the fact that $\textrm{e}^{i\alpha t}\textrm{e}^{-i\alpha t}=1$ and similarly for $\beta$.

\end{proof}\vspace {-5mm}

\npgni Further details are given in (Ref. [28]) which also includes a generalisation of Lagranges' equilateral triangle solution of the 3-body problem and the semi-classical analysis showing the convergence of particles to the region centred on the Lagrange points $\mathscr{L}_{4,5}$. It is inconceivable that asteroids on the above orbit will not interact with the Hildian asteroids in $\frac{3}{2}$ resonance with Jupiter, particularly if the orbital eccentricity of the Hildian is $\left(\frac{2}{3}\right)^{-\frac{2}{3}}-1$ and it is at aphelion near Jupiter's orbit.

\npgni The condition for $\alpha$-isotropy i.e. isotropy in 2-dimensions with circular frequency $\alpha$, always assuming that $\omega_{X}(\alpha^2-\beta^2)\ne0$ reduces to:-

$$\beta(\alpha^2+2\omega_{X}^2)\dot{X}_{0}-2\omega\alpha^2\beta Y_{0}+2\omega_{X}^2(2\omega\dot{Y}_{0}+(\alpha^2+2\omega_{X}^2)X_{0})=0,$$\vspace{-8mm}

\npgni with a similar result for $\beta$-isotropy, obtained by replacing $\alpha$ by $\beta$. The constants of motion, for energy and angular momentum for $\alpha$-isotropy, are\vspace {-5mm}

$$E=E_{X}+E_{Y}=\dfrac{\alpha^2}{2}(A^2+B^2)=Q_{1},\;\;L^2=\alpha^2A^2B^2,\;\;E_{X}-E_{Y}=\dfrac{\alpha^2}{2}(B^2-A^2)=Q_{0},$$\vspace {-10mm}

\npgni and $Q_{XY}=\alpha^2XY+p_{X}p_{Y}=0$, with a similar result for the remaining $Q's$.\vspace{-3mm}

\npgni Needless to say the above analysis has applications to other physical problems, e.g. Foucault's pendulum.\vspace{-3mm}

\section{Newton's Revolving Orbits}\vspace {-5mm}

\npgni Building on the results of the previous sections we now discuss the semi-classical system defined by a given central potential, $V_0$, with the addition of an inverse cube radial force. Specifically we shall consider the systems defined by $V_K=V_0+\dfrac{(1-k^{2})L^2}{2r^2} $, $k\in\mathbb{R}$, for $V_0=-\dfrac{\mu}{r}$ and $V_0=\dfrac{1}{2}\omega^2r^2$, $r^2=x^2+y^2$. $L$ is the angular momentum for the system defined by $V_0$. This system was well known to Newton (Ref. [3]) and discussed in detail by Lynden-Bell (Ref. [13]). In essence the result derived for the classical motion is; if $r(\theta)$ describes the orbit under $V_0$ then $r\left(\dfrac{\theta}{k}\right)$ defines the orbit under $V_K$. The key to our semi-classical analysis will be the use of the Hamiltonian and a complex constant of the motion (which seems to have been given little attention in the literature) coupled with a simple transformation in the plane. We shall then observe that the semi-classical motion converges beautifully to the classical motion in the elliptical case as described above.\vspace{-2mm}

\subsection{Constants of the Motion}

Consider a particle of unit mass moving in the $(X,Y)$ plane and governed by the Hamiltonian\vspace{-5mm}

$$H=2^{-1}\vct P^2+V_0(\vct Q).$$\vspace{-8mm}

\npgni Using polar coordinates $(\rho, \phi)$ and defining $\vct P=(P_X,P_Y)=(\dot X,\dot Y)$ it can be shown that the systems above have constants of the motion, in time, defined respectively by:-

$$V_0=-\frac{\mu}{\rho}:\spc\frac{1}{2}(P_X^2+P_Y^2)-\frac{\mu}{\rho}=E,\spc E<0,$$

$$\frac{\cos\left(\frac{\phi}{2}\right)P_X+\sin\left(\frac{\phi}{2}\right)P_Y-i\sqrt{-2E}\cos\left(\frac{\phi}{2}\right)}{\cos\left(\frac{\phi}{2}\right)P_Y-\sin\left(\frac{\phi}{2}\right)P_X-i\sqrt{-2E}\sin\left(\frac{\phi}{2}\right)}=- i\frac{\sqrt{1-e}}{\sqrt{1+e}}.$$\vspace{-5mm}

\npgni $E=-\dfrac{\mu^2}{2\lambda^2}$, is the energy and $0<e<1$ is the eccentricity.

$$V_0=\frac{1}{2}\omega^2\rho^2:\spc\frac{1}{2}(P_X^2+P_Y^2)+\frac{1}{2}\omega^2\rho^2=E,\spc E>0,$$

$$\frac{P_X-i\omega X}{P_Y-i\omega Y}=-\frac{a}{b}i,\spc a,b\in\mathbb{R}^+.$$

\npgni $E=\frac{1}{2}(a^2+b^2)\omega^2$, is the energy for angular frequency $\omega$.

\npgni The two systems are connected by the LCKS transformation and corresponding time-change in 2-dimensions as described in section 3.

\npgni We now move to the $(x,y)$ plane and polar coordinates $(r,\theta)$ via the transformation $X=r\cos\left(\frac{\theta}{k}\right)$ and $Y=r\sin\left(\frac{\theta}{k}\right)$. If we identify $\vct p=(p_x,p_y)=(\dot x,\dot y)$ it is not difficult to show that

$$P_X=\frac{1}{r}(xp_x+yp_y)\cos\left(\frac{\theta}{k}\right)-\frac{L}{r}\sin\left(\frac{\theta}{k}\right),$$

$$P_Y=\frac{1}{r}(xp_x+yp_y)\sin\left(\frac{\theta}{k}\right)+\frac{L}{r}\cos\left(\frac{\theta}{k}\right),$$\vspace{-5mm}

\npgni where $L$ is the angular momentum of the underlying system defined by $V_0(\rho)$. We note that for $V_0=-\dfrac{\mu}{\rho}$,
$L=\lambda\sqrt{1-e^2}$ and for $V_0=\frac{1}{2}\omega^2\rho^2$, $L=ab\omega$. Using this transformation, the above set of constants of the motion become

$$\frac{1}{2}(p_x^2+p_y^2)-\frac{\mu}{r}+\frac{(1-k^2)L^2}{2r^2}=E,$$

$$\frac{(xp_x+yp_y)\cos\left(\frac{\theta}{2k}\right)-L\sin\left(\frac{\theta}{2k}\right)-i\sqrt{-2E}r\cos\left(\frac{\theta}{2k}\right)}{(xp_x+yp_y)\sin\left(\frac{\theta}{2k}\right)+L\cos\left(\frac{\theta}{2k}\right)-i\sqrt{-2E}r\sin\left(\frac{\theta}{2k}\right)}=-i\frac{\sqrt{1-e}}{\sqrt{1+e}}$$\vspace{-5mm}

\npgni and\vspace{-5mm}

$$\frac{1}{2}(p_x^2+p_y^2)+\frac{1}{2}\omega^2r^2+\frac{(1-k^2)L^2}{2r^2}=E,$$

$$\frac{(xp_x+yp_y)\cos\left(\frac{\theta}{k}\right)-L\sin\left(\frac{\theta}{k}\right)-i\omega r^2\cos\left(\frac{\theta}{k}\right)}{(xp_x+yp_y)\sin\left(\frac{\theta}{k}\right)+L\cos\left(\frac{\theta}{k}\right)-i\omega r^2\sin\left(\frac{\theta}{k}\right)}=-\frac{a}{b}i,$$\vspace{-8mm}

\npgni respectively.\vspace{-3mm}

\npgni Observing that each pair of simultaneous equations can be solved for $(p_x, p_y)\in\mathbb{C}^2$ it is natural to define $\vct Z=(Z_x,Z_y)\in\mathbb{C}^2$ and replace $\vct p$ by $\vct Z$ in the above equations. Solving each pair of equations leads to the solution\vspace{-3mm}

$$Z_x=(\alpha+i\beta)\frac{x}{r^2}-\frac{y}{r^2}\sqrt{\beta^2-\alpha^2+2r^2(E-V_0(r))-(1-k^2)L^2-2\alpha\beta i},$$\vspace{-10mm}

$$Z_y=(\alpha+i\beta)\frac{y}{r^2}+\frac{x}{r^2}\sqrt{\beta^2-\alpha^2+2r^2(E-V_0(r))-(1-k^2)L^2-2\alpha\beta i},$$\vspace{-10mm}

\npgni where, for:-\vspace{-5mm}

$$V_0=-\frac{\mu}{r},\;\alpha=\frac{eL\sin\left(\frac{\theta}{k}\right)}{1+e\cos\left(\frac{\theta}{k}\right)},\;\beta=\frac{1}{\lambda}\left(\mu r-\frac{L^2}{1+e\cos\left(\frac{\theta}{k}\right)}\right),\; L=\lambda\sqrt{1-e^2},\;E=-\frac{\mu^2}{2\lambda^2},$$\vspace{-8mm}

\npgni and for\vspace{-5mm}

$$V_0=\frac{1}{2}\omega^2r^2,\;\alpha=\frac{(b^2-a^2)L\sin\left(\frac{2\theta}{k}\right)}{(a^2+b^2)+(b^2-a^2)\cos\left(\frac{2\theta}{k}\right)},\;\beta=\omega r^2-\frac{2abL}{(a^2+b^2)+(b^2-a^2)\cos\left(\frac{2\theta}{k}\right)},$$\vspace{-5mm}

\npgni $L=ab\omega\textrm{ and }E=\dfrac{1}{2}(a^2+b^2)\omega^2.$

\npgni Moreover, if we define $\vct Z=\boldsymbol\nabla S-i\boldsymbol\nabla R$, as in the previous sections we can extract $\boldsymbol\nabla S$ and $\boldsymbol\nabla R$. In each case, for the appropriate $V_0$, $E$, $L$, $\alpha$ and $\beta$,

$$\boldsymbol\nabla S=\left(\frac{1}{r^2}(\alpha x-\tilde uy),\frac{1}{r^2}(\alpha y+\tilde ux)\right),$$

$$\boldsymbol\nabla R=\left(-\frac{1}{r^2}(\beta x-\tilde vy),-\frac{1}{r^2}(\beta y+\tilde vx)\right),$$

\npgni where $\tilde u+i\tilde v=\sqrt{\beta^2-\alpha^2+2r^2(E-V_0(r))-(1-k^2)L^2-2\alpha\beta i}$. Explicitly\vspace{-3mm}

\begin{multline*}
\tilde u= \left\{\frac{1}{2}(\beta^2-\alpha^2+2r^2(E-V_0(r))-(1-k^2)L^2)\right. \\ +\left.\frac{1}{2}\sqrt{(\beta^2-\alpha^2+2r^2(E-V_0(r))-(1-k^2)L^2)^2+4\alpha^2\beta^2}\right\}\tph
\end{multline*}\vspace{-12mm}

\npgni and

\npgni $\;\;\;\tilde v=-\dfrac{\alpha\beta}{\tilde u}$

\begin{multline*}
\tilde v= \left\{-\frac{1}{2}(\beta^2-\alpha^2+2r^2(E-V_0(r))-(1-k^2)L^2)\right. \\ +\left.\frac{1}{2}\sqrt{(\beta^2-\alpha^2+2r^2(E-V_0(r))-(1-k^2)L^2)^2+4\alpha^2\beta^2}\right\}\tph
\end{multline*}\vspace{-8mm}

\npgni Needless to say, it is easy to show that, in each case,

$$2^{-1}(|\boldsymbol\nabla S|^2-|\boldsymbol\nabla R|^2)+V=E,\spc\boldsymbol\nabla R.\boldsymbol\nabla S=0,$$\vspace{-8mm}

\npgni i.e. we have found solutions to the all important ‘semi- classical’ equations $(\ast)$.

\subsection{Semi-classical and Classical Orbits}\vspace{-5mm}

\npgni As defined previously the semi-classical orbit for these systems is governed by the dynamical equation\vspace{-3mm}

$$\frac{d}{dt}\vct X_t^0=(\boldsymbol\nabla S+\boldsymbol\nabla R)(\vct X_t^0).$$\vspace{-8mm}

\npgni We have also seen that by virtue of its construction the classical orbit is attained in the infinite time limit as $R$ reaches its maximum at which point $\boldsymbol\nabla R=\vct 0$. Using the expression for $\boldsymbol\nabla R$ above, this occurs when\vspace{-3mm}

$$\beta x-\tilde vy=0\textrm{ and }\beta y+\tilde vx=0.$$\vspace{-10mm}

\npgni Assuming $x$ and $y$ are not simultaneously 0 i.e. the orbit does not reach the origin in finite time, these can be guaranteed if $\beta^2+\tilde v^2=0\implies\beta=0\textrm{and} \tilde v=0$ simultaneously.

\npgni In the case of $V_0=-\dfrac{\mu}{r}$,\vspace{-5mm}

$$\beta=0\implies\frac{\ell}{r}=1+e\cos\left(\frac{\theta}{k}\right),$$\vspace{-8mm}

\npgni where $\ell=\dfrac{L^2}{\mu}=\dfrac{\lambda^2(1-e^2)}{\mu}$.

\npgni In the case of $V_0=\dfrac{1}{2}\omega^2r^2$,

$$\beta=0\implies r^2=\frac{2a^2b^2}{(a^2+b^2)+(b^2-a^2)\cos\left(\frac{2\theta}{k}\right)}$$\vspace{-5mm}

\npgni Using the expression for $\tilde v$ we also see that $\tilde v = 0$ when $\beta = 0$ i.e. on the classical orbit. These results beautifully display the result of Newton: if $r(\theta)$ describes the orbit under $V_0$ then $r\left(\dfrac{\theta}{k}\right)$ defines the orbit under $V_K$.

\subsection{The Significance of}\vspace{-8.5mm}

\hspace{44mm} $u\;and\;\alpha$\vspace{-3mm}

\npgni Using the dynamical equation, we see that the polar coordinates $(r,\theta)$ satisfy particularly simple
equations:\vspace{-4mm}

$$\frac{dr}{dt}=\frac{\alpha-\beta}{r},\spc\frac{d\theta}{dt}=\frac{\tilde u-\tilde v}{r^2}.$$\vspace{-7mm}

\npgni On the classical orbit $(\beta, \tilde v)=(0,0).$ In addition, $\tilde u=kL$, so that

$$\frac{d\theta}{dt}=\frac{kL}{r^2}\implies r^2\frac{d\theta}{dt}=kL$$

\npgni the angular momentum. From this we deduce that $\dfrac{kL}{r}\dfrac{dr}{d\theta}=\alpha$, which is easy to verify in both cases described above.

\npgni The figures above show some simulations of semi-classical orbits which clearly display the convergence to the classical paths.

\begin{figure}
    \centering
    \includegraphics[scale=0.85]{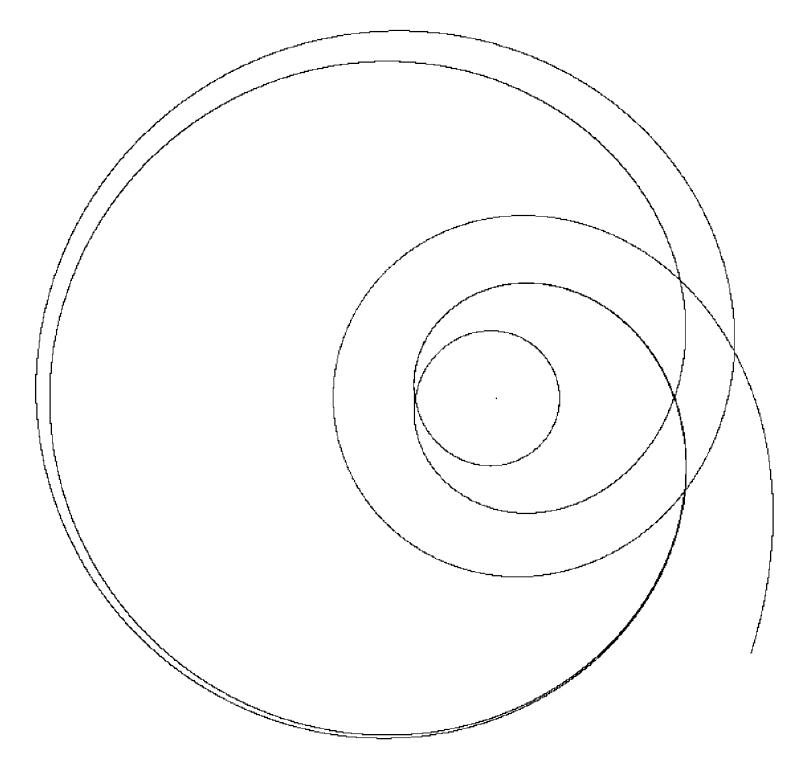}\spc\spc
    \includegraphics[scale=0.85]{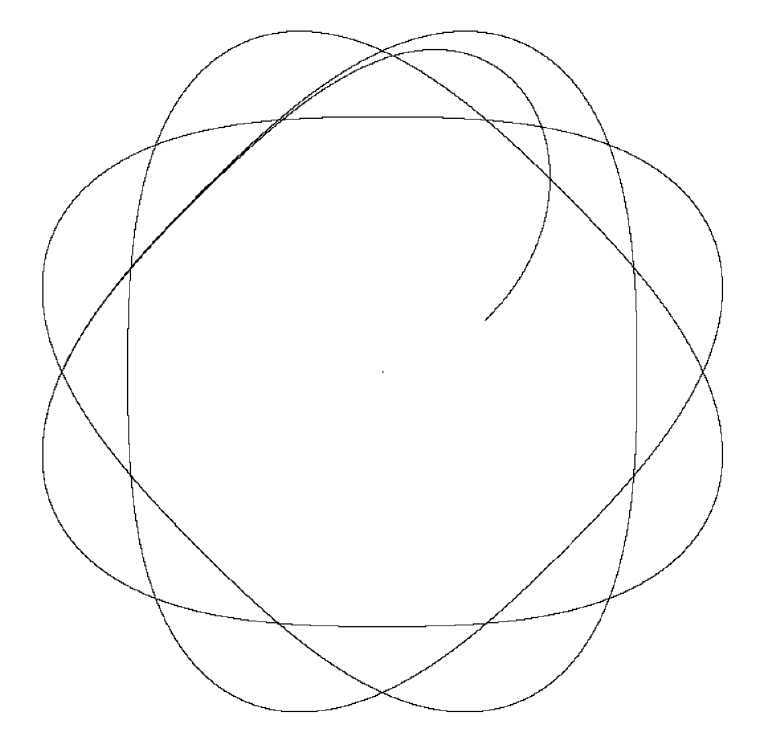}
    $$V_{0}=-\frac{\mu}{r}\;:\;k=3\spc\spc\spc\spc\spc\; V_{0}=\dfrac{1}{2}\omega^2r^2\;:\;k=\frac{3}{4}$$\vspace{-3mm}
    Figure 3.
\end{figure}\vspace{-3mm}

\npgni We conclude this section with an application of revolving orbits which comes from general relativity. Here a simple correction to the Newtonian force law accounts for the perihelion precession of a planet, such as Mercury, orbiting a star. The simplest model gives the corrected potential $V(r)=-\dfrac{\mu}{r}-\dfrac{3\mu^{2}}{c^{2}r^{2}}$, where $c$ is the speed of light. We can identify this with a revolving orbit, with $k=\sqrt{1+\dfrac{6\mu^{2}}{c^{2}L^{2}}}$. It is then easy to deduce that the perihelion of the orbit precesses, approximately, by an angle of $\dfrac{6\pi{\mu}^{2}}{c^{2}L^{2}}$ for each turn of the orbit. In the setting of our semi-classical revolving orbits the convergence of an orbit of this type is illustrated in Figure 4 below.

\begin{center}
\includegraphics[scale=0.8]{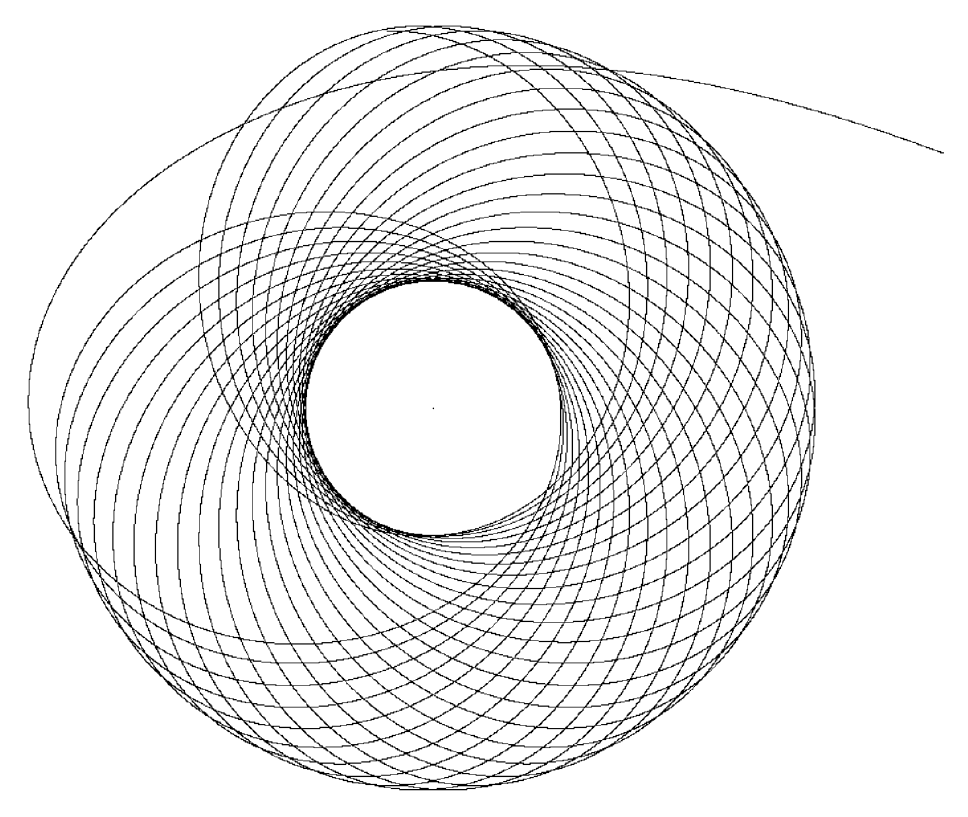}\\
Figure 4.
\end{center}

\section{Cauchy's Method of Characteristics and a Generalised LCKS Transformation in 2-dimensions}\vspace{-5mm}

\npgni Here we focus on the orthogonality of the level curves of $R$ and $S$,

$$\boldsymbol\nabla R.\boldsymbol\nabla S=0,$$\vspace{-8mm}

\npgni assuming $R$ is known exactly or approximately and that 
$S(x,y)=S_0(x,y),$ for $(x,y)\in\mathbb{C}_0$, curve of classical orbit, $S_0$ the Hamilton Jacobi function.

\subsection{Method of Characteristics}

\npgni Assume $\dfrac{\partial R}{\partial x}\neq0$ then we need

$$\frac{\partial S}{\partial x}+\dfrac{\dfrac{\partial S}{\partial y}\dfrac{\partial R}{\partial y}}{\dfrac{\partial R}{\partial x}}=0,$$\vspace{-7mm}

\npgni so if $y=y(x)$ and

$$\frac{dy}{dx}=\frac{\dfrac{\partial R}{\partial y}(x,y)}{\dfrac{\partial R}{\partial x}(x,y)},\spc \frac{d}{dx}S(x,y(x))=0.$$

\npgni So for each point $(x_0, y_0)\in\mathbb{C}_0$ define $\dfrac{dy}{dx}(x;x_0,y_0)$ as above, with\vspace{5mm}

$$\frac{dy}{dx}(x;x_0,y_0)=-\frac{\partial S_0}{\partial x}(x_0,y_0)\left/\frac{\partial S_0}{\partial y}(x_0,y_0)\right.$$

\npgni and $y(x_0;x_0,y_0)=y_0,\;(x_0,y_0)\in\mathbb{C}_0,$ then 

$$S(x,y(x;x_0,y_0))=S_0(x_0,y_0)\;\textrm{for each }(x_0,y_0)\in\mathbb{C}_0.$$\vspace{-8mm}

\npgni It follows that necessarily:

$$\frac{\partial S}{\partial x}+y'\frac{\partial S}{\partial y}=0,\spc\frac{\partial R}{\partial x}-\frac{1}{y'}\frac{\partial R}{\partial y}=0,$$\vspace{-8mm}

\npgni so that

$$|\boldsymbol\nabla S|^2=\left(\frac{\partial S}{\partial y}\right)^2(1+y'^2),\spc|\boldsymbol\nabla R|^2=\left(\frac{\partial R}{\partial y}\right)^2\left(1+\frac{1}{y'^2}\right).$$

\npgni Energy conservation then reduces to 

$$(S_y)^2\left(1+\frac{R_y^2}{R_x^2}\right)-(R_y)^2\left(1+\frac{S_y^2}{R_x^2}\right)=2(E-V),$$

\npgni $R_x=\dfrac{\partial R}{\partial x}$ etc. In this case we have an implicit solution of Eqs($\ast$). Eliminating $V$ we obtain:-

\begin{lemma}
\npgni A necessary condition for semi-classical Eqs($\ast$) to have a solution is that 

$$\left(1+\frac{S_y^2}{R_x^2}\right)|\boldsymbol\nabla R|^2=\left(1+\frac{R_y^2}{S_x^2}\right)|\boldsymbol\nabla S|^2$$

\npgni and modulo some mild regularity assumptions on $R$ and $S$, necessary and sufficient conditions are that additionally,

$$2(E-V)=|\boldsymbol\nabla R|^2\left(\frac{S_y^2}{R_x^2}-1\right).$$

\end{lemma}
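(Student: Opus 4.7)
The overall plan is to read both identities off the two constraints in $(\ast)$, namely orthogonality $\boldsymbol\nabla R\cdot\boldsymbol\nabla S=0$ and the energy equation $|\boldsymbol\nabla S|^2-|\boldsymbol\nabla R|^2=2(E-V)$, using the characteristic setup from just above the lemma (with $y'=R_y/R_x$, under $R_x\neq 0$). For necessity of the first identity, I would use orthogonality to write $S_x=-R_yS_y/R_x$, hence $S_x^2=R_y^2S_y^2/R_x^2$, and then simply expand both sides:
$$\left(1+\frac{S_y^2}{R_x^2}\right)|\boldsymbol\nabla R|^2 = R_x^2+R_y^2+S_y^2+\frac{R_y^2S_y^2}{R_x^2} = |\boldsymbol\nabla R|^2+|\boldsymbol\nabla S|^2,$$
and the symmetric manipulation of the right-hand side gives the same quantity. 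The first identity is thus nothing more than the orthogonality condition repackaged.

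For necessity of the second identity, I would use the gradient expressions $|\boldsymbol\nabla S|^2=S_y^2(1+y'^2)$ and $|\boldsymbol\nabla R|^2=R_x^2(1+y'^2)$ with $y'=R_y/R_x$ derived in the setup, and compute
$$|\boldsymbol\nabla R|^2\left(\frac{S_y^2}{R_x^2}-1\right)=(R_x^2+R_y^2)\left(\frac{S_y^2}{R_x^2}-1\right)=S_y^2+\frac{R_y^2S_y^2}{R_x^2}-R_x^2-R_y^2,$$
which, using once more $S_x^2=R_y^2S_y^2/R_x^2$ from orthogonality, collapses to $|\boldsymbol\nabla S|^2-|\boldsymbol\nabla R|^2=2(E-V)$ by the energy equation of $(\ast)$.

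For sufficiency, I would invoke the Cauchy-characteristic construction introduced just before the lemma: given $R$ with $R_x\ne 0$ and data $S\rceil_{C_0}=S_0$, solve $y'(x;x_0,y_0)=R_y/R_x$ starting at $(x_0,y_0)\in C_0$ and declare $S$ constant along each characteristic with value $S_0(x_0,y_0)$. This automatically enforces $R_xS_x+R_yS_y=0$, so the first identity holds; the second identity then rearranges, via the same algebraic steps run in reverse, to recover the energy equation. Hence $(R,S)$ solves $(\ast)$.

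The main obstacle is pinning down precisely what "mild regularity" means. Expanding the first identity gives $(R_x^2-S_x^2)(R_x^2S_x^2-R_y^2S_y^2)=0$, so on its own it only forces $R_xS_x\pm R_yS_y=0$ on the correct irreducible component; the correct branch (the orthogonal one) is selected by the construction itself, through the sign of $y'=R_y/R_x$. The regularity that underwrites this selection, and makes the whole sufficiency argument go through, is: $R\in C^1$ with $R_x\neq 0$ in a neighbourhood of $C_0$, $R_y/R_x$ Lipschitz for unique solvability of the characteristic ODE, transversality of the characteristics to $C_0$ at the starting points, and absence of caustics so that $(x_0,x)\mapsto(x,y(x;x_0,y_0))$ is a local diffeomorphism. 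Under these hypotheses the two identities are precisely the necessary and sufficient conditions for $(R,S)$ to solve $(\ast)$ with Cauchy data $S\rceil_{C_0}=S_0$.
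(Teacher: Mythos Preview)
Your proof is correct and follows essentially the same route as the paper, which does not give a formal argument but simply states the lemma as following from the characteristic setup ($y'=R_y/R_x$, $S_x+y'S_y=0$) and energy conservation, with a one-line remark pointing to Nagumo's theorem for the regularity underpinning sufficiency. Your explicit computation showing that both sides of the first identity collapse to $|\boldsymbol\nabla R|^2+|\boldsymbol\nabla S|^2$ under orthogonality, together with your discussion of the branch ambiguity and the transversality/Lipschitz hypotheses needed for the characteristic construction, makes explicit what the paper leaves to the reader.
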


\npgni \textbf{Remarks} 

\npgni 1. Sufficient regularity conditions can be deduced from Nagumo's results (Ref. [10]).

\npgni 2. Above results reveal the structure of the level curves of $R$ and $S$ in the neighbourhood of $C_0$ on which $R$ is known to have a maximum value and $S_0$ is known approximately. We return to this theme later. 

\npgni 3. It is easy to check the above conditions for all examples herein. 

\npgni 4. We need a more powerful result for explicit solutions. 

\subsection{Eplicit Solutions in 2-dimensions}

\npgni Where the potential $V$ is central, the classical angular momentum is conserved and the resulting motion can be assumed to be confined to the plane $z=0$. We work in $\mathcal{N}$, a 2-dimensional neighbourhood of an arc of $C_{0}$, the classical orbit, the neighbourhood being assumed to be simply connected. To be specific, $\mathcal{N}=\!\!\!\!\displaystyle\bigcup_{\theta\in I, \delta(\theta)>0}\!\!\!\!\!N((r_{0}(\theta),\theta),\delta(\theta))$, in polars, $(r_{0}(\theta),\theta)$ being the centre and $\delta(\theta)>0$ being the radius of $N$, $(r_{0}(\theta),\theta)\in C_{0}$, and $\theta\in I$, a real interval. We assume that the quantum particle density of state, $\psi$, satisfies, $\rho(x,y)\sim C\exp\left(\frac{2R(x,y)}{\epsilon^{2}}\right)$, as $\epsilon^{2}\sim0$, the real valued $R$ achieving its global maximum on $C_{0}$, where $\boldsymbol\nabla R=0$ and $R=R_{\textrm{max}}$.

\npgni To satisfy our equations ($\ast$) we write $\boldsymbol\nabla S(x,y)=\lambda(x,y)\left(-\dfrac{\partial R}{\partial y},\dfrac{\partial R}{\partial x}\right)$, where \\$\lambda(x,y)\sim O(|\boldsymbol\nabla R|^{-1})$ as $(x,y)\sim(x_{0},y_{0})\in C_{0}$. Assuming at all points of the arc of $C_{0}$ we have a continuously turning tangent $\vct t(C_{0})$, with $|\vct t(C_{0})|=1$ it is necessary that

$$\boldsymbol\nabla S(x,y)\sim \lambda(x,y)|\boldsymbol\nabla R|(x,y)\vct t(C_{0})(x_{0},y_{0})$$

\npgni as $(x,y)\sim (x_{0},y_{0})$. In any case

$$\frac{\partial S}{\partial x}=-\lambda\frac{\partial R}{\partial y},\;\;\;\;\;\frac{\partial S}{\partial y}=\lambda\frac{\partial R}{\partial x}$$\vspace{-7mm}

\npgni and assuming $C^2$ behaviour

$$\frac{\partial^{2} S}{\partial x\partial y}=\frac{\partial^{2} S}{\partial y\partial x}=-\frac{\partial\lambda}{\partial y}\frac{\partial R}{\partial y}-\lambda\frac{\partial^2R}{\partial y^{2}}=\frac{\partial\lambda}{\partial x}\frac{\partial R}{\partial x}+\lambda\frac{\partial^{2} R}{\partial x^{2}},$$

\npgni i.e., $\lambda$ has to satisfy the $(R,V)$ equation:

$$\lambda\Delta R=-\boldsymbol\nabla\lambda.\boldsymbol\nabla R$$

\npgni and by energy conservation $\lambda=\pm\sqrt{\left(1+\dfrac{2(E-V)}{|\boldsymbol\nabla R|^{2}}\right)}$.

\npgni Defining for $C(\vct{r}_{0},\vct{r})$ a simple curve from $\vct{r}_{0}\in C_{0}$ to $\vct{r}\in \mathcal{N}$, the line integral,

$$S(x,y)-S(x_{0},y_{0})=\int_{C(\vct{r}_{0},\vct{r})}\left(-\lambda\frac{\partial R}{\partial y},\lambda\frac{\partial R}{\partial x}\right).d\vct{r}$$\vspace{-5mm}

\npgni predicated on $\lambda$ satisfying the $(R,V)$ equation gives a unique $S$ defined in $\mathcal{N}$ with

$$\boldsymbol\nabla S=\pm\sqrt{|\boldsymbol\nabla R|^{2}+2(E-V)}(\widehat{\boldsymbol\nabla R^{\bot}}),\;\;\textrm{off}\;\;C_{0},$$

\npgni where $\boldsymbol\nabla R^{\bot}=\left(-\dfrac{\partial R}{\partial y},\dfrac{\partial R}{\partial x}\right)$, $(\widehat{\boldsymbol\nabla R^{\bot}})$ being $\vct t(C_{0})$ on the classical orbit.

\npgni Evidently as required $S\rceil_{C_0}=S_0$, the classical Hamilton Jacobi function.

\npgni We have proved modulo some mild regularity conditions:-\vspace{5mm}

\begin{theorem}

Assuming, $\lambda=\pm\sqrt{\left(1+\dfrac{2(E-V)}{|\boldsymbol\nabla R|^{2}}\right)}$, satisfies the $(R,V)$ equation\vspace{5mm}

$$\lambda\Delta R=-\boldsymbol\nabla\lambda.\boldsymbol\nabla R,$$

\npgni i.e. the flow with vector field $\lambda\boldsymbol\nabla R$ is incompressible and has no sources or sinks,

\npgni the solution to our equations ($\ast$), in 2-dimensions, for a given $R$, is

$$\boldsymbol\nabla S=\pm\sqrt{|\boldsymbol\nabla R|^{2}+2(E-V)}(\widehat{\boldsymbol\nabla R^{\bot}}).$$

\end{theorem}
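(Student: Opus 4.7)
My plan is to exploit the special structure of two dimensions, where the orthogonality constraint $\boldsymbol\nabla R\cdot\boldsymbol\nabla S=0$ forces $\boldsymbol\nabla S$ to be a scalar multiple of the perpendicular field $\boldsymbol\nabla R^{\perp}=(-R_y,R_x)$. Writing $\boldsymbol\nabla S=\lambda\,\boldsymbol\nabla R^{\perp}$ for an as yet undetermined scalar $\lambda$ defined on $\mathcal{N}$, the entire problem reduces to (i) identifying $\lambda$ and (ii) showing that the resulting 1-form is exact so that a single-valued $S$ actually exists.

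For step (i) I would substitute $|\boldsymbol\nabla S|^{2}=\lambda^{2}|\boldsymbol\nabla R|^{2}$ into the energy equation $2^{-1}(|\boldsymbol\nabla S|^{2}-|\boldsymbol\nabla R|^{2})+V=E$; this immediately pins $\lambda^{2}=1+2(E-V)/|\boldsymbol\nabla R|^{2}$, giving the stated $\pm\sqrt{\,\cdot\,}$ form off $C_{0}$. For step (ii) I would impose $C^{2}$ smoothness of $S$, i.e. the equality of the mixed partials $S_{xy}=S_{yx}$ applied to $(S_{x},S_{y})=(-\lambda R_{y},\lambda R_{x})$. A one-line differentiation shows this integrability requirement is exactly $\lambda\Delta R+\boldsymbol\nabla\lambda\cdot\boldsymbol\nabla R=0$, which is the $(R,V)$ equation, equivalently $\boldsymbol\nabla\cdot(\lambda\boldsymbol\nabla R)=0$. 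Since the theorem takes this as a hypothesis, the 1-form $\lambda\boldsymbol\nabla R^{\perp}$ is closed on $\mathcal{N}$.

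With closedness and simple-connectedness of $\mathcal{N}$ in hand, the Poincaré lemma allows me to define
$$S(\vct r)\;=\;S_{0}(\vct r_{0})\;+\;\int_{C(\vct r_{0},\vct r)}\bigl(-\lambda R_{y},\lambda R_{x}\bigr)\cdot d\vct r,$$
for any simple curve $C(\vct r_{0},\vct r)$ from a base point $\vct r_{0}\in C_{0}$ to $\vct r\in\mathcal{N}$, with the integral independent of path. By construction $\boldsymbol\nabla S=\lambda\boldsymbol\nabla R^{\perp}=\pm\sqrt{|\boldsymbol\nabla R|^{2}+2(E-V)}\,(\widehat{\boldsymbol\nabla R^{\perp}})$, so $\boldsymbol\nabla R\cdot\boldsymbol\nabla S=0$ is automatic and the energy equation reproduces the defining identity of $\lambda^{2}$. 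The two sign choices correspond simply to the two sides of $C_{0}$ in $\mathcal{N}$.

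The main obstacle I anticipate is the degenerate behaviour on $C_{0}$ itself: there $|\boldsymbol\nabla R|\to 0$ and $\lambda\sim|\boldsymbol\nabla R|^{-1}$ blows up, yet the product $\lambda\boldsymbol\nabla R^{\perp}$ must extend continuously to $C_{0}$ with limiting value $\boldsymbol\nabla S_{0}$ tangent to $C_{0}$ so that the boundary condition $S\rceil_{C_{0}}=S_{0}$ can be verified and the line-integral starting point makes sense. Controlling this limit, and hence ensuring the $\lambda$ given by energy conservation genuinely solves the $(R,V)$ equation on the closure including $C_{0}$, is the delicate point; it is exactly where one appeals to the mild Nagumo-type regularity assumptions alluded to in Remark~1 of the preceding lemma (Ref.~[10]), by which the tangential limit $|\boldsymbol\nabla R|\,\lambda\to|\boldsymbol\nabla S_{0}|$ along $C_{0}$ can be taken rigorously. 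Granted this, all three conditions of $(\ast)$ hold on $\mathcal{N}$ and the theorem follows.
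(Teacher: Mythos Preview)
Your proposal is correct and follows essentially the same route as the paper: write $\boldsymbol\nabla S=\lambda\,\boldsymbol\nabla R^{\perp}$, fix $\lambda$ via energy conservation, derive the $(R,V)$ equation from $S_{xy}=S_{yx}$, and then recover $S$ by a path-independent line integral on the simply-connected $\mathcal{N}$, handling the $C_0$ limit under mild regularity. The only minor quibble is your remark that the $\pm$ signs correspond to the two sides of $C_0$; in the paper (and in fact) the sign choice selects the direction of traversal along $C_0$, not the side of it.
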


\npgni In particular in a neighbourhood of $C_0$, the classical orbit, on the level surface,

\begin{multline*}
R=R_{0}=\left.\Big\{(r,\theta):r=r_{0}(\theta)+r_{1}(\theta),\;\frac{r_{1}^{2}}{2}\frac{\partial^{2}R}{\partial r^{2}}(r_{0},\theta_{0})=R_{0}-R_{\textrm{max}},\right. \\  \left.C:r=r_{0}(\theta),\theta\in(0,2\pi),\;\textrm{say}\;as\;R\sim R_{\textrm{max}}\right.\Big\},
\end{multline*}\vspace{-10mm}

\npgni we obtain the leading term,

$$\boldsymbol\nabla S\sim\pm\sqrt{2(E-V)}\left(1+\frac{|\boldsymbol\nabla R|^{2}}{4(E-V)}\right)\vct t(R=R_{0}),\;\textrm{as}\;R\sim R_{\textrm{max}},$$\vspace{-5mm}

\npgni $\vct t(R=R_{0})$ being the unit tangent to the level surface (contour), $R=R_{0}$, and tangent to $C_{0}$ if $R_{0}=R_{\textrm{max}}$.

\npgni This gives detailed information on the behaviour of the semi-classical limit in a narrow strip containing $C_{0}$ in terms of $R$, which we assume is known, at least as far as $\dfrac{\partial^{2}R}{\partial r^{2}}(r_{0},\theta)$, $\theta\in(0,2\pi)$ and $|\boldsymbol\nabla R|^{2}$, in $N$.

\begin{lemma}
Let $R(x,y)=R_{0}$ be a level surface of $R$ in a neighbourhood, $\mathcal{N}$, of the classical orbit $C_{0}$, with polar equation, $r=r_{0}$, $0<\theta\le2\pi$, then

$$R(x,y)\sim R_{\textrm{max}}+\frac{(r-r_{0}(\theta))^{2}}{2}\frac{\partial^{2}R}{\partial r^{2}}(r_{0}(\theta),\theta)+O\left((r-r_{0}(\theta))^{3}\right).$$

\npgni and the leading term for $|\boldsymbol\nabla R|^{2}$ is given by,

$$|\boldsymbol\nabla R|^{2}\sim 2(R_{0}-R_{\textrm{max}})\frac{\partial^{2}R}{\partial r^{2}}(r_{0},\theta)\left(1+\frac{(r_{0}'(\theta))^{2}}{r^{2}}\right).$$
\end{lemma}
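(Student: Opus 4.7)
The plan is to Taylor expand $R$ about the classical orbit $C_0$, exploiting the fact that $R$ attains its global maximum $R_{\textrm{max}}$ on $C_0$, so that $\boldsymbol\nabla R\rceil_{C_0}=\vct 0$. In polar coordinates this is equivalent to $\partial_r R(r_0(\theta),\theta)=0=\partial_\theta R(r_0(\theta),\theta)$ for every $\theta$ in the range under consideration.

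First I would fix $\theta$ and expand $r\mapsto R(r,\theta)$ in powers of $(r-r_0(\theta))$:
$$R(r,\theta)=R(r_0(\theta),\theta)+(r-r_0(\theta))\partial_r R(r_0,\theta)+\tfrac{1}{2}(r-r_0(\theta))^2\partial_r^2 R(r_0,\theta)+O((r-r_0)^3).$$
The constant term equals $R_{\textrm{max}}$ since $(r_0(\theta),\theta)\in C_0$, and the linear term vanishes by the critical-point condition, yielding the first asymptotic identity in the lemma.

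Next I would compute the leading-order behaviour of $\boldsymbol\nabla R$ directly from this expansion. Differentiating in $r$ gives
$$\partial_r R=(r-r_0(\theta))\partial_r^2 R(r_0,\theta)+O((r-r_0)^2),$$
while differentiating in $\theta$, applying the chain rule on the $r_0(\theta)$-dependence, gives
$$\partial_\theta R=-r_0'(\theta)(r-r_0(\theta))\partial_r^2 R(r_0,\theta)+O((r-r_0)^2).$$
Substituting into the polar identity $|\boldsymbol\nabla R|^2=(\partial_r R)^2+r^{-2}(\partial_\theta R)^2$ produces
$$|\boldsymbol\nabla R|^2\sim(r-r_0)^2\bigl(\partial_r^2 R(r_0,\theta)\bigr)^2\left(1+\frac{(r_0'(\theta))^2}{r^2}\right).$$

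To close the proof I would use that on the level surface $R=R_0$ the first asymptotic forces $(r-r_0(\theta))^2\partial_r^2 R(r_0,\theta)\sim 2(R_0-R_{\textrm{max}})$, whose substitution into the preceding display yields the claimed
$$|\boldsymbol\nabla R|^2\sim 2(R_0-R_{\textrm{max}})\partial_r^2 R(r_0,\theta)\left(1+\frac{(r_0'(\theta))^2}{r^2}\right).$$
The main obstacle is simply bookkeeping: ensuring the correct leading-order terms are kept when differentiating in $\theta$, where both the $r_0(\theta)$-dependence and the intrinsic $\theta$-dependence of $\partial_r^2 R$ contribute in principle but only the former enters at the relevant order $(r-r_0)^1$. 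Signs give a useful self-check: $\partial_r^2 R<0$ because of the maximum and $R_0-R_{\textrm{max}}<0$, so the product is positive, as required for $|\boldsymbol\nabla R|^2$.
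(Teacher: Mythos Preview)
Your proposal is correct and follows essentially the same route as the paper: Taylor expand $R$ about $r=r_0(\theta)$ using $\boldsymbol\nabla R\rceil_{C_0}=\vct 0$, read off the leading behaviour of $\partial_r R$ and $\partial_\theta R$, assemble $|\boldsymbol\nabla R|^2=(\partial_r R)^2+r^{-2}(\partial_\theta R)^2$, and then eliminate $(r-r_0)^2$ via the level-surface relation $(r-r_0)^2\,\partial_r^2 R(r_0,\theta)\sim 2(R_0-R_{\textrm{max}})$. The paper merely records one further order in each derivative expansion (the $\partial_r^3 R$ and $\partial_\theta\partial_r^2 R$ contributions) before discarding them, which is exactly the bookkeeping point you flag; your truncation at order $(r-r_0)$ in the derivatives is already sufficient for the stated leading term.
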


\begin{proof}
Modulo sufficient regularity,

$$\frac{\partial R}{\partial r}\sim (r-r_{0}(\theta))\frac{\partial^{2}R}{\partial r^{2}}(r_{0}(\theta),\theta)+ \frac{(r-r_{0}(\theta))^{2}}{2}\frac{\partial^{3}R}{\partial r^{3}}(r_{0}(\theta),\theta)$$\vspace{-3mm}
\npgni and\vspace{-3mm}

\npgni $\spc\dfrac{1}{r}\dfrac{\partial R}{\partial \theta}\sim -\dfrac{(r-r_{0})}{r}r_{0}'(\theta)\dfrac{\partial^{2}R}{\partial r^{2}}(r_{0}(\theta),\theta)+\dfrac{ (r-r_{0}(\theta))^{2}}{2r}\dfrac{\partial}{\partial\theta}\left(\dfrac{\partial^{2}R}{\partial r^{2}}(r_{0}(\theta),\theta)\right)$

\npgni $\spc\spc\spc\spc\spc\spc\spc\spc\spc\spc-\dfrac{(r-r_{0}(\theta))^{2}}{2r}r_{0}'(\theta)\dfrac{\partial^{3}R}{\partial r^{3}}(r_{0}(\theta),\theta).$

\npgni Since $R=R_{\textrm{max}}$ on $C_{0}$, a simple computation gives the desired result.
\end{proof}\vspace{-5mm}

\npgni In 2-dimensions the formula for $\boldsymbol\nabla S$, apart from the obvious change of direction, merely includes a $|\boldsymbol\nabla R|^{2}$ term, with $V\rightarrow V-\dfrac{|\boldsymbol\nabla R|^{2}}{2}$ and

$$-\dfrac{|\boldsymbol\nabla R|^{2}}{2}=\frac{(R_{\textrm{max}}-R_{0})}{r^{2}}\frac{\partial^{2}R}{\partial r^{2}}(r_{0},\theta)\left(r^{2}+(r_{0}'(\theta))^{2}\right).$$\vspace{-10mm}

\npgni Since\vspace{-3mm}

$$\frac{d}{dt}\vct X_t^0=(\boldsymbol\nabla R+\boldsymbol\nabla S)(\vct X_t^0),$$

\npgni we see that as $R_{0}\sim R_{\textrm{max}}$ on the level surface $R=R_{0}$,

$$h-h_{0}\cong\frac{(R_{\textrm{max}}-R_{0})}{r}\frac{\partial^{2}R}{\partial r^{2}}(r_{0},\theta)\left(r^{2}+(r_{0}'(\theta))^{2}\right)+\frac{\partial R}{\partial\theta}(r,\theta),$$\vspace{-5mm}

\npgni where $r_{0}'(\theta)=\dfrac{dr_{0}}{d\theta}$, showing how the shape of the classical orbit $r=r_{0}(\theta)$ affects the quantum correction to angular momentum, $h_{0}$ being the classical angular momentum. Needless to say for central potentials $V$, $h_{0}$ is a constant.\vspace{2mm}

\begin{corollary}
Modulo the $(R,V)$ equation being satisfied and some mild regularity, in general, off the classical orbit $C_{0}$,
$$r^{2}\dot{\theta}=\left.\left(\frac{\partial R}{\partial\theta}(r,\theta)+r\sqrt{|\boldsymbol\nabla R|^{2}+2(E-V)}\left({|\boldsymbol\nabla R|}^{-1}\frac{\partial R}{\partial r}\right)\right)\right\rceil_{R=R_{0}},$$\vspace{-8mm}

\npgni with above interpretation of the bracketed second term on $C_{0}$, where we reiterate we assume $R$ is known. Similarly,

$$\dot{r}=\left.\left(\frac{\partial R}{\partial r}(r,\theta)+\frac{\sqrt{|\boldsymbol\nabla R|^{2}+2(E-V)}}{r}\left({|\boldsymbol\nabla R|}^{-1}\frac{\partial R}{\partial \theta}\right)\right)\right\rceil_{R=R_{0}}.$$
\end{corollary}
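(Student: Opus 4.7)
The plan is to expand the semi-classical equation of motion $\dot{\vct{X}}_t^0 = (\boldsymbol\nabla R + \boldsymbol\nabla S)(\vct{X}_t^0)$ in the polar basis $(\hat{r},\hat{\theta})$, reading off $\dot{r} = \hat{r}\cdot\dot{\vct{X}}_t^0$ and $r\dot{\theta} = \hat{\theta}\cdot\dot{\vct{X}}_t^0$. The projections of $\boldsymbol\nabla R$ are immediate from the polar form of the gradient, $\hat{r}\cdot\boldsymbol\nabla R = \partial R/\partial r$ and $\hat{\theta}\cdot\boldsymbol\nabla R = r^{-1}\partial R/\partial\theta$, and these supply the first summand in each of the two displayed identities after multiplying the angular equation by $r$.

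For the $\boldsymbol\nabla S$ contribution I would invoke Theorem 5.2: under the standing assumption that $\lambda = \pm\sqrt{1 + 2(E-V)/|\boldsymbol\nabla R|^2}$ satisfies the $(R,V)$ equation, one has $\boldsymbol\nabla S = \pm\sqrt{|\boldsymbol\nabla R|^2 + 2(E-V)}\,\widehat{\boldsymbol\nabla R^{\bot}}$ off $C_0$. Writing out $\boldsymbol\nabla R^{\bot} = (-\partial_y R,\partial_x R)$ in polar coordinates via $\hat{r} = (\cos\theta,\sin\theta)$, $\hat{\theta} = (-\sin\theta,\cos\theta)$ gives the clean identities
$$\hat{r}\cdot\boldsymbol\nabla R^{\bot} = -\frac{1}{r}\frac{\partial R}{\partial\theta}, \qquad \hat{\theta}\cdot\boldsymbol\nabla R^{\bot} = \frac{\partial R}{\partial r}.$$
Dividing by $|\boldsymbol\nabla R^{\bot}| = |\boldsymbol\nabla R|$ furnishes the second summand in the two formulae up to an overall sign governed by the branch of the square root.

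Assembling these pieces and multiplying the angular equation by $r$ reproduces the claimed expressions, with the understanding that the ``$\rceil_{R=R_0}$'' notation simply records that the right-hand side is to be evaluated along the level contour currently being traversed by the orbit. Routine regularity of $R$ and $V$ inside the neighbourhood $\mathcal{N}$ ensures that the evaluations and divisions by $|\boldsymbol\nabla R|$ make sense off $C_0$, while the bracketed terms $|\boldsymbol\nabla R|^{-1}(\partial R/\partial r)$ and $|\boldsymbol\nabla R|^{-1}(\partial R/\partial\theta)$ are to be interpreted via the limiting tangent direction $\vct{t}(C_0)$ on $C_0$ itself, in accordance with the convention introduced after Theorem 5.2.

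The genuine subtlety, and the step I would handle most carefully, is the sign convention: the orientation of $\widehat{\boldsymbol\nabla R^{\bot}}$ and the branch of $\pm\sqrt{\cdot}$ must be selected so that in the limit $R\to R_{\textrm{max}}$ on $C_0$ one recovers $r^2\dot{\theta} = h_0 = kL$ of the correct sign, matching the sense of traversal of the classical orbit discussed in Section 5.3. Once this orientation is fixed, both formulae follow simultaneously from the same projection calculation, and the corollary is a direct consequence of Theorem 5.2 together with the change of basis from Cartesian to polar components.
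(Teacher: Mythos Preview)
Your approach is correct and is essentially the paper's own: the corollary is stated without separate proof, being an immediate consequence of Theorem~5.2 and the semi-classical equation of motion $\dot{\vct X}_t^0=(\boldsymbol\nabla R+\boldsymbol\nabla S)(\vct X_t^0)$ projected onto the polar basis, exactly as you outline. Your explicit attention to the sign/orientation convention for $\widehat{\boldsymbol\nabla R^{\bot}}$ is well placed, since the paper's $\pm$ in Theorem~5.2 is what governs the signs in both displayed formulae and must be fixed by matching $r^2\dot\theta\to h_0$ on $C_0$.
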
\vspace{-2mm}

\npgni We conclude this section with the related result for circular orbits.

\begin{theorem}
For the spiral circular orbit, $C_{0}$, in any central potential $V$, solving the $(R,V)$ equation in polars $(r,\theta)$ yields:-

$$R+iS=f(r)+ih\theta,$$\vspace{-8mm}

\npgni where, for $(')=\dfrac{d}{dr}$,

$$r^{2}f'^{2}(r)=4\int^{r}_{a}\rho^{2}\left(\frac{V'(\rho)}{2}+\frac{V(\rho)-E}{\rho}\right)d\rho,$$

\npgni with $\dfrac{V'(a)}{2}+\dfrac{V(a)-E}{a}=0$, $a$ being the radius of the circular orbit and $E$ and $h$, the\vspace{3mm} energy and angular momentum respectively. Moreover

$$(f''(a))^{2}=-\frac{d}{dr}\left.\left(\frac{2(E-V(r))}{r}-V'(r)\right)\right\rceil_{r=a}=V''(a)+\frac{3h^{2}}{a^{4}},$$

\npgni so the particle density on $C_{0}$,

$$\rho\sim C_{\epsilon}\textrm{exp}\left(\frac{1}{\epsilon^{2}}\left(2R_{\textrm{max}}-\left(V''(a)+\frac{3h^{2}}{a^{4}}\right)^{\frac{1}{2}}(r-a)^{2}\right)\right),$$

\npgni provided $V''(a)+\dfrac{3h^{2}}{a^{4}}>0$, i.e. the classical condition for a stable circular orbit. $C_{\epsilon}$ is\vspace{3mm} determined by normalisation.
\end{theorem}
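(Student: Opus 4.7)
The plan is to exploit the circular symmetry of $C_{0}$ by making the separated ansatz $R(r,\theta) = f(r)$ and $S(r,\theta) = h\theta$ in polars, with $h$ identified a posteriori as the classical angular momentum. Orthogonality $\boldsymbol\nabla R.\boldsymbol\nabla S=0$ is automatic since $\boldsymbol\nabla R = f'(r)\hat{\vct r}$ is radial while $\boldsymbol\nabla S = (h/r)\hat{\boldsymbol\theta}$ is azimuthal. The $(R,V)$ equation of Theorem 5.1, with $\lambda = \lambda(r)$ forced by symmetry, reduces in polars to $(r\lambda f')' = 0$, so $r\lambda f' = \textrm{const}$; matching $|\boldsymbol\nabla S| = \lambda|\boldsymbol\nabla R|$ identifies this constant with $h$, confirming consistency.

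Next, I would substitute into the energy equation $2^{-1}(|\boldsymbol\nabla S|^{2}-|\boldsymbol\nabla R|^{2})+V=E$, obtaining $f'(r)^{2}=h^{2}/r^{2}+2(V(r)-E)$, or equivalently $r^{2}f'(r)^{2}=h^{2}+2r^{2}(V(r)-E)$. Differentiating in $r$ yields $\dfrac{d}{dr}(r^{2}f'^{2})=2r^{2}V'(r)+4r(V(r)-E)=4r^{2}(V'(r)/2+(V(r)-E)/r)$. Imposing $f'(a)=0$, which is needed for $R=f(r)$ to attain its maximum on the classical circle $r=a$, kills the constant of integration and gives the stated integral formula for $r^{2}f'^{2}$. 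Setting $f'(a)=0$ in the energy equation forces $h^{2}=2a^{2}(E-V(a))$, while the vanishing of $(r^{2}f'^{2})'\rceil_{r=a}$ (automatic from $f'(a)=0$) reproduces the defining condition $V'(a)/2+(V(a)-E)/a=0$; together these two relations are equivalent to the familiar centripetal condition $h^{2}=a^{3}V'(a)$.

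For $(f''(a))^{2}$, I would differentiate the energy equation $f'(r)^{2}=h^{2}/r^{2}+2(V(r)-E)$ twice to obtain $2(f''(r))^{2}+2f'(r)f'''(r)=6h^{2}/r^{4}+2V''(r)$; evaluating at $r=a$ and using $f'(a)=0$ yields $(f''(a))^{2}=3h^{2}/a^{4}+V''(a)$. The alternative form $-\dfrac{d}{dr}(2(E-V)/r-V')\rceil_{r=a}$ follows by direct expansion of the derivative, together with one application of the centripetal relation $2(E-V(a))=aV'(a)$.

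Finally, the density asymptotic comes from Taylor expanding $2R=2f(r)$ about $r=a$: since $f'(a)=0$ we have $2R\sim 2R_{\textrm{max}}+f''(a)(r-a)^{2}$, and because $R$ is maximal on $C_{0}$, $f''(a)\le0$, so $f''(a)=-\sqrt{V''(a)+3h^{2}/a^{4}}$, which is real exactly when the classical stability condition $V''(a)+3h^{2}/a^{4}>0$ holds. Substituting into $\rho\sim C_{\epsilon}\exp(2R/\epsilon^{2})$ reproduces the stated asymptotic, with $C_{\epsilon}$ fixed by normalisation. I anticipate no serious analytic obstacle — the chief point is recognising that the radial/azimuthal split collapses the coupled system $(\ast)$ together with the $(R,V)$ equation to a single first-order ODE for $f$, after which the entire theorem is a matter of extracting the local quadratic data of $f$ at $r=a$.
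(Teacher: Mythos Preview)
Your argument is correct and reaches the same destination as the paper, but by a somewhat more direct route. The paper works from the $(R,V)$ equation itself: with $R=f(r)$ it computes $\lambda\Delta R+\boldsymbol\nabla\lambda.\boldsymbol\nabla R=0$ explicitly, arriving at the first-order ODE $f''f'^{3}+f'^{2}\bigl((f'^{2}+2(E-V))/r-V'\bigr)=0$, which it then integrates with $f'(a)=0$ to obtain the stated formula for $r^{2}f'^{2}$. You instead read off $f'^{2}=h^{2}/r^{2}+2(V-E)$ straight from the energy equation $(\ast)$ and differentiate; this is shorter, and of course the two derivations agree because the $(R,V)$ equation together with the energy equation is equivalent to having the integrable form $\boldsymbol\nabla S=(h/r)\hat{\boldsymbol\theta}$, which your separated ansatz builds in from the start.

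For $(f''(a))^{2}$ the paper takes $f''=-\bigl((f'^{2}+2(E-V))/r-V'\bigr)/f'$, observes that both numerator and denominator vanish at $r=a$, and applies l'H\^opital's rule. Your twice-differentiation of $f'^{2}=h^{2}/r^{2}+2(V-E)$ is a cleaner alternative that avoids the $0/0$ limit altogether; either way one lands on $V''(a)+3h^{2}/a^{4}$. The density asymptotic and the identification of the stability condition are handled identically.
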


\begin{proof}
\npgni Given $R+iS=f(r)+ih\theta,$ the $(R,V)$ equation implies\vspace{5mm}

$$\frac{1}{r}(rf'(r))'+\frac{1}{2}\boldsymbol\nabla\ln\left(1+\frac{2(E-V(r))}{f'^{2}(r)}\right).f'(r)\hat{\vct{r}}=0$$\vspace{-8mm}

\npgni i.e.\vspace{-5mm}

$$f''f'^{3}+f'^{2}\left(\frac{f'^{2}+2(E-V)}{r}-V'\right)=0$$

\npgni which can be integrated, with $f'(a)=0$, to give\vspace{5mm}

$$r^{2}f'^{2}(r)=4\int^{r}_{a}u^{2}\left(\frac{V'(u)}{2}+\frac{V(u)-E}{u}\right)du.$$

\npgni Since we need a repeated root at $r=a$ we require

$$\dfrac{V'(a)}{2}+\dfrac{V(a)-E}{a}=0$$

\npgni as well as the classical condition $\dfrac{d}{dr}\left.\left(V+\dfrac{h^{2}}{2r^{2}}\right)\right\rceil_{r=a}=0$, where $h$ is the angular momentum. Moreover, for $f'(r)\neq0$,

$$f''=-\frac{\left(\dfrac{f'^{2}+2(E-V)}{r}-V'\right)}{f'}.$$

\npgni An application of de l’Hopital's rule for $r\rightarrow a$ gives

$$(f''(a))^{2}=-\frac{d}{dr}\left.\left(\frac{2(E-V(r))}{r}-V'(r)\right)\right\rceil_{r=a}=V''(a)+\frac{3h^{2}}{a^{4}},$$

\npgni and the result for the particle density follows easily.

\end{proof}

\npgni An example for the above is provided by motion in the potential field

$$V(r)=-\dfrac{\mu}{r}+\dfrac{1}{2}\omega^{2}r^{2}+\dfrac{\omega\mu}{h}r$$

\npgni in 2-dimensions, for positive $\mu$, $\omega$ and $h$. This represents a Coulomb (neutron star/black hole), harmonic oscillator and dark energy potential. The exact solution to our equations ($\ast$), in this setting, is

$$R+iS=f(r)+iS=-\dfrac{1}{2}\omega r^{2}-\dfrac{\mu}{h}r+h\ln r+ih\theta.$$

\npgni It is easy to check that $f'(r)=0$ implies a circular orbit at

$$a=\dfrac{\sqrt{\mu^{2}+4\omega h^{3}}-\mu}{2\omega h}$$

\npgni and that $(f''(a))^{2}=V''(a)+\dfrac{3h^{2}}{a^{4}}=\omega^{2}+\dfrac{h^{2}+2\omega ha^{2}}{a^{4}}>0$, giving a stable circular orbit.

\npgni A computer simulation for the corresponding semi-classical motion is shown in Figure 6 at the end of this paper.

\subsection{Power Law Problem for Zero Energy in 2-dimensions}\vspace{-5mm}

\npgni Quite recently the zero energy power law problem has attracted the attention of several researchers. An excellent survey is given in the paper by A.J. Makovski and K.J. Gorska (Ref. [14]). In our setting it reduces to solving the Eqs($\ast$) for $V(r)=-\kappa r^p$, most importantly,\vspace{-5mm} 

$$2^{-1}(|\boldsymbol\nabla S|^2-|\boldsymbol\nabla R|^2)-\kappa r^p=0,\spc\boldsymbol\nabla R.\boldsymbol\nabla S=0,\spc\boldsymbol\nabla=\left(\frac{\partial}{\partial x},\frac{\partial}{\partial y}\right),$$\vspace{-5mm}

\npgni where $\boldsymbol\nabla=\boldsymbol\nabla_z$, $z=x+iy$, $r=\sqrt{(x^2+y^2)}=|z|.$ We linearise this problem by changing variables to $\boldsymbol\nabla_Z$, $Z=X+iY$, $Z=(z)^{(p+2)/2}$, then $f(z)=z^{(p+2)/2}$ 

$$|\boldsymbol\nabla_z f|^2=|f'(z)|^2|\boldsymbol\nabla_Z f|^2$$\vspace{-10mm}

\npgni giving

$$2^{-1}(|\boldsymbol\nabla_Z S|^2-|\boldsymbol\nabla_Z R|^2)-\kappa|z|^p|f'(z)|^{-2}=0$$\vspace{-10mm}

\npgni i.e.\vspace{-5mm}

$$2^{-1}(|\boldsymbol\nabla_Z S|^2-|\boldsymbol\nabla_Z R|^2)-\kappa|z|^p\left(\frac{p+2}{2}\right)^{-2}|z|^{-p}=0$$\vspace{-8mm}

\npgni i.e.\vspace{-5mm}

$$2^{-1}(|\boldsymbol\nabla_Z S|^2-|\boldsymbol\nabla_Z R|^2)=\frac{4\kappa}{(p+2)^2}$$

\npgni Writing $R=\dfrac{\beta \tilde\mu}{\sqrt{\alpha^2 +\beta^2}}(\beta Y+\alpha X),\:S=\dfrac{\alpha \tilde\mu}{\sqrt{\alpha^2+\beta^2}}(-\beta X+\alpha Y),$ for constants $\alpha,\beta,\tilde\mu,$

\npgni for cartesian $X$ and $Y$,

$$\boldsymbol\nabla R=\frac{\beta\tilde\mu}{\sqrt{\alpha^2 +\beta^2}}(\alpha,\beta),\spc\boldsymbol\nabla S=\dfrac{\alpha\tilde\mu}{\sqrt{\alpha^2+\beta^2}}(-\beta,\alpha)$$

\npgni So, if $\tilde\mu=2\sqrt{2\kappa}/(p+2),$ $\alpha^2-\beta^2=1$,

$$\boldsymbol\nabla R.\boldsymbol\nabla S=0,\spc2^{-1}(|\boldsymbol\nabla S|^2-|\boldsymbol\nabla R|^2)=2^{-1}\tilde\mu^2(\alpha^2-\beta^2)=\dfrac{4\kappa}{(p+2)^2}$$

\npgni Hence, in the original variables $(x,y)=(r\cos\theta,r\sin\theta),$

$$X=r^{(p+2)/2}\cos\left(\frac{(p+2)}{2}\theta\right),\spc Y=r^{(p+2)/2}\sin\left(\frac{(p+2)}{2}\theta\right)$$\vspace{-10mm}

\npgni and 

$$R=\frac{\beta\tilde\mu}{\sqrt{\alpha^2+\beta^2}}(\beta Y+\alpha X),\spc S=\frac{\alpha \tilde\mu}{\sqrt{\alpha^2+\beta^2}}(-\beta X+\alpha Y),$$

\npgni solving Eqs($\ast$). The last example is the inspiration for the next section. 

\subsection{Holomorphic Change of Variables of Semi-Classical LCKS Transformation}

\npgni Assume $\omega=\xi+i\eta=f(x+iy)$, $z=x+iy$, $f$ being holomorphic and invertible for $z\in\mathbb{D}\subset\mathbb{C};\xi,\eta,x,y\in\mathbb{R}.$ We write:- $x=x(\xi,\eta),y=y(\xi,\eta);\xi=\xi(x,y),\eta=\eta(x,y)$, 

\npgni and assume $\dbinom{\dot x}{\dot y}=\boldsymbol\nabla_z(R+S)=\dbinom{\partial_x(R+S)}{\partial_y(R+S)}.$

\npgni Changing variables for any $C^1$ function $g(x,y)$, $(x,y)\to(\xi,\eta)$,

$$\begin{pmatrix}\dfrac{\partial g}{\partial x}\\[0.75em]\dfrac{\partial g}{\partial y}\end{pmatrix}=\begin{pmatrix}\dfrac{\partial \xi}{\partial x} & \dfrac{\partial \eta}{\partial x} \\[0.75em]  \dfrac{\partial\xi}{\partial y} & \dfrac{\partial \eta}{\partial y}\end{pmatrix}\begin{pmatrix}{\dfrac{\partial g}{\partial \xi}}\\[0.75em] \dfrac{\partial g}{\partial \eta}\end{pmatrix}$$\vspace{-5mm}

\npgni i.e.\vspace{-5mm}

$$\boldsymbol\nabla_z g=\begin{pmatrix}\dfrac{\partial \xi}{\partial x} & \dfrac{\partial \eta}{\partial x} \\[0.75em]  \dfrac{\partial\xi}{\partial y} & \dfrac{\partial\eta}{\partial y}\end{pmatrix}\boldsymbol\nabla_w g.$$

\npgni From the Cauchy Riemann equations

$$\boldsymbol\nabla_z g.\boldsymbol\nabla_z h=|f'(z)|^2\boldsymbol\nabla_w g.\boldsymbol\nabla_w h,$$

\npgni for any $C^1$ functions $g, h$. So for our functions $R, S,$

$$\boldsymbol\nabla_z R.\boldsymbol\nabla_z S=|f'(z)|^2\boldsymbol\nabla_w R.\boldsymbol\nabla_w S$$\vspace{-12mm}

\npgni and\vspace{-5mm}

$$2^{-1}(|\boldsymbol\nabla_z S|^2-|\boldsymbol\nabla_z R|^2)+(V-E)=2^{-1}(|\boldsymbol\nabla_w S|^2-|\boldsymbol\nabla_w R|^2)|f'(z)|^2+(V-E).$$\vspace{-8mm}

\npgni So, assuming $f'(z)\neq0$,

$$\boldsymbol\nabla_w R.\boldsymbol\nabla_w S=0,\spc 2^{-1}(|\boldsymbol\nabla_w S|^2-|\boldsymbol\nabla_w R|^2)+|f'(z)|^{-2}(V-E)=0.$$

\npgni Now, if $\dbinom{\dot x}{\dot y}=\boldsymbol\nabla_z(R+S)$, $(\raisebox{4pt}{$\cdot$})=\dfrac{d}{dt}$, $t$ being the time, since

$$\dbinom{\dot x}{\dot y}=\begin{pmatrix}\dfrac{\partial x}{\partial \xi} & \dfrac{\partial x}{\partial \eta} \\[0.75em] \dfrac{\partial y}{\partial \xi} & \dfrac{\partial y}{\partial \eta}\end{pmatrix}\binom{\dot\xi}{\dot\eta}$$\vspace{1mm}

$$\binom{\dot\xi}{\dot\eta}=\begin{pmatrix}\dfrac{\partial x}{\partial \xi} & \dfrac{\partial x}{\partial \eta} \\[0.75em]  \dfrac{\partial y}{\partial \xi} & \dfrac{\partial y}{\partial \eta}\end{pmatrix}^{-1}\binom{\dot x}{\dot y}=\begin{pmatrix} \dfrac{\partial\xi}{\partial x} & \dfrac{\partial\xi}{\partial y} \\[0.75em] \dfrac{\partial\eta}{\partial x} & \dfrac{\partial\eta}{\partial y} \end{pmatrix}\begin{pmatrix}\dfrac{\partial\xi}{\partial x} & \dfrac{\partial\eta}{\partial x} \\[0.75em] \dfrac{\partial\xi}{\partial y} & \dfrac{\partial \eta}{\partial y}\end{pmatrix}\boldsymbol\nabla_w(R+S).$$

\npgni From the Cauchy-Riemann equations

$$\binom{\dot \xi}{\dot\eta}=|f'(z)|^2\begin{pmatrix}{\dfrac{\partial}{\partial \xi}(R+S)}\\[0.75em] \dfrac{\partial}{\partial\eta}(R+S)\end{pmatrix},$$

\npgni so, if $ds=dt|f'(z(t))|^2$ is the time-change and $\dbinom{\xi(s)}{\eta(s)}$ inherits the corresponding initial conditions as $\dbinom{x(t)}{y(t)}$, $\dfrac{d}{ds}=(')$ we obtain 

$$\dbinom{\xi'}{\eta'}=\boldsymbol\nabla_w(R+S),$$

\npgni as expected. This follows from the obvious theorem for $\mathbb{D}\subset\mathbb{C}$, an open simply connected set.

\begin{theorem}
If $f(z)$ is a holomorphic, invertible function of $z\in\mathbb{D}$ with $|f'(z)|\neq0$ in $\mathbb{D}$, the transformation $\omega=f(z)$ transforms the Schrödinger equation in 2 dimensions, 

$$-2^{-1}\hbar^2\Delta_z\psi+(V-E)\psi=0,\spc z\in\mathbb{D},$$\vspace{-12mm}

\npgni to\vspace{-5mm}

$$-2^{-1}\hbar^2\Delta_w\psi+(V-E)|f'|^{-2}\psi=0,\spc w\in f(\mathbb{D}).$$
\end{theorem}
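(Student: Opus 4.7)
The plan is to reduce the theorem to the conformal transformation law for the 2D Laplacian, namely $\Delta_z = |f'(z)|^2 \Delta_w$, and then divide through by $|f'(z)|^2$, which is permitted by the non-vanishing hypothesis. This mirrors, at the level of the full Schr\"odinger equation, the semi-classical calculation for $\boldsymbol\nabla_z R.\boldsymbol\nabla_z S = |f'(z)|^2 \boldsymbol\nabla_w R.\boldsymbol\nabla_w S$ already carried out just above the theorem.

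First I would pass to the Wirtinger operators $\partial_z = 2^{-1}(\partial_x - i\partial_y)$ and $\partial_{\bar z} = 2^{-1}(\partial_x + i\partial_y)$, so that $\Delta_z = 4\partial_z \partial_{\bar z}$, and similarly $\Delta_w = 4\partial_w \partial_{\bar w}$. Since $f$ is holomorphic on $\mathbb{D}$, $w=f(z)$ satisfies $\partial_{\bar z} w = 0$ and $\partial_z \bar w = 0$, so the chain rule collapses to $\partial_z = f'(z)\partial_w$ and $\partial_{\bar z} = \overline{f'(z)}\,\partial_{\bar w}$ when acting on any $C^2$ function $g$ viewed as a function of $w, \bar w$ via the inverse of $f$.

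Next I would compose these operators. Applying $\partial_z$ to $\overline{f'(z)}\,\partial_{\bar w} g$, the factor $\overline{f'(z)}$ depends only on $\bar z$ and is therefore annihilated by $\partial_z$; moreover $\partial_z$ and $\partial_{\bar w}$ commute when applied to $g$. Hence
\[
\partial_z \partial_{\bar z} g \;=\; \overline{f'(z)}\, \partial_z \partial_{\bar w} g \;=\; \overline{f'(z)}\, f'(z)\, \partial_w \partial_{\bar w} g \;=\; |f'(z)|^2\, \partial_w \partial_{\bar w} g,
\]
so $\Delta_z g = |f'(z)|^2 \Delta_w g$. This is the promised conformal factor.

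Finally, substituting $\Delta_z \psi = |f'(z)|^2 \Delta_w \psi$ into $-2^{-1}\hbar^2 \Delta_z \psi + (V-E)\psi = 0$ and dividing through by the non-vanishing $|f'(z)|^2$ yields the stated transformed equation on $f(\mathbb{D})$. The only point that needs care is the chain-rule step, i.e.\ that $\partial_z$ and $\partial_{\bar w}$ may be interchanged and that $f'(z)$ and $\overline{f'(z)}$ pass through the opposite Wirtinger derivatives; this is precisely where holomorphicity of $f$ (equivalently, the Cauchy--Riemann equations already invoked in the paragraph above the theorem) is used, and is the only potential pitfall in an otherwise purely computational proof.
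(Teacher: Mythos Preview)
Your proof is correct and is essentially the same approach as the paper's: the paper's proof consists of the two words ``Cauchy--Riemann equations'', and your Wirtinger-derivative computation is precisely a clean way of spelling out how those equations give $\Delta_z = |f'(z)|^2\Delta_w$. Nothing further is needed.
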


\begin{proof}
Cauchy-Riemann equations.
\end{proof}

\begin{corollary}
(Semi-classical LCKS tranformation)

\npgni Modulo above assumptions on $f$, $w=f(z)$ transforms Eqs($\ast$) to: 

$$2^{-1}(|\boldsymbol\nabla_w S|^2-|\boldsymbol\nabla_w R|^2)+|f'(z)|^{-2}(V-E)=0,\spc \boldsymbol\nabla_w R.\boldsymbol\nabla_w S=0$$

\npgni and if $s$ is the above time-change, $\dfrac{d}{ds}=(')$,\vspace{5mm}

$$\xi'=\frac{\partial}{\partial\xi}(R+S),\spc\eta'=\frac{\partial}{\partial\eta}(R+S),$$

\npgni with appropriate initial conditions. 

\npgni Further, if $R(x,y)=R_{\max}$ on $C_0$, and $|\boldsymbol\nabla R|=0$ on $C_0$, transformed $R$ inherits same properties guaranteeing convergence to corresponding classical orbit, $f(C_0)$, since

$$\frac{d}{ds}R(\xi,\eta)=\boldsymbol\nabla_w(R+S).\boldsymbol\nabla_w R=|\boldsymbol\nabla R|^2\ge0,$$

\npgni and, if the classical orbit is periodic, the time-change is valid since it will take infinite $s$ time to reach $f(C_0)$.
\end{corollary}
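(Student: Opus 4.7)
The plan is to assemble the transformed equations by a direct application of the identity $\boldsymbol\nabla_z g\cdot\boldsymbol\nabla_z h=|f'(z)|^2\,\boldsymbol\nabla_w g\cdot\boldsymbol\nabla_w h$, which was established above the statement from the Cauchy--Riemann equations satisfied by $f$. First I would specialize $g=h=S$, then $g=h=R$, to get $|\boldsymbol\nabla_z S|^2=|f'(z)|^2|\boldsymbol\nabla_w S|^2$ and similarly for $R$. Substituting into the energy equation of $(\ast)$ and dividing through by $|f'(z)|^2$ (legitimate since $|f'|\neq 0$ on $\mathbb{D}$) yields $2^{-1}(|\boldsymbol\nabla_w S|^2-|\boldsymbol\nabla_w R|^2)+|f'(z)|^{-2}(V-E)=0$. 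Specializing the identity to $g=R,h=S$ gives the orthogonality $\boldsymbol\nabla_w R\cdot\boldsymbol\nabla_w S=0$ as an immediate consequence of $\boldsymbol\nabla_z R\cdot\boldsymbol\nabla_z S=0$.

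Next, for the dynamical statement, I would use the Jacobian computation already displayed above the corollary. The chain rule gives $\binom{\dot\xi}{\dot\eta}=J^{-1}\binom{\dot x}{\dot y}$, and substituting $\binom{\dot x}{\dot y}=\boldsymbol\nabla_z(R+S)$ together with the explicit form of $J^{-1}$ and the Cauchy--Riemann equations yields $\binom{\dot\xi}{\dot\eta}=|f'(z)|^2\,\boldsymbol\nabla_w(R+S)$. Introducing the time-change $ds=|f'(z(t))|^2\,dt$ and setting $('){}=\frac{d}{ds}$ then collapses the $|f'|^2$ factor and delivers $\xi'=\partial_\xi(R+S)$, $\eta'=\partial_\eta(R+S)$. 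Since $f$ is invertible, the initial point $(\xi(0),\eta(0))=f(x(0)+iy(0))$ is inherited unambiguously.

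For the convergence, I would observe that the transformed $R$ (viewed as $R\circ f^{-1}$ on $f(\mathbb{D})$) attains its maximum precisely on $f(C_0)$, and $\boldsymbol\nabla_w R=\vct{0}$ there because $\boldsymbol\nabla_z R=\vct{0}$ on $C_0$ and the two gradients are related by the invertible Jacobian $J$. Then $\frac{d}{ds}R(\xi,\eta)=\boldsymbol\nabla_w(R+S)\cdot\boldsymbol\nabla_w R=|\boldsymbol\nabla_w R|^2\geq 0$ follows immediately from the transformed orthogonality relation, giving monotone increase of $R$ along the trajectory in $s$-time, exactly mirroring the original argument in Section~1.

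The subtle point, and the main obstacle I would address most carefully, is the validity of the time-change: I need $|f'(z(t))|$ bounded away from zero along the full forward trajectory in order for $s(t)=\int_0^t|f'(z(\tau))|^2\,d\tau$ to be a bona fide reparametrization of $[0,\infty)$. This is where the periodicity hypothesis on $C_0$ enters: as $R\nearrow R_{\max}$ the trajectory is trapped in a compact neighbourhood of the compact set $f(C_0)\subset f(\mathbb{D})$, on which the holomorphic nonvanishing $f'$ attains a strictly positive minimum. Hence the integral $\int_0^\infty|f'|^2\,dt$ diverges exactly when the physical-time approach to $C_0$ takes infinite $t$, so the fictitious-time trajectory reaches $f(C_0)$ only asymptotically, consistent with the monotone-$R$ picture and completing the proof.
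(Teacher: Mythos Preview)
Your proposal is correct and follows essentially the same route as the paper: the corollary is stated as a direct consequence of the Cauchy--Riemann identity $\boldsymbol\nabla_z g\cdot\boldsymbol\nabla_z h=|f'(z)|^2\boldsymbol\nabla_w g\cdot\boldsymbol\nabla_w h$ and the Jacobian calculation already carried out in the text of Section~5.4, with the time-change $ds=|f'(z(t))|^2\,dt$ absorbing the conformal factor. Your compactness argument for the validity of the time-change is more explicit than the paper's one-line remark, but the underlying idea is the same.
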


\npgni \textbf{Remarks}

\npgni 1. $(\xi,\eta)$ are orthogonal coordinates. We can regard new potential energy function as $W=|f'|^{-2}(V-E)$ and new energy is 0. So we can investigate the zero energy limit for the new potential $W$.

\npgni 2. It is now easy to check, $\omega=f(z)$, transforms constrained the Hamiltonian system $H=\dfrac{\vct{p}^2}{2}+V_{\textrm{eff}}(q)$, $(\vct{p}-\boldsymbol\nabla R).\boldsymbol\nabla R=0$, $V_{\textrm{eff}}=V-|\boldsymbol\nabla R|^2$, into the one corresponding with the above system.

\section {Newtonian Quantum Gravity and Deviations from Keplerian Motion}

The results in this section are heavily dependent on our work on constrained Hamiltonians underlying our asymptotic analysis in Refs. [15], [16] and [17]. These results are a simple consequence of taking the ultimate Bohr correspondence limit of the atomic elliptic state $\psi_{n,e}$, $\psi_{n,e}\sim\psi_{\textrm{sc}}=\textrm{exp}\left(\frac{R+iS}{\epsilon^2}\right)$ as $\epsilon\sim0$. They are predicated on the fact that $R$ attains its global maximum on the Kepler ellipse $\mathcal{E}_K$, in the plane $z=0$, with polar equation,

$$\frac{a\left(1-e^{2}\right)}{r}=1+e\cos\theta,$$

\npgni $a$ being the semi-major axis of the elliptical orbit, parallel to the x-axis, $e$ its eccentricity:

$$e = \sqrt{1+\frac{2L^{2}E}{\mu^{2}}},$$

\npgni where L is the orbital angular momentum, $E<0$ the total energy and $\mu$ is the gravitational mass at the force centre O. (see Ref. [6]). To simplify our results we need to assume the initial conditions:

$$R\left(\mathbf{X}_{t=0}^{0}\right)\geqslant M(\Sigma)=\lambda\left(2\left(\dfrac{1-e}{1+e}\right)-\ln4\right),\spc\lambda=(\mu a)^{1/2}.$$

\subsection {Convergence to the Plane}\vspace{-8mm}

\hspace{55mm} $z=0$

\npgni Recall that

$$\frac{\dot{z}}{z}=\frac{-\mu(\alpha+\beta+1)}{2\lambda r},\spc\ r=\sqrt{x^{2}+y^{2}+z^{2}}.$$

\npgni Since $\dot{z}=0$ when $z=0$, the orbiting particle takes an infinite time to reach the plane $z=0$, so we seek to understand the motion in this infinite time limit in which $u\rightarrow e$ and

$$\alpha \rightarrow\frac{\left(1-e^{2}\right)}{\left(1+e^{2}-2e\cos v\right)},\spc\beta\rightarrow -\frac{2e\sin v}{\left(1+e^{2}-2e\cos v\right)},$$

\npgni where $v \nearrow \infty$, $v$ being the eccentric anomaly on ellipse $\mathcal{E}_K$, with

$$\frac{dv}{dt}=\sqrt{\frac{\mu}{a^{3}}}\frac{1}{(1-e\cos v)}$$
and
$$\tan^{2}\left(\frac{v}{2}\right)=\frac{(1-e)}{(1+e)}\tan^{2}\left(\frac{\theta}{2}\right),$$

\npgni where $\theta$ is the polar angle of orbiting particle. Hence, we obtain

$$\frac{dz}{dt}\rightarrow - \frac{1}{2}\frac{(2-e^{2}+2e\sin v)}{(1+e^{2}-2e\cos v)} \spc \textrm {as} \spc v \nearrow \infty,$$

\npgni where we assume $z(0)>0$, giving the rate of convergence in the form:\vspace{5mm}

\begin{theorem}

$$z(v) \rightarrow C\left(1+e^{2}-2e\cos v\right)^{-1/2} \exp\left(-\frac{1}{2} \frac{\left(2-e^{2}\right)}{\left(1-e^{2}\right)} \tan^{-1} \left(\frac{1+e}{1-e}\tan \left(\frac{v}{2}\right)\right)\right),$$ \vspace{-5mm}

\npgni for some constant C, as $v \nearrow \infty$ for $0<e<1$.

\end{theorem}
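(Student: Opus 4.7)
The plan is to reduce the problem to a single first-order ODE for $\ln z$ in terms of the eccentric anomaly $v$, then integrate it in closed form. Starting from the $z$-component of $\vct Z$ in Theorem~2.1, the ``force-free in $z$'' part of $\vct Z$ has no $z$-component, so only the $i\mu(1+\gamma)\hat{\vct r}/(2\lambda)$ piece contributes and one extracts
\[
\frac{\dot z}{z}=-\frac{\mu(1+\alpha+\beta)}{2\lambda r}.
\]
In the infinite-time limit $u\to e$ the excerpt already records the limiting values of $\alpha$, $\beta$, and $r=a(1-e\cos v)$. Substituting and converting the independent variable from $t$ to $v$ via Kepler's equation $dv/dt=\sqrt{\mu/a^{3}}/(1-e\cos v)$ cancels the factor $1-e\cos v$ in $r$, leaving a rational function of $\sin v$, $\cos v$:
\[
\frac{d\ln z}{dv}=-\frac{A(e)+B(e)\sin v}{1+e^{2}-2e\cos v}.
\]

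The integration is then a two-part exercise. The $\sin v$ piece has the form of a logarithmic derivative of $1+e^{2}-2e\cos v$ and contributes the algebraic prefactor $(1+e^{2}-2e\cos v)^{-1/2}$ seen in the theorem. The constant-in-the-numerator piece is evaluated by the Weierstrass substitution $t=\tan(v/2)$, giving the standard antiderivative
\[
\int\frac{dv}{1+e^{2}-2e\cos v}=\frac{2}{1-e^{2}}\tan^{-1}\!\left(\frac{1+e}{1-e}\tan\frac{v}{2}\right),
\]
taken on the continuous branch of $\tan^{-1}$ (extended by adding $\pi$ at each $v=(2k+1)\pi$) so that it is monotone in $v$ and grows like $v/2$ as $v\to\infty$. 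Matching the coefficient in front of this arctan against the exponent $-\tfrac{1}{2}(2-e^{2})/(1-e^{2})$ determines $A(e)$, and the $\sin v$ piece pins down $B(e)$. Exponentiating the sum of the two primitives and absorbing the additive constant into $C$ yields the stated asymptotic form.

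The final step is to justify that this is truly the leading asymptotic of $z(v)$ rather than merely the solution of the formally limiting ODE. This reduces to checking that the error from replacing $\alpha(u,v)$, $\beta(u,v)$, and $r$ by their $u=e$ limits is integrable against $dt$ over $(0,\infty)$; since $\alpha,\beta$ are smooth in $u$ at $u=e$, the correction is $O(|u-e|)$ uniformly in $v$, and the monotone convergence $u\nearrow e$ guaranteed by Theorem~3.2 (via $R\nearrow R_{\textrm{max}}$ on $\mathcal{E}_{K}$) supplies the required integrability, producing only a bounded multiplicative correction absorbable into $C$. The main obstacle I expect is not the integration but the branch choice for $\tan^{-1}$: under a principal-value interpretation the right-hand side is bounded and oscillatory in $v$, which would fail to drive $z$ to zero, whereas the continuous-branch interpretation contributes the linear-in-$v$ growth responsible for the claimed exponential decay.
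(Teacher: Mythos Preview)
Your approach is essentially the paper's own: the derivation in Section~6.1 preceding the theorem is exactly the recipe you describe --- start from $\dot z/z=-\mu(1+\alpha+\beta)/(2\lambda r)$, pass to the $u\to e$ limits of $\alpha,\beta,r$, convert to the eccentric anomaly via Kepler's equation, and integrate to obtain the stated form. The paper gives no proof after the theorem statement; the argument is the preceding paragraph, and you have reconstructed it with more detail (the Weierstrass substitution, the continuous-branch choice for $\tan^{-1}$, and the integrability argument justifying that the $u=e$ limiting ODE really captures the leading asymptotic), all of which the paper leaves implicit.
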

\subsection {Convergence to Kepler Ellipse in 3-Dimensions}\vspace{-5mm}

\npgni The Keplerian coordinates $(u,v)$ come into their own in this setting:

$$x=\frac{2ae(\cos v-u)}{(e+u)}, \spc y=\frac{2ae\sqrt{\left(1-u^{2}\right)}\sin v}{(e+u)},\spc v\in \mathbb{R}.$$

\npgni The Kepler ellipse $\mathcal{E}_K$ corresponds to $u=e$, the singularity $\Sigma$ to $u=1$ and $u=-e$ is the curve at infinity.

$$\alpha=\frac{\sqrt{\left(1-u^{2}\right)\left(1-e^{2}\right)}}{(1+eu-(e+u)\cos v)},\spc \beta=-\frac{(e+u)\sin v}{(1+eu-(e+u)\sin v)},$$

$$b_{x}=\frac{\mu}{2\lambda e}\frac{\left(eu-(e-u)\cos v-(e+u)\sin v+\sqrt{\left(1-e^{2}\right)\left(1-u^{2}\right)}-1\right)}{(1-u\cos v)},$$

$$b_{y}=\frac{\mu}{2\lambda e}\frac{\left(\sqrt{1-u^{2}}(e\cos v-e\sin v+1)+\sqrt{1-e^{2}}(u\cos v+u\sin v-1)\right)}{(1-u\cos v)}$$\vspace{-8mm}

\npgni and the Jacobian

$$\Delta=\frac{\partial(x,y)}{\partial(u,v)}=\frac{4a^{2}e^{2}}{(e+u)^{3}\sqrt{1-u^{2}}}(u\cos v-1)((e+u)\cos v+1+eu).$$

\npgni In the infinite time limit we can take $v$ to be the eccentric anomaly on the Kepler ellipse, $\mathcal{E}_K$, itself. With this proviso we consider the ordinary differential equation

$$\frac{du}{dv}=\frac{b_{u}(u,v)}{b_{v}(u,v)},\spc \frac{du}{dt}=b_{u}(u,v),\spc\frac{dv}{dt}=b_{v}(u,v),$$

\npgni where in the limit\vspace{-3mm}

$$b_{v}=\sqrt{\frac{\mu}{a^{3}}}\frac{1}{(1-e\cos v)},$$

\npgni $b_{u}$ as given in section 2.3. We then have the simple result:\vspace{5mm}

\begin{theorem}
For $v>v_{0}$, $(v-v_{0})$ fixed as $v_{0} \nearrow \infty$,
$$\frac{u(v)-e}{u(v_{0})-e}\rightarrow \left(\frac{\cos v_{0}+\frac{1}{2}\left(e+e^{-1}\right)}{\cos v+\frac{1}{2}\left(e+e^{-1}\right)}\right)^{\frac{1}{2}}\exp\left(-(f(v)-f(v_{0}))\right),$$
\npgni where $f(v)=\dfrac{v}{2}+\tan^{-1}\left(\dfrac{1-e}{1+e}\tan\left(\dfrac{v}{2}\right)\right).$
\end{theorem}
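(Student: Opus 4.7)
The plan is to derive the stated asymptotic by linearizing the ordinary differential equation $du/dv = b_u(u,v)/b_v(u,v)$ about the classical orbit $u = e$. The analysis of Section 2.2 combined with Theorem 3.2 guarantees $R(\mathbf{X}_t^0) \nearrow R_{\max}$, i.e.\ $u(v) \to e$ as $v \to \infty$, so writing $\delta = u - e$ and keeping $(v-v_0)$ fixed while $v_0 \nearrow \infty$, the perturbation $\delta(v)$ stays uniformly small over the interval, making linearization about $u=e$ justified. First I would substitute $u = e + \delta$ into the explicit formulas for $b_u,b_v$ from Section 2.3. The outer factor $((e+u)/(2e))^2$ contributes $1+O(\delta)$, and the inner bracket $\sqrt{1-u^2}(e+c-s) - \sqrt{1-e^2}(u+c-s)$ vanishes at $\delta = 0$ with linearization coefficient $-(1 + e\cos v - e\sin v)/\sqrt{1-e^2}$. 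Using $b_v\rceil_{u=e} = \sqrt{\mu/a^3}/(1-e\cos v)$, which one verifies is already the Kepler equation limit flagged in Section 2.3, the ratio simplifies to the linear ODE
$$\frac{d\delta}{dv} \;=\; -A(v)\,\delta + O(\delta^2), \qquad A(v) = \frac{1 + e\cos v - e\sin v}{1 + e^2 + 2e\cos v}.$$

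Next I would integrate this linear ODE in closed form by splitting the coefficient into two pieces. A half-angle substitution $\tau = \tan(v/2)$ together with the identity $(1+e)^2\cos^2(v/2) + (1-e)^2\sin^2(v/2) = 1+e^2+2e\cos v$ yields
$$f'(v) \;=\; \frac{1+e\cos v}{1+e^2+2e\cos v},$$
while $-e\sin v/(1+e^2+2e\cos v) = \tfrac12\,(d/dv)\ln(1+e^2+2e\cos v)$ by inspection. Hence
$$\int_{v_0}^{v} A(v')\,dv' \;=\; \bigl(f(v)-f(v_0)\bigr) + \frac{1}{2}\ln\!\left(\frac{1+e^2+2e\cos v}{1+e^2+2e\cos v_0}\right).$$
Exponentiating and rewriting $1+e^2+2e\cos v = 2e\bigl(\cos v + \tfrac12(e + e^{-1})\bigr)$ (the factor $2e$ cancels between numerator and denominator of the ratio) gives precisely the claimed limit formula for $\delta(v)/\delta(v_0)$.

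The main obstacle is justifying the passage from the full nonlinear ODE to its linearization uniformly in the double limit where $v_0 \nearrow \infty$ and $v - v_0$ is held fixed. The key point is that $\delta(v_0) \to 0$ in this limit, so a Gronwall estimate applied to the $O(\delta^2)$ remainder shows the nonlinear correction over a bounded $v$-interval is subdominant relative to $\delta(v_0)$ itself; thus the ratio $\delta(v)/\delta(v_0)$ genuinely converges to the value predicted by the linearized equation. A secondary technical point is ensuring that neither denominator $(1-e\cos v)$ nor $(1+e^2+2e\cos v)$ vanishes for $0 < e < 1$, which is immediate since $|e\cos v| < 1$ and $1+e^2 > 2e \ge 2e|\!\cos v|$. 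These observations complete the reduction to the explicit integration performed above.
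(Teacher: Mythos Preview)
Your proposal is correct and follows essentially the same route as the paper: both linearize the ODE $du/dv=b_u/b_v$ about the fixed point $u=e$, obtain $\tfrac{d}{dv}\ln(u-e)\to \partial_u b_u(e,v)/b_v(e,v)$, and integrate explicitly. Your write-up is in fact more careful than the paper's---you spell out the integration via $f'(v)=(1+e\cos v)/(1+e^2+2e\cos v)$ and the logarithmic derivative of $1+e^2+2e\cos v$, and your coefficient $A(v)$ is the correct one (the paper's displayed intermediate expression has $\tfrac12$ where it should read $\tfrac{1}{2e}$, though the stated final result is right).
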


\begin{proof}

Since $b_{u}(e,v)=0$,

$$\frac{d}{dv}\ln(u(v)-e)=\frac{\dfrac{du}{dv}}{(u-e)}=\frac{b_{u}(u,v)}{b_{v}(u,v)(u-e)}\rightarrow \frac{\dfrac{\partial b_{u}}{\partial u}(e,v)}{b_{v}(e,v)},$$

\npgni giving

$$\frac{d}{dv}\ln(u(v)-e)\rightarrow - \frac{1}{2}\frac{(1+e\cos v-e\sin v)}{(\cos v+\frac{1}{2}(e+e^{-1}))},$$

\npgni from which the result follows.

\end{proof}\vspace{-5mm}

\subsection {Transition to the Classical Era}

\npgni The cognoscenti will regard the above as a ‘half Nelson’. To them it will be important to repeat the above argument but include the Nelson noise coming from stochastic mechanics (Refs. [18],[19],[20],[21]). In this set up for orbiting particles of mass $m_{0}$, $\epsilon^{2}=\frac{\hbar}{m_{0}}$ and we seek the limit as $m_{0}\nearrow \infty$. Here we model this situation by assuming $\epsilon =\epsilon (u)\searrow 0$ as $u \rightarrow e$, as a result of mass accretion. In the last section $u(v)$, the solution of the above ordinary differential equation, corresponds to $u^{\epsilon}$ when $\epsilon=0$ where late in the semi-classical era $u^{\epsilon}$ satisfies, as $\epsilon \sim 0$,

$$du^{\epsilon}=b_{u}(u^{\epsilon},v)dt+\epsilon dN_{u}(t),$$

$$\frac{dv}{dt}=\sqrt{\frac{\mu}{a^{3}}}\frac{1}{(1-e\cos v)},\spc N_{u}(t),\;\;\textrm {Nelson noise}.$$

\npgni So here we denote $u(v)$ as $u^{0}(v)$.

\npgni A simple calculation yields for $\vct{r}^{\bot}_{v}=\left(\dfrac{\partial y}{\partial v},-\dfrac{\partial x}{\partial v}\right)$\vspace{5mm}

$$dN_{u}(t)=\Delta^{-1}(\vct{X}^{0}_{t})\vct{r}^{\bot}_{v}.d\vct{B}(t),$$

\npgni where $\vct{B}(t)=(B_{x},B_{y}),$ is a BM($\mathbb{R}^{2}$) process.

\npgni A simple time-change to the elliptic anomaly $v$ gives

$$dN_{u}(v)=\frac{(1-e\cos v)^{1/2}}{a^{2}e}\left(\frac{a^{3}}{\mu}\right)^{1/4}\frac{\left(a \sqrt{1-e^{2}}\cos v dB_{x}(v)+a\sin v dB_{y}(v)\right)}{(\cos v-e^{-1})(\cos v+\frac{1}{2}(e+e^{-1}))}.$$

\npgni So for $W$, a BM($\mathbb{R}$) process, correlated to $B_{x}$ and $B_{y}$,

$$du^{\epsilon}(v)=\frac{b_{u}}{b_{v}}dv+\frac{\epsilon(u^{\epsilon})\sqrt{1-e^{2}}\sqrt{1+e\cos v)}}{(a\mu)^{1/4}(\cos v+\frac{1}{2}(e+e^{-1}))}dW(v).$$

\npgni Assuming $\epsilon(u=e)=0$, letting $v\nearrow \infty$ gives

$$d\ln \left(\frac{u^{\epsilon}-e}{u^{0}-e}\right)=\frac{\epsilon^{\prime}(e)\sqrt{1-e^{2}}\sqrt{1-e\cos v}}{(a\mu)^{1/4}(\cos v+\frac{1}{2}(e+e^{-1}))}dW(v).$$

\npgni So, if $u^{\epsilon}$ and $u^{0}$ agree initially in this era,

$$u^{\epsilon}(v)=(u^{0}(v)-e)\exp\left(\frac{\epsilon^{\prime}(e)\sqrt{1-e^{2}}}{(a\mu)^{1/4}}W(g(v))\right),$$\vspace{-8mm}

\npgni where

$$g(v)=\int_{v_{0}}^{v} \dfrac{(1+e\cos v)}{(\cos v +\frac{1}{2}(e+e^{-1}))^{2}}dv.$$

\npgni For small eccentricities, $e\sim 0$, we have the log normal result:

$$\ln \left(\frac{u^{\epsilon}-e}{u^{0}-e}\right) \sim \frac {2e \epsilon^{\prime}(e)}{(a\mu)^{1/4}}W(v)\spc \textrm{as} \spc v\sim\infty,$$\vspace{-5mm}

\npgni where $v$ is the eccentric anomaly on the Kepler ellipse $\mathcal{E}_{u=e}$.

\npgni More generally as $v\sim\infty$,

$$\ln \left(\frac{u^{\epsilon}(v)-e}{u^{0}(v)-e}\right) \sim \frac {2e \epsilon^{\prime}(e)}{\sqrt{1-e^{2}}(a\mu)^{1/4}}W\left(\frac{2}{(1-e^{2})}\tan^{-1}\left(\frac{1-e}{1+e}\tan\frac{v}{2}\right)+\frac{e\sin v}{(1+2e\cos v+e^{2})}\right),$$\vspace{-8mm}

\npgni and\vspace{-3mm}

$$\left[\sqrt{\frac{\mu}{a^{3}}}(v^{\epsilon}(v)-v)\right]^{\beta}_{\alpha}\sim \int^{\beta}_{\alpha}\frac{\partial b_{v}}{\partial u}(e,v)(u^{\epsilon}(v)-e)(1-e\cos v)dv,$$

\npgni where $v^{\epsilon}$ is the eccentric anomaly on $\mathcal{E}_{u^{\epsilon}}$.

\npgni These are Nelsonian quantum corrections to Kepler's laws to first order of approximation in a neighbourhood of the classical orbit for the atomic elliptic state of Lena, Delande and Gay expressed in terms of Keplerian coordinates. We hope to investigate these more fully for $\epsilon=\epsilon(u)=\sqrt{\frac{\hbar}{m_{0}}}$, where $m_{0}=m_{0}(u)\sim \infty$ as $u \rightarrow e$, $m_{0}$ so large that $\epsilon(u=e)=O(\epsilon_{0}^{2})$, where $\epsilon_{0}=\sqrt{\frac{\hbar}{m_{00}}}$, $m_{00}$ being the initial mass of the orbiting particle before mass accretion starts.

\subsection {Complex Constants of the Motion}

For the stationary state, $\psi_{n,e}$, the Hamiltonian, $H=2^{-1}\vct{P}^{2}-\mu|\vct{Q}|^{-1}$, is a constant, $H=E$, $E<0$ being the energy. Moreover, if $\tilde{\vct{A}}=(-2E)^{-1/2}\vct{A}$, $\tilde{\vct{A}}$ being the Hamilton–Lenz–Runge vector and $\vct{L}$ is the orbital angular momentum, $\vct{A}$ and $\vct{L}$ are quantum constants of the motion generating the dynamical symmetry group SO(4). Setting the Bohr correspondence limits equal to $\tilde{\vct{a}}=(\tilde{a_{1}},\tilde{a_{2}},\tilde{a_{3}})$, $\boldsymbol \ell=(\ell_{1},\ell_{2},\ell_{3})$, of $\tilde{\vct{A}}$ and $\vct{L}$ defined by the Bohr limits of cartesian coordinates,

$$\tilde{a_{i}}=\lim \psi_{n,e}^{-1} \tilde{A_{i}} \psi_{n,e}, \spc \ell_{i}=\lim \psi_{n,e}^{-1} L_{i} \psi_{n,e}, \spc i=1,2,3,$$

\npgni where $\psi_{n,e}\ne 0$, for $\vct{Z}=\lim(-i\hbar\boldsymbol \nabla\ln\psi_{n,e})$, $\vct{Z}=-i\boldsymbol\nabla R+\boldsymbol\nabla S$, we obtain $\boldsymbol \ell=\vct{r}\wedge \vct{Z}(\vct{r})$, $\vct{a}=\vct{Z}\wedge(\vct{Z}\wedge\vct{r})-\mu r^{-1}\vct{r}$, the semi-classical variables inheriting Pauli’s identities for $\vct{L}$ and $\vct{A}$. So, defining $\tilde{\vct{a}}=(-2E)^{-1}\vct{a}$, we have the following identities:-

\begin{align*}
\ell_{3}+\tilde{a_{1}}\sin\theta&=\lambda \spc(1) \\
\ell_{1}\cos\theta+i\ell_{2}-\tilde{a_{3}}\sin\theta&=0 \spc(2) \\
\tilde{a_{3}}\cos \theta+\ell_{1}\sin \theta&=0 \spc(3) \\
\tilde{a_{1}}\cos \theta+i\tilde{a_{2}}-\ell_{3}\sin \theta&=0 \spc(4)
\end{align*}\vspace{-10mm}

\npgni where $\sin \theta=e$, the eccentricity. (Refs. [16],[17]). Here $E$ has to be the classical energy. Combining equations for $\lambda=\mu(-2E)^{-1/2},$

\begin{align*}
(1)\times \sin \theta+(4)\times \cos \theta \spc \tilde{a_{1}}+i\tilde{a_{2}}\cos \theta=\lambda\sin \theta\spc\textrm{(i)} \\
(2)\times \cos \theta+(3)\times \sin \theta \spc\spc \ell_{1}+i\ell_{2}\cos \theta=0\spc\textrm{(ii)}
\end{align*}\vspace{-10mm}

\npgni So $\ell_{1}=-i\ell_{2}\cos \theta$ giving in (3)\vspace{-5mm}

\begin{align*}
\tilde{a_{3}}-i\ell_{2}\sin \theta=0\spc\textrm{(iii)} \\
\ell_{3}-i\tilde{a_{2}}\sin \theta=\lambda\cos \theta\spc\textrm{(iv)}
\end{align*}\vspace{-10mm}

\npgni Writing $\boldsymbol\ell=\boldsymbol\ell^{\textrm{r}}+i\boldsymbol\ell^{\textrm{i}}$, $\tilde{\vct{a}}=\tilde{\vct{a}}^{\textrm{r}}+i\tilde{\vct{a}}^{\textrm{i}}$, r real part, i imaginary part, assuming non-zero denominators,

$$\frac{\ell_{3}^{\textrm{i}}}{\tilde{a}_{2}^{\textrm{r}}}=-\frac{\tilde{a}_{3}^{\textrm{r}}}{\ell_{2}^{\textrm{i}}}=\frac{\tilde{a}_{3}^{\textrm{i}}}{\ell_{2}^{\textrm{r}}}=e, \spc \frac{\ell_{1}^{\textrm{i}}}{\ell_{2}^{\textrm{r}}}=-\frac{\ell_{1}^{\textrm{r}}}{\ell_{2}^{\textrm{i}}}=\frac{\tilde{a}_{1}^{\textrm{i}}}{\tilde{a}_{2}^{\textrm{r}}}=-\sqrt{1-e^{2}}.$$

\npgni This leaves the real parts of Equations (i) and (iv), giving, where $\psi_{\textrm{sc}}\ne0$,

$$\cos \theta=\frac{\lambda \ell_{3}^{\textrm{r}}+\tilde{a}_{1}^{\textrm{r}}\tilde{a}_{2}^{\textrm{i}}}{(\ell_{3}^{\textrm{r}})^{2}+(\tilde{a}_{1}^{\textrm{r}})^{2}},\spc \sin \theta=\frac{\lambda \tilde{a}_{1}^{\textrm{r}}-\ell_{3}^{\textrm{r}}\tilde{a}_{2}^{\textrm{i}}}{(\ell_{3}^{\textrm{r}})^{2}+(\tilde{a}_{1}^{\textrm{r}})^{2}};\spc \sin \theta=e.$$

\npgni (Ref. [22]).

\subsection {0/0 Limits for the Kepler Problem}

\npgni The last two identities are the generalisations of Newton’s results for planetary motion:

$$e=\sqrt{1+\frac{2\Lambda E}{\mu}},\spc \Lambda=\frac{(\ell_{3}^{\textrm{r}})^{2}}{\mu},$$

\npgni $\Lambda$ being the semi-latus rectum of $\mathcal{E}_K$, the Keplerian ellipse. As $z\rightarrow0$ in approaching the plane of $\mathcal{E}_K$, in a 3-dimensional neighbourhood of $\mathcal{E}_K$, with the exception of $\ell_{3}^{\textrm{r}}$ and $\tilde{a}_{1}^{\textrm{r}}$, the above observables are zero in the limit or are small $O(|\boldsymbol\nabla R|)$ where $|\boldsymbol\nabla R|\rightarrow0$ as we approach $\mathcal{E}_K$.

\npgni In fact $\boldsymbol\ell^{\textrm{i}}=-(\vct{x}\wedge\boldsymbol\nabla R)$, $\boldsymbol\ell^{\textrm{r}}=(\vct{x}\wedge\boldsymbol\nabla S)$, $\vct{a}^{\textrm{i}}=-\boldsymbol\nabla S\wedge(\vct{x}\wedge\boldsymbol\nabla R)-\boldsymbol\nabla R\wedge(\vct{x}\wedge\boldsymbol\nabla S)$,

\npgni $\vct{a}^{\textrm{r}}=-\boldsymbol\nabla S\wedge(\vct{x}\wedge\boldsymbol\nabla S)-\boldsymbol\nabla R\wedge(\vct{x}\wedge\boldsymbol\nabla R)-\dfrac{\mu}{|\vct{x}|}\vct{x}$ giving when $z=0$, $\ell_{1}^{\textrm{i}}=\ell_{2}^{\textrm{i}}=\ell_{1}^{\textrm{r}}=\ell_{2}^{\textrm{r}}=0$ and

$$\frac{\ell_{3}^{\textrm{i}}}{\tilde{a}_{2}^{\textrm{r}}}\rightarrow\frac{\left(y\dfrac{\partial R}{\partial x}-x\dfrac{\partial R}{\partial y}\right)\sqrt{-2E}}{r\left(\dfrac{\partial S}{\partial r}\dfrac{\partial R}{\partial y}+\dfrac{\partial S}{\partial y}\dfrac{\partial R}{\partial r}\right)},$$

$$\frac{\tilde{a}_{1}^{\textrm{i}}}{\tilde{a}_{2}^{\textrm{r}}}\rightarrow\frac{\left(\dfrac{\partial S}{\partial r}\dfrac{\partial R}{\partial r}+\dfrac{\partial R}{\partial r}\dfrac{\partial S}{\partial x}\right)}{\left(\dfrac{\partial S}{\partial r}\dfrac{\partial R}{\partial y}+\dfrac{\partial S}{\partial y}\dfrac{\partial R}{\partial r}\right)}.$$

\npgni Since the above are $C^{1}$ functions of $u$ for $u\in(e,e+\delta e)$ by de l’Hopital\vspace{3mm}

$$\frac{\ell_{3}^{\textrm{i}}}{\tilde{a}_{2}^{\textrm{r}}}\rightarrow\left.\frac{\dfrac{d}{du}\left(y\dfrac{\partial R}{\partial x}-x\dfrac{\partial R}{\partial y}\right)\sqrt{-2E}}{r\dfrac{d}{du}\left(\dfrac{\partial S}{\partial r}\dfrac{\partial R}{\partial y}+\dfrac{\partial S}{\partial y}\dfrac{\partial R}{\partial r}\right)}\right\rceil_{u=0}\spc \textrm{as} \;\; u\rightarrow e,$$ 

\npgni with a similar result for $\dfrac{\tilde{a}_{1}^{\textrm{i}}}{\tilde{a}_{2}^{\textrm{r}}}$. So for 2-dimensional motion in a neighbourhood of $\mathcal{E}_K$, differentiating with respect to u the identities $\tilde{a}_{1}^{\textrm{i}}+\sqrt{1-e^{2}}\tilde{a}_{2}^{\textrm{r}}=0$, $\ell_{3}^{\textrm{i}}-e\tilde{a}_{2}^{\textrm{r}}=0$,

$$\left.\frac{\tilde{a}_{1}^{\textrm{i}\;\prime}}{\tilde{a}_{2}^{\textrm{r}\;\prime}}\right\rceil_{u=e}=-\sqrt{1-e^{2}},\spc \left.\frac{\ell_{3}^{\textrm{i}\;\prime}}{\tilde{a}_{2}^{\textrm{r}\;\prime}}\right\rceil_{u=e}=e\sqrt{-2E},\;\;\forall v,$$

\npgni i.e.\vspace{-5mm}

$$\frac{\tilde{a}_{1}^{\textrm{i}}}{\tilde{a}_{2}^{\textrm{r}}}=-\sqrt{1-e^{2}},\spc \frac{\ell_{3}^{\textrm{i}}}{\tilde{a}_{2}^{\textrm{r}}}=e\sqrt{-2E},\;\;\forall v,$$

\npgni so the left hand sides are constant on $\mathcal{E}_K$ and

$$\sqrt{1-e^{2}}=\frac{\lambda \ell_{3}^{\textrm{r}}+\tilde{a}_{1}^{\textrm{r}}\tilde{a}_{2}^{\textrm{i}}}{(\ell_{3}^{\textrm{r}})^{2}+(\tilde{a}_{1}^{\textrm{r}})^{2}},\spc e=\frac{\lambda \tilde{a}_{1}^{\textrm{r}}-\ell_{3}^{\textrm{r}}\tilde{a}_{2}^{\textrm{i}}}{(\ell_{3}^{\textrm{r}})^{2}+(\tilde{a}_{1}^{\textrm{r}})^{2}}.$$

\npgni These results are inherited from SO(4) symmetry. Note we have proved the constancy of $\dfrac{\ell_{3}^{\textrm{i}}}{\tilde{a}_{2}^{\textrm{r}}}$ and $\dfrac{\tilde{a}_{1}^{\textrm{i}}}{\tilde{a}_{2}^{\textrm{r}}}$ on the Kepler ellipse $\mathcal{E}_K$, as well as the last expressions for $e$ and $\sqrt{1-e^{2}}$, although the last two are just Newton’s results. Similar results can be obtained for Isotropic Oscillators discussed herein.

\section{Conclusion}\vspace{-5mm}

\npgni We have seen in the Bohr correspondence limit the asymptotics of Newtonian gravity for dark matter particles in the atomic elliptic state, $\psi_{n,e}$, for the Kepler problem (and its image under the LCKS transformation - the isotropic harmonic oscillator elliptic state) naturally lead to orbits converging to their classical counterparts. This helps to explain how planets such as Jupiter could be formed in a few million years as well as planetary ring systems such as that of Saturn - something random collisions alone could not explain. Typical of these results is Theorem 3.2 and the preceding arguments. Moreover, we have seen how our isotropic oscillator asymptotics apply to the linearised restricted 3-body problem for Trojan asteroids and their possible formation, whilst presenting two new constants of this motion more generally. There are new results on quantum curvature for this problem given in a forthcoming work (Ref. [28]).\vspace{-3mm}

\npgni Further in this context we have seen how the Bohr correspondence limit of Pauli's equations for the generators of the dynamical symmetry group here, SO(4) and SU(3), lead to 8 identities involving the functions of space $R$ and $S$, where $\psi\sim \textrm{exp}\left(\frac{R+iS}{\epsilon^{2}}\right)$ as $\epsilon^{2}=\frac{\hbar}{m}\rightarrow0$, $m$ being the particle mass. These identities embody complex constants of the motion, involving the real and imaginary parts of the semi-classical angular momentum and Hamilton Lenz Runge vectors. One of these gives an exact analogue in this semi-quantum setting of Newton's results for revolving orbits in Principia (Ref. [3]).

\npgni Of course some of these results reveal quantum corrections to Kepler's laws which could lead to experimental tests of the validity of our ideas. So here we have also given more details of the motion of semi-classical dark matter particles e.g. WIMPs in a neighbourhood of their corresponding classical Keplerian elliptical orbits. For WIMPs carrying some electrical charge in the presence of a magnetic field of a neutron star these results are still relevant. (See forthcoming paper Ref. [27]).

\npgni Moreover, we have investigated how Cauchy's method of characteristics enters into this setting in proving existence and uniqueness theorems for our underlying equations for the functions $R$ and $S$ for stationary states. Here we give just one example of how the quantum particle density limit in the form of $R$ determines the corresponding function $S$. We show how for the zero energy power law this gives the correct limiting orbits in 2 dimensions. Needless to say such results are relevant for all central force problems in this context given angular momentum conservation means the resulting motion ultimately is planar. We plan to give more results of this kind in the future.

\npgni We conclude with the paradigm of semi-classical orbits arising in the Kepler problem - the semi-classical, circular spiral orbit, which we dub the quantum spiral, corresponding to classical circular orbits with radius $a$ and centre $O$, with periodic time $\dfrac{2\pi a^{\frac{3}{2}}}{\mu^{\frac{1}{2}}}$, where $\mu$ is the gravitating point mass at $O$. One never encounters spirals in the classical Kepler problem but the first quantised Kepler problem abounds with them. Could this be a factor in the result that 70\% of galaxies are spiral galaxies and less than 30\% elliptical, spiral galaxies being younger than elliptical ones? Here one would envisage that the proto-galactic nebula of particles would be in a state which is the superposition of orthogonal astronomical elliptic states for different eccentricities and of differing sizes localised in separate parts of the galaxy. The zero eccentricity case is easy to analyse.

\npgni To simplify the presentation we choose unit of length to be $a$ and unit of time to be $\dfrac{2\pi a^{\frac{3}{2}}}{\mu^{\frac{1}{2}}}$. With these assumptions, let $\vct{X}_{t}^{0}=(x_{t},y_{t},z_{t})$ and set $r_{t}=\sqrt{x_{t}^{2}+y_{t}^{2}+z_{t}^{2}}$, $z_{t}=r_{t}\cos(\theta_{t})$ and $\phi_{t}=\tan^{-1}\left(\dfrac{y_{t}}{x_{t}}\right).$ $R$ and $S$ are given explicitly in Ref. [8] and it is proved that the semi-classical equation for $r_{t}$, for $t\ge0$, is

$$\dot{r_{t}}=\frac{1}{r_{t}}-1.$$\vspace{-8mm}

\npgni Given $r_{0}>1$, for $r_{t}>1$, $\dot{r_{t}}<0$, we have

$$r_{t}-1=T(t)\textrm{e}^{-r_{t}},$$\vspace{-8mm}

\npgni where $T(t)=\textrm{e}^{-t}\textrm{e}^{r_{0}}(r_{0}-1)<1$, if $t>r_{0}+\ln(r_{0}-1)$. For sufficiently large $t$, we obtain from Lagrange's expansion

$$r_{t}=1+{\sum_{n=1}^{\infty}}\frac{(T(t))^{n}}{n!}D_{a}^{n-1}(\textrm{e}^{-na})\rceil_{a=1}$$\vspace{-8mm}

\npgni i.e.\vspace{-3mm}

$$r_{t}=1+{\sum_{n=1}^{\infty}}\dfrac{\textrm{e}^{-nt}}{n!}\textrm{e}^{-nr_{0}}(r_{0}-1)^{n}(-n)^{n-1}\textrm{e}^{-n}\rightarrow 1\spc \textrm{as}\spc t\nearrow \infty.$$\vspace{-5mm}

\npgni We also proved in the same reference that

$$z_{t}=r_{t}\cos(\theta_{t})=\frac{r_{t}}{r_{0}}\textrm{exp}\left(-\int_{0}^{t}\frac{ds}{r_{s}^{2}}\right)z_{0}\rightarrow\frac{r_{t}}{r_{0}}z_{0}\textrm{e}^{-t}\rightarrow0\spc \textrm{as}\spc t\nearrow \infty$$\vspace{-8mm}

\npgni and, moreover,\vspace{-5mm}

$$r_{t}^{2}\dot{\phi_{t}}\rightarrow1\spc \textrm{as}\spc t\nearrow \infty.$$\vspace{-10mm}

\npgni This is the simplest example of the spiral orbits converging to the Keplerian ellipses in the infinite time limit with eccentricity zero - see Figure 5 below. The same results obtain for any Keplerian ellipse with eccentricity $e$, $0<e<1$ in our semi-classical treatment.

\begin{center}
\includegraphics[scale=0.75]{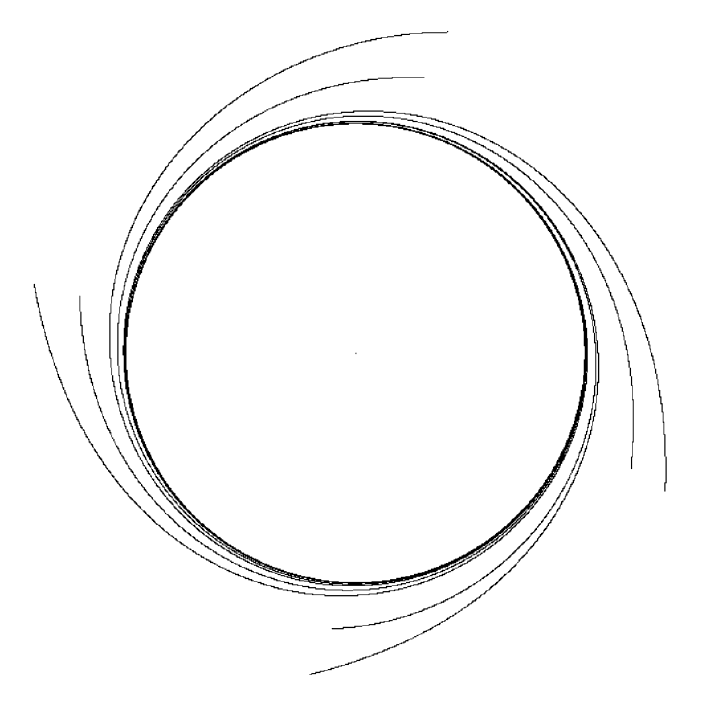}\\
Figure 5.\;:\;\;$\psi_{n,n-1,n-1}(x,y,z)=(x+iy)^{n-1}\textrm{exp}\left(-\dfrac{(x^{2}+y^{2}+z^{2})^{\tph}}{\epsilon^{2}}\right)$.
\end{center}

\npgni The above diagram gives the computer simulation of the transition of the structure of a spiral galaxy with 4 trailing arms to that of an elliptical galaxy with zero eccentricity. This is easily generalised and argues that there is an abundance of dark matter and other similar material in the observable universe behaving in a manner predicted by a first quantisation of the Kepler problem. Hence the present work on its asymptotics. We believe that the first place to look for these is the non-relativistic molecular gases in the arms of spiral galaxies. To test our ideas one needs to measure the quantum curvature and torsion of particle trajectories in the spiral tails of galaxies. For circular Keplerian spirals one can compute the quantum curvature and quantum torsion in 3-dimensions of the orbit $\vct{X}_t^0=(x,y,z)$, viz.

$$\kappa_{q}=\pm\frac{a^2(2(2z^2r^2+\rho^4)z^2\rho^2+(2ar^3-\rho^4)^2)^\frac{1}{2}}{(2a^2r^2-2a\rho^2r+\rho^4+\rho^2z^2)^\frac{3}{2}\rho},$$

$$\tau_{q}=\frac{(3\rho^8-a(\rho^4+z^4)r^3)z}{2(2z^2r^2+\rho^4)z^2\rho^2r^2+(2ar^3-\rho^4)^2r^2},$$

\npgni where $r^2=x^2+y^2+z^2$, $\rho^2=x^2+y^2$ and $a=\dfrac{\lambda^2}{\mu}$ is the radius of the corresponding classical circular orbit.\vspace{-3mm}

\npgni These tests may prove unrealistic so we give two simpler tests - Anti-gravity and Spiral Splitting.\vspace{-2mm}

\npgni \textbf{Anti-gravity}\vspace{-3mm}

\npgni On the ellipse at infinity, with equation, $u=-e$, for any eccentricity $e$, $0<e<1$, $b_{u}=b_{v}=0$, and our semi-classical cloud particles have velocities $\sim\dfrac{a}{T}$ , $T$ being the periodic time for the ellipse with eccentricity $e$, $a$ the semi-major axis. A simple calculation shows the effective potential, $V_{\textrm{eff}}\sim\dfrac{\mu}{r}$ as $r\sim\infty$, corresponding to anti-gravity effects for any $e$. Indeed it is easy to check this for the circular spiral above, in the plane $z=0$, for which\vspace{-3mm}

$$V_{\textrm{eff}}=\frac{\mu}{r}-\frac{\lambda^2}{r^2}-\frac{\mu^2}{\lambda^2},\;\;\;r>0, \;\;\;V_{\textrm{eff}}'(2a)=0.$$\vspace{-8mm}

\npgni The figure below compares the graphs of $V$ (black line) and $V_{\textrm{eff}}$ (red line).\vspace{3mm}

\begin{center}
\includegraphics[scale=0.7]{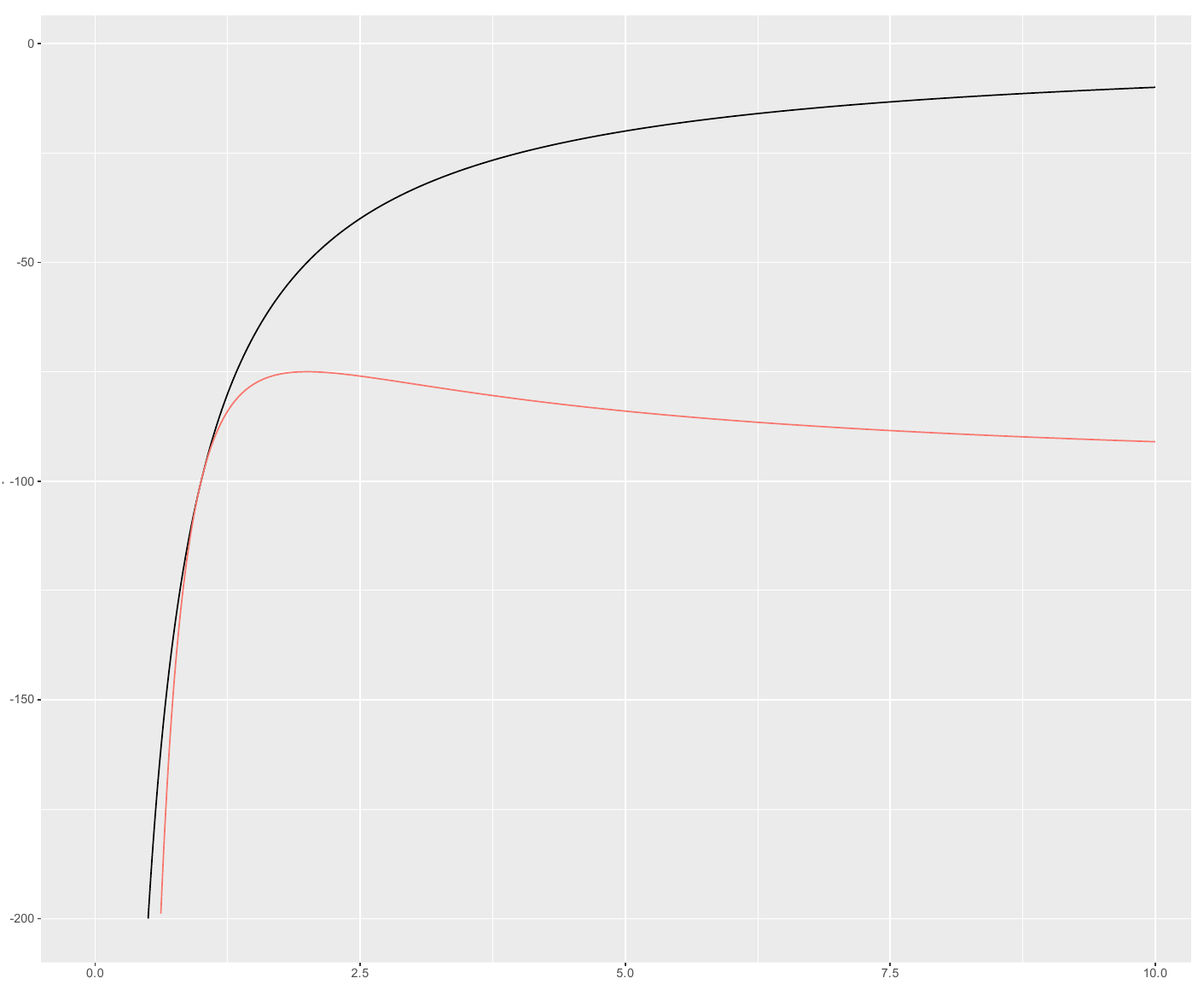}\\
Figure 6.
\end{center}

\npgni So in this case WIMPish particles slow down as they approach the circle $r=2a$, after which they speed up. A similar result holds for small eccentricity $e$.

\npgni Of course when $e>1$ the corresponding hyperbolic orbits ($E>0$) could be mistaken for anti-gravity effects and a repulsive force between inertial frames embedded in galaxies.

\npgni \textbf{Superposition and Spiral Splitting}

\npgni Superposing two stationary state solutions, for possibly different energies, of the semi-classical equations, $\phi=\phi_{1}+\phi_{2}$, where

$$\phi_{j}\sim\textrm{exp}\left(\frac{R_{j}+iS_{j}}{\epsilon^2}\right),\;\;j=1,2\;\;\textrm{and}\;\; \phi\sim\textrm{exp}\left(\frac{R+iS}{\epsilon^2}\right)\;\;\textrm{as}\;\;\epsilon\sim0$$\vspace{-8mm}

\npgni leads to\vspace{-5mm}

$$R\sim\dfrac{\rho_{1}}{\rho_{1}+\rho_{2}}R_{1}+\dfrac{\rho_{2}}{\rho_{1}+\rho_{2}}R_{2},\;\;\rho_{j}=\textrm{exp}\left(\dfrac{2R_{j}}{\epsilon^2}\right),\;\;j=1,2.$$

\npgni So the leading behaviour of the asymptotics gives rise to the formal result,

$$R\sim R_{\textrm{M}},\;\;\;R_{\textrm{M}}(r)=\textrm{max}(R_{1}(r),R_{2}(r)),$$\vspace{-8mm}

\npgni $S\cong S_{\textrm{M}}$,  $R+iS\cong R_{\textrm{M}}+iS_{\textrm{M}}$,  $\boldsymbol{\nabla}R\cong R_{\textrm{M}}$,  $\boldsymbol{\nabla}S\cong S_{\textrm{M}}$, in the formal limit $\epsilon\sim0$. (See Kolokoltsov and Maslov - Ref. [11]).  What does this tell us about the formation of ring systems in celestial mechanics if we take it seriously for two circular spiral states?

\npgni Approximating $R_{\textrm{j}}$ by Taylor's theorem for radii $a_{j}$, $j=1,2$, since

\npgni $R\sim-\sqrt{\dfrac{\mu}{a}}\left(r-a-a\ln{\dfrac{r}{a}}\right)$, so $\rho_{j}(r)\sim\textrm{exp}\left(\dfrac{R_{j}''(a_{j})}{\epsilon^2}(r-a_{j})^2\right)$ as $\epsilon\sim0$, giving\vspace{3mm}

$$\rho_{j}(r)\sim\textrm{exp}\left(-\sqrt{\dfrac{\mu}{a_{j}^3}}\dfrac{(r-a_{j})^2}{\epsilon^2}\right)\;\;\textrm{as}\;\;\epsilon\sim0,$$

\npgni for our circular spirals. So, if we ask when is $R_{1}(r)=R_{2}(r)$, we obtain the two circles $r=r_{\pm}(a_{1},a_{2})$, where\vspace{-2mm}

$$\dfrac{r-a_{1}}{a_{2}-r}=\pm\left(\dfrac{a_{1}}{a_{2}}\right)^{\frac{3}{4}},\;\;\;0<r_{-}<a_{1}<r_{+}<a_{2}<\infty.$$\vspace{-6mm}

\npgni This means that in this rather crude limit we have spiral splitting on the circles $r=r_{\pm}$, which of course will be most pronounced for $r=r_{+}$, the angle of splitting, $\theta_{12}$, being given by\vspace{-3mm}

$$\cos{\theta_{12}}=\widehat{\boldsymbol{\nabla}(R_{1}+S_{1})}.\widehat{\boldsymbol{\nabla}(R_{2}+S_{2})}\rceil_{r=r_{\pm}}.$$\vspace{-10mm}

\npgni Hence we have a rough and ready test of our ideas for ring systems being formed. Moreover, this gives the semi-classical limit of the wave function for planetary systems such as our own.

\npgni Let $\phi=\sum\limits_{j=1}^{n}\chi_{_{(R_{M}=R_{j})}}c_{j}\phi_{j}$, $\phi_{j}$ corresponding to coplanar circular spiral orbits, with radii $a_{j}$, $a_{1},a_{2},\cdots,a_{n}$, arranged in ascending order, normalised so that $|\phi_{j}(a_{j})|=1$ and $c_{j}>0$, with $c_{j}^2$ the relative probability of the circular WIMP ending in the $j^{\textrm{th}}$ ring. We know that for sufficiently small $|r-a_{i}|$, $R_{\textrm{M}}(r)=R_{i}(r)$. We denote by $\bigtriangleup^{+}r_{i}$, $r=(a_{i}+\bigtriangleup^{+}r_{i})$, $\bigtriangleup^{+}r_{i}>0$, where $|r-a_{i}|$ is the least value for which the $i$ in the equation $R_{\textrm{M}}(r)=R_{i}(r)$ changes i.e. $R_{i}(r)=R_{k}(r)$, for some $k$, $0<k\le n$.\vspace{5mm}

\begin{corollary}

Potentially $\bigtriangleup^{+}r_{i}=\min\limits_{i\ne j}\left(\dfrac{\pm\left(\dfrac{a_{i}}{a_{j}}\right)^{\frac{3}{4}}}{1\pm\left(\dfrac{a_{i}}{a_{j}}\right)^{\frac{3}{4}}}(a_{j}-a_{i})\right)$, but only the

\npgni plus signs are relevant and then only those for which the minimum is positive.
    
\end{corollary}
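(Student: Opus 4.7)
The strategy is to start from the equation $R_i(r)=R_j(r)$ already derived in the paragraphs above the corollary and, for each $j\neq i$, solve it explicitly for $r-a_i$; the quantity $\bigtriangleup^{+}r_{i}$ is then, by definition, the smallest strictly positive value obtained as $j$ runs over indices distinct from $i$.

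First I would record the two relations that do the heavy lifting. From the Taylor approximation of $R_j$ near the $j$th classical radius, together with $R_j''(a_j)\sim -\sqrt{\mu/a_j^{3}}$ up to constants, the condition $R_i(r)=R_j(r)$ reduces (exactly as in the computation just above the corollary) to
$$\frac{r-a_i}{a_j-r}=\pm\left(\frac{a_i}{a_j}\right)^{\tfrac{3}{4}}.$$
Solving this linear equation for $r$ and subtracting $a_i$ yields
$$r-a_i\;=\;\frac{\pm\left(\dfrac{a_i}{a_j}\right)^{\tfrac{3}{4}}(a_j-a_i)}{1\pm\left(\dfrac{a_i}{a_j}\right)^{\tfrac{3}{4}}},$$
which is precisely the candidate expression inside the $\min$.

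Next I would do the sign analysis that picks out the physically relevant circle. Since $\bigtriangleup^{+}r_{i}>0$ by definition, I retain only those $(j,\text{sign})$ combinations for which the right-hand side above is strictly positive. A short case check, splitting on $j>i$ versus $j<i$ and on the choice of $\pm$, shows: when $j>i$ the plus sign produces a positive numerator and denominator and hence a positive $r-a_i$, while the minus sign either reverses the numerator or makes the denominator negative; when $j<i$, both choices of sign fail to give a positive value (in each case the numerator and denominator have opposite signs). Consistency with the order relation $0<r_-<a_i<r_+<a_j$ established earlier confirms that the upward switch from $R_M=R_i$ to $R_M=R_j$ happens at $r_+$, corresponding exactly to the plus sign with $j>i$. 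Taking the minimum of these positive values over $j$ produces $\bigtriangleup^{+}r_{i}$ and gives the claimed formula; the qualifier \emph{``potentially''} covers configurations in which none of the candidate $j$'s produces the relevant crossover inside the domain of validity of the Taylor expansion.

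The main obstacle is really just the sign bookkeeping in the last step, because the statement packages both crossing circles of the quadratic $R_i=R_j$ into one formula while insisting that only one combination is geometrically admissible. The Taylor step is harmless (it is already used by the authors to set up the problem), and solving a linear equation is trivial; the only place to go wrong is in tracking which of the four sign patterns $(\pm,\ j\gtrless i)$ corresponds to a first crossing of $R_M$ to the right of $a_i$. Once that is done carefully, the corollary follows directly.
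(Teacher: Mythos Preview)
Your proposal is correct and follows essentially the same route as the paper: derive the candidate expression from the linear equation $\dfrac{r-a_i}{a_j-r}=\pm(a_i/a_j)^{3/4}$ set up just before the corollary, then do a sign check with $\alpha=(a_i/a_j)^{3/4}$ split on $j\gtrless i$. The paper's own proof is terser --- it only writes out the minus-sign case explicitly and leaves the plus-sign-with-$j<i$ case to the clause ``only those for which the minimum is positive'' in the statement --- whereas you handle all four $(\pm,\,j\gtrless i)$ combinations, but the argument is the same.
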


\begin{proof}

Only the plus sign gives $\bigtriangleup^{+}r_{i}>0$ as we see. Setting $\alpha=\left(\dfrac{a_{i}}{a_{j}}\right)^{\frac{3}{4}}$, for $j>i$, $0<\alpha<1$; and for $j<i$, $\alpha>1.$ So observing that for the minus sign, for $j>i$, $a_{j}-a_{i}>0$

$$-\dfrac{\alpha}{1-\alpha}(a_{j}-a_{i})=\dfrac{\alpha}{\alpha-1}(a_{j}-a_{i})<0$$

\npgni and for $j<i$, $a_{j}-a_{i}<0$ and $\dfrac{\alpha}{1-\alpha}<0$.
    
\end{proof}

\npgni Similarly we define $\bigtriangleup^{-}r_{i}$ by $r=r_{i}-\bigtriangleup^{-}r_{i}$, $\bigtriangleup^{-}r_{i}>0$, where $R_{\textrm{M}}(r)=R_{i}(r)$ switches.

\begin{corollary}

$$\bigtriangleup^{-}r_{i}=\min\limits_{i\ne j}\left|\dfrac{\left(\dfrac{a_{i}}{a_{j}}\right)^{\frac{3}{4}}}{1-\left(\dfrac{a_{i}}{a_{j}}\right)^{\frac{3}{4}}}(a_{j}-a_{i})\right|.$$
    
\end{corollary}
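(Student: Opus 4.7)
The plan is to mirror the strategy used in the proof of the preceding corollary. Using the quadratic Taylor approximation $\rho_j(r)\sim\exp(-\sqrt{\mu/a_j^3}(r-a_j)^2/\epsilon^2)$ adopted in the lead-up, the condition $R_i(r)=R_j(r)$ reduces to $|r-a_i|/|r-a_j|=(a_i/a_j)^{3/4}=:\alpha_{ij}$, equivalently $(r-a_i)/(a_j-r)=\pm\alpha_{ij}$. Writing $\alpha=\alpha_{ij}$ for brevity, the two candidate crossing radii are $r_+=(a_i+\alpha a_j)/(1+\alpha)$ and $r_-=(a_i-\alpha a_j)/(1-\alpha)$, with signed displacements from $a_i$ given by
\[
a_i-r_+=\frac{\alpha(a_i-a_j)}{1+\alpha},\qquad a_i-r_-=\frac{\alpha(a_j-a_i)}{1-\alpha}.
\]

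Next I would carry out the sign inspection, paralleling the analysis in the proof of Corollary 6.1 but with the roles of the two signs interchanged. Since $\bigtriangleup^- r_i>0$ requires $a_i-r>0$, the relevant expression is the minus-sign candidate $\alpha(a_j-a_i)/(1-\alpha)$, which I claim is positive for every $j\neq i$: for $j>i$ we have $\alpha<1$ and $a_j-a_i>0$, so both numerator and denominator are positive; for $j<i$ we have $\alpha>1$ and $a_j-a_i<0$, so both are negative, again yielding a positive quotient. This uniform positivity is precisely what prompts the use of an absolute value in the statement, packaging the two regimes into a single formula rather than the explicit $\pm$ dichotomy of the previous corollary.

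Finally, taking the minimum of $|\alpha_{ij}(a_j-a_i)/(1-\alpha_{ij})|$ over $j\neq i$ produces the claimed formula. The step I anticipate to be the main obstacle is verifying that this minimum coincides with the first actual switching of $R_M$ on the $r<a_i$ side, rather than being merely a formal minimum over candidate crossings. The argument would proceed by a case split on whether $j>i$ or $j<i$, paralleling the dichotomy in the proof of Corollary 6.1: in each regime I must confirm that the plus-sign root either carries the wrong sign (for $j>i$, $\alpha(a_i-a_j)/(1+\alpha)<0$, so it is not a candidate at all) or fails to supply a smaller positive displacement than the minus-sign one, which reduces to the elementary comparison of $|1-\alpha|$ with $1+\alpha$ together with the sign control from the previous paragraph. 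Once that comparison is pushed through uniformly in $j$, the conclusion follows in direct parallel with the preceding corollary.
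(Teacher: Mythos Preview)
The paper supplies no proof of this corollary at all: it is stated bare, immediately after the sentence ``Similarly we define $\bigtriangleup^{-}r_{i}$\ldots'', and the text then moves on. The intended justification is evidently the analogue of the sign-check argument given for the $\bigtriangleup^{+}r_{i}$ corollary, and that is precisely the route you take. Your derivation of the two candidate crossing radii and your verification that the minus-sign displacement $\alpha(a_j-a_i)/(1-\alpha)$ is positive for every $j\ne i$ are correct and are exactly what the word ``Similarly'' is standing in for.

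The obstacle you flag, however, is genuine and does not resolve the way you hope. For $j<i$ (so $\alpha>1$) the plus-sign crossing also lies below $a_i$, with displacement $\alpha(a_i-a_j)/(1+\alpha)$; your proposed comparison of $|1-\alpha|$ with $1+\alpha$ gives $1+\alpha>\alpha-1$, which makes the plus-sign displacement \emph{smaller}, not larger, than the minus-sign one appearing in the stated formula. In fact the plus-sign crossing between $a_j$ and $a_i$ is the physical switching point of $R_{\mathrm M}$ in that direction, while the minus-sign root for $j>i$ lands at negative $r$ under the quadratic approximation. So the discrepancy you anticipated is real: the displayed formula, read literally against the definition of $\bigtriangleup^{-}r_{i}$ as the first leftward switch, does not follow from the sign analysis alone. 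Since the paper offers no argument here, there is nothing further to compare your attempt against; your instinct that this step ``does not go through cleanly'' is well founded.
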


\npgni This enables us to reverse engineer the wave function for coplanar circular components of nebulae which go on to form parts of solar systems e.g. for our own solar system - Venus, Earth, Mars, Jupiter, Saturn, Uranus and Neptune.

\npgni So far we did not discuss in detail the requirements that the $S$ term in our equations off the classical orbit be a gradient (c.f. $(R,V)$ equation on page 29). A careful reading of Nelson's original ideas on stochastic mechanics tells us that when this fails there has to be some underlying motion of the aether invalidating the stochastic version of Newton's $2^{\textrm{nd}}$ Law encapsulated in the Schrödinger formulation. In our formulation at the corresponding stage (leaving aside the possibility of vector potentials) we can only conclude that there is no stationary state satisfying our requirements. Given the astrnomical length scales over which we are testing the asymptotics of this Schrödinger picture it is not surprising if this occurs here sometimes. What is truly amazing to our minds is that the asymptotics of the Schrödinger wavefunctions for states considered herein (especially the astronomical elliptic states of Lena, Delande and Gay) seem capable of explaining not only the rapid formation of ring systems for massive planets and solar systems themselves but also the evolution of some galaxies. These are the laboratories in which our ideas need to be tested, as our future work will show.

\npgni Finally we include a diagram resulting from the analysis at the end of section (5) of a Coulomb, harmonic oscillator and dark energy potential inspired by Whittaker (Ref. [29]). (See also Ref. [25]).

\begin{center}
\includegraphics[scale=0.75]{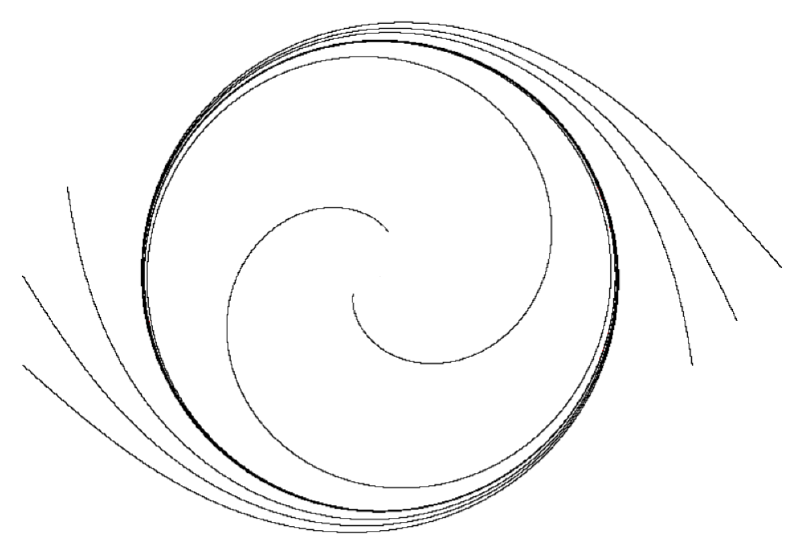}\\
Figure 6.
\end{center}

\npgni All our results beg the question as to how our work fits in the broader context of mathematical physics and general relativity, in particular in connection with the effects of Jupiter and Mach's Principle. Both of these are discussed in Ref. [28] where we give a generalisation of Lagrange's equilateral triangle solution for the 3-body problem, new results for the Foucault pendulum in the classical setting and how these have ramifications for the quantum case. In the context of our and Newton's absolute space the (HJI - Hamilton-Jacobi-Integral) equation (pp. 17) resolves any problems we might have had from Mach's Principle. In spite of the fact that 2027 is the $300^{\textrm{th}}$ anniversary of Newton's death we believe that still Newton lives.

\section*{Dedication and Thanks}\vspace{-8mm}

\npgni This work is dedicated to Sir Michael Atiyah, a personal mentor and friend of one of us and an inspiration to us all, he is so sorely missed. It is also a pleasure to thank our teachers G.B.S, G.R.S., J.C.T. and D.W.\vspace{-3mm}

\section*{References}

\vspace{-3mm}

\npgni On completion of this work we received a copy of the following paper, containing a more complete set of references, which is also relevant to Nelson's stochastic mechanics:\vspace{-3mm}

\npgni Cresson, J., Nottale, L., and Lehner, T. “Stochastic modification of Newtonian dynamics
and induced potential - Application to spiral
galaxies and the dark potential” J. Math. Phys. 62, 072702 (2021).

\end{document}